\def\A{\mathcal{A}}
\def\B{\mathcal{B}}
\def\E{\mathcal{E}}
\def\G{\mathcal{G}}
\def\LL{\mathcal{L}}
\def\N{\mathbb{N}}
\def\R{\mathcal{R}}
\def\Z{\mathbb{Z}}
\def\ff{\textbf{f}}
\def\ww{\textbf{w}}
\newcommand{\bwt}[2]{{\rm bwt}_{#1} \left( #2 \right)}
\newcommand{\ebwt}[2]{{\rm ebwt}_{#1} \left( #2 \right)}
\newcommand{\card}[1]{{\rm Card}\left( #1 \right)}
\newcommand*{\eqmathbox}[2][M]{\eqmakebox[#1]{$\displaystyle#2$}}
\theoremstyle{plain}
\newtheorem{theorem}{Theorem}[section]
\newtheorem{proposition}[theorem]{Proposition}
\newtheorem{lemma}[theorem]{Lemma}
\newtheorem{corollary}[theorem]{Corollary}
\newtheorem{example}[theorem]{Example}
\theoremstyle{remark}
\newtheorem{remark}[theorem]{Remark}
\title{Extended branching Rauzy induction}
\author[Francesco Dolce]{Francesco Dolce}
\address{Department of Applied Mathematics, Faculty of Information Technology, Czech Technical University in Prague, Prague, Czech Republic}
\email{dolcefra@cvut.cz}
\author[Christian B. Hughes]{Christian B. Hughes}
\address{Faculty of Information Technology, Czech Technical University in Prague, Prague, Czech Republic}
\email{hughechr@cvut.cz}
\begin{document}

\begin{abstract}
Branching Rauzy induction is a two-sided form of Rauzy induction that acts on regular interval exchange transformations (IETs). We introduce an extended form of branching Rauzy induction that applies to arbitrary standard IETs, including non-minimal ones.
The procedure generalizes the branching Rauzy method with two induction steps, merging and splitting, to handle equal-length cuts and invariant components respectively.
As an application, we show, via a stepwise morphic argument, that all return words in the language of an arbitrary IET cluster in the Burrows-Wheeler sense.
\end{abstract}

\subjclass[2020]{Primary 37B10; Secondary 37E05, 68R15}
\keywords{Interval Exchange Transformation; Branching Rauzy Induction; Return Maps; Burrows-Wheeler Transform}

\maketitle

\section{Introduction}
\label{sec:intro}

Interval exchange transformations, or IETs, were initially introduced by Oseledec in~\cite{Oseledec66}, following an earlier idea of Arnold~\cite{Arnold63}.
They have long been a subject of interest in the area of dynamical systems due to their wide applicability, ranging from symbolic dynamics to ergodic theory.
Such transformations are defined as bijections of an interval which are piecewise translations.
Each such translation is specified by a permutation on a finite set of sub-intervals and a length vector, which together determine the order and lengths of the exchanged intervals.
IETs preserve Lebesgue measure and have topological entropy $0$, but can display rich behavior.
The simplest non-trivial IETs are equivalent to rotations on a circle.

Given the partition of the interval into exchanged sub-intervals, we code the orbit of a point in the interval by recording, at each iterate of the IET, the label of the sub-interval it lands in.
This produces infinite words, called trajectories, over the alphabet associated to the sub-intervals.
The set of all finite factors of these trajectories is called the language of the IET.
We analyze these languages using tools from combinatorics on words and formal language theory.

Such transformations admit geometric induction via Rauzy induction~\cite{Rauzy79} (also called Rauzy-Veech induction~\cite{Veech78}), and other induction schemes, including induction on unions of (possibly disjoint) sub-intervals (see, e.g.,~\cite{viana2006ergodic}). The induction arguments used in the literature often employ some assumption of \emph{regularity} of the transformation.
The most common one is the \emph{Keane condition}, also known as the \emph{infinite disjoint orbit condition} (or \emph{i.d.o.c.}).
The presence of periodic components and the reducibility of the permutation often disrupt the ability to induce on a given sub-interval in a manner which respects the underlying symbolic coding of the system.
In~\cite{branching} a two-sided version of Rauzy induction was introduced.

In this paper, we develop an \emph{extended branching Rauzy induction} that allows induction on a cylinder for \emph{all} standard IETs, without any assumption of regularity (nor minimality) of the transformation.
This procedure augments the typical left/right induction steps of the original branching Rauzy induction with two induction steps: \emph{merging}, which resolves equal-length right/left cuts by collapsing the two sub-intervals involved in the cut into one interval, and \emph{splitting}, which separates invariant components (minimal or periodic) of the IET.

This construction has two principal consequences.
Firstly, for every cylinder $I_w$, with $w$ in the language of the IET $T$, there exists a \emph{finite} induction chain of extended branching Rauzy induction steps $\chi$ such that $\chi (T)$ is the first-return map of $T$ to $I_w$.
Secondly, each step is associated to a morphism on words, so the entire composed chain of step-associated morphism yields a morphism which "transports" the induced coding back to that of the original system.
In particular, the procedure yields constructive descriptions of the set of return words in the language generated by an IET.
These additional morphisms clarify how geometric cuts manifest in the symbolic coding.

We phrase much of the language-theoretic part of this paper in terms of extension graphs, which are bipartite graphs that describe how factors can be extended on the left and right by letters of the alphabet in the underlying language.
A language such that all its extension graphs are trees is called \emph{dendric} (see, e.g.~\cite{acyclicconnectedtree} or~\cite{DolcePerrin19,GheeraertLejeuneLeroy22} where the notion of dendricity is applied to shift spaces.).

The Burrows-Wheeler transform, introduced by Burrows and Wheeler in~\cite{BurrowsWheeler94}, was originally investigated in the context of data compression and has since been studied from a symbolic and combinatorial point of view (see, e.g.,~\cite{DolceHughes2025,FerencziZamboni13,ferenczi2025clustering}).
We use \emph{ordered alsinicity} (the notion of an extension graph being a forest) to obtain an application to Burrows-Wheeler clustering in symbolic codings of IETs.
In particular, we generalize a result in~\cite{DolceHughes2025} by showing that, for any standard IET with permutation $\pi$, every return word in its language is $\pi$-clustering in the Burrows-Wheeler sense.
This settles a question originally raised in ~\cite{lapointe2021perfectly}, in which the author asked whether this result holds when $\pi$ is the symmetric permutation.
Note that this result was recently shown independently in~\cite{ferenczi2025clustering} using different arguments.

The paper is organized as follows.
In Section~\ref{sec:cow} we recall basic notions and notation of combinatorics on words.
These notions will be used both to connect the studied dynamical systems to their symbolic coding, and to formulate questions concerning their link to the Burrows-Wheeler clustering property in combinatorial terms.
In Section~$\ref{sec:iets}$, we state some classical results concerning Interval Exchange Transformations.
In Section~\ref{sec:Rauzy}, we define our extended induction steps and show that they behave as expected.
Section~\ref{sec:forest} is devoted to the link between ordered alsinicity and clustering; there we provide the necessary background for the proof of the Theorem~\ref{thm:picluster} in Section~\ref{sec:piclusteringsection}, which serves as a key application of this paper.

\section{Combinatorics on Words}
\label{sec:cow}

For all undefined terms, we refer the reader to~\cite{Lothaire2}.

\subsection{Words}
\label{sec:words}

An \emph{ordered alphabet} $\A = \{ a_1 < a_2 < \ldots < a_k \}$ is a set of symbols called \emph{letters} together with a total order of its elements.
The number of distinct letters in $\A$ is called the \emph{cardinality} of the alphabet $\A$ and is denoted by $\card{\A}$.
In this paper, we only consider finite alphabets.

The set of \emph{finite words} $\A^*$ over $\A$ is the free monoid over $\A$ with neutral element the \emph{empty word}, denoted by $\varepsilon$.
The product of two words $u,v \in \A^*$ is given by their composition $u \cdot v$, also written $uv$.
We denote the free semigroup over $\A$ by $\A^+$.
That is, $\A^+ = \A^* \setminus \{ \varepsilon \}$.
When it is possible to write $w = pus$, with $p,u,s \in \A^*$, we call $p$ (resp., $s$, $u$) a \emph{prefix} (resp., \emph{suffix}, \emph{factor}) of $w$.

The lexicographic order naturally extends the order on $\A$ to $\A^*$.
In particular, given two distinct words $u = u_0 \cdots u_m$ and $v = v_0 \cdots v_n$, in $\A^*$, we say that $u < v$ if and only if either $u$ is a prefix of $v$, or the inequality between letters $u_i < v_i$ holds at the smallest index $i$ at which $u$ and $v$ differ.

For a given word $w = w_0 w_1 \cdots w_{n-1}$, where each $w_i \in \A$, we denote its length $n$ by $|w|$, and the number of times $u \in \A^+$ appears as a factor of $w$ by $|w|_u$.
The Parikh vector of a word $w \in \A^*$ is the vector $\Psi_{\A}(w) \in \N^k$ defined as $(\Psi_{\A}(w))_{a} = |w|_a$ for each letter $a \in \A$.
A word $w \in \A^*$ is \emph{pangrammatic} (in $\A$) if $\Psi_{\A}(w)$ is a positive vector, i.e., if $|w|_a > 0$ for every $a \in \A$.
These vectors can be generalized to multisets of words over the same ordered alphabet in a natural way (see Example~\ref{ex:ebwt} below).

We call a contiguous sequence of identical letters in a word a \emph{run}.

The $k^\text{th}$ \emph{power} of a word $u$, for $k \in \N$, is defined as $u^0 = \varepsilon$ and $u^k = u^{k-1}u$ for $k>0$. 
A word $w$ is said to be \emph{primitive} if it is not the integer power of another word, i.e., if $w = u^k$ implies $k=1$.
Two words $w, w'$ are \emph{conjugate} if $w = uv$ and $w'=vu$ for some $u,v \in \A^*$.
If a word $w$ is primitive, then it has exactly $|w|$ distinct conjugates.
A \emph{Lyndon word} is a primitive word that is minimal for the lexicographic order among its conjugates.
The class of conjugates of a word $w$ is also called a \emph{circular word} or \emph{necklace} and is denoted by $[w]$.

A (right) \emph{infinite word} (resp., \emph{bi-infinite word}) over $\A$ is a sequence
$\ww = w_0 w_1 w_2 \cdots$
(resp., $\ww = \cdots a_{-1} a_0 a_1 \cdots$), with $w_i \in \A$ for all $i \in \N$ (resp., for all
$i \in \Z)$.
Similarly to finite words, we can define the set of infinite words $\A^\infty$ and extend the notions of prefix, suffix and factor to $\A^\omega = \A^* \cup \A^\infty$ in a natural way.
An infinite word $\ww \in \A^\infty$
is \emph{eventually periodic} if $\ww = u v^\omega = u v v v \cdots$ for some $u,v \in \A^*$.
We say that $\ww$ is \emph{purely periodic} when $u = \varepsilon$.
An infinite word that is not eventually periodic is called \emph{aperiodic}.

\subsection{Languages}
\label{sec:languages}

By \emph{language}, we mean a factorial and bi-extendable set $\LL \subset \A^*$, i.e., a subset of the free monoid such that for every $w \in \LL$, we have $v, aw, wb \in \LL$ for every factor $v$ of $w$ and for some $a,b \in \A$.
The language of an infinite word $\ww$ is the set $\LL(\ww)$ of all its factors, while the language of a finite word $w$ is defined as $\LL(w^\omega)$.
Note that, in this context, the language of a finite word contains infinitely many elements.

A language $\LL$ is \emph{recurrent} if, for every $v \in \LL$, $vuv \in \LL$ for some word $u \in \A^*$.
It is \emph{uniformly recurrent} if for every $v \in \LL$, there exists $N=N(v) \in \N$ such that $v$ appears as a factor of every element of length $N$ in $\LL$.

The set $\R^\text{left}_{\LL}(w)$ of \emph{left return words} to $w$ in $\LL \subset \A^*$
is the set of words $u$ such that $uw \in \LL$ has exactly two occurrences of $w$ as factor: as a prefix and as a suffix.
Formally
$$
    \R_{\LL}^\text{left}(w) =
    \{ u \in \A^* \; : \; uw \in (\LL \cap w \A^*) \setminus \A^+ w \A^+ \}.
$$
Similarly, we can define the set of \emph{right return words} to $w$ in $\LL$ by
$$
    \R_{\LL}^{\text{right}}(w) =
    \{ v \in \A^* \; : \; wv \in (\LL \cap \A^*w) \setminus w\A^+ w \}.
$$
The two sets are related, since for every $w \in \LL$ we have
$$
    w \R_{\LL}^\text{left}(w) = \R_{\LL}^\text{right}(w) w.
$$
Throughout the remainder of the paper, we will only deal with \emph{left} return words.
Thus, when the language $\LL$ is clear, we will write $\R(w)$ instead of $\R_{\LL}^\text{left}(w)$.
Evidently, our results can be rephrased in terms of right return words.

\subsection{Morphisms}
\label{sec:morphisms}

A \emph{morphism} between two alphabets $\A$ and $\B$ is a map $\varphi: \A^* \to \B^*$ such that
$\varphi(\varepsilon) = \varepsilon$
and
$\varphi(uv) = \varphi(u) \varphi(v)$
for every $u,v \in \A^*$.

Given two distinct letters $a,b \in \A$, we define the morphisms
$\alpha_{a,b}, \tilde{\alpha}_{a,b} :\A^* \to \A^*$
as
$$
    \alpha_{a,b} =
    \begin{cases}
        a \mapsto a b \\
        c \mapsto c, & c \neq a
    \end{cases}
    \qquad \text{and} \qquad
    \tilde{\alpha}_{a,b} =
    \begin{cases}
        a \mapsto b a \\
        c \mapsto c, & c \neq a
    \end{cases}.
$$
With an abuse of notation, we also define the same morphisms from $\A^*$ to $\left( \A \sqcup \{b \} \right)^*$if $b \notin \A$.
Given two disjoint alphabets $\A, \B$ let
$\iota_{\A,\B}: \A^* \to \left(\A \sqcup \B \right)^*$
be defined as the identity restricted to $\A^*$.

\subsection{Permutations}
\label{sec:permutations}

A \emph{permutation} of $\A = \{ a_1 < \ldots <a_k \}$ is a bijective automorphism of $\A$.
We denote the set of such permutations by $S_\A$.

To describe an element $\pi \in S_{\A}$, we will use either the one-line notation or the cyclic notation.
For instance, the symmetric permutation defined by
$\pi(a_i) = a_{k-i+1}$ for every $1 \le i \le k$
will be denoted either as
$$
    (a_k, a_{k-1}, \ldots, a_1)
$$
or as the composition of the $2$-cycles
$$
    (a_1 a_k) (a_2 a_{k-1}) \cdots (a_{\frac{k}{2}} a_{\frac{k}{2}+1})
$$
when $k$ is even, and
$$
    (a_1 a_k) (a_2 a_{k-1}) \cdots (a_{\frac{k-1}{2}} a_{\frac{k+1}{2}+1} )
(a_{\frac{k+1}{2}})
$$
when $k$ is odd.

A permutation is \emph{circular} if it has only one cycle.
It is \emph{reducible} if the ordered alphabet $\A_{i,j} = \{ a_i < a_{i+1} < \ldots < a_j \}$ is invariant under $\pi$ for some $i \le j$ with $j-i < k-1$, i.e., $i > 0$ or $j < k$.
Note that our definition of reducibility of a permutation generalizes the classical one where $i = 0$.
Evidently, if $\A_{i,j}$ is invariant under $\pi$, then so is the alphabet $\{ a_1 < \ldots < a_{i-1} < a_{j+1} < \ldots < a_k \}$.

\subsection{Burrows-Wheeler transform}
\label{sec:bwt}

The Burrows-Wheeler transform is the map sending a word $w \in \A^*$ to the word $\bwt{\A}{w}$ obtained by concatenating the last (not necessarily distinct) letters of the $|w|$ conjugates of $w$ sorted lexicographically on $\A$.
Explicitly, let $w = w_0 \cdots w_{n-1}$ be a word of length $n$ over the alphabet $\A$, and let
$w^{(0)} \le w^{(1)} \le \ldots \le w^{(n-1)}$ be its conjugates.
Then
$\bwt{\A}{w} =
w^{(0)}_{n-1} w^{(1)}_{n-1} \cdots w^{(n-1)}_{n-1}$
is the last column of the $n \times n$ matrix
$$
    M_{\A} (w) =
    \left[
    \begin{matrix}
        w^{(0)}_{0} & w^{(0)}_{1} & \cdots & w^{(0)}_{n-1} \\
        w^{(1)}_{0} & w^{(1)}_{1} & \cdots & w^{(1)}_{n-1} \\
        \vdots & \vdots & \ddots & \vdots \\
        w^{(n-1)}_{0} & w^{(n-1)}_{1} & \cdots & w^{(n-1)}_{n-1}
    \end{matrix}
    \right].
$$

\begin{example}
\label{ex:levkovy}
Consider the word
$w = {\tt levkoy}$
over the standard ordered English alphabet
$\E = \{ {\tt a} < {\tt b} < \ldots < {\tt z} \}$.
We compute its matrix of cyclic rotations, sorted lexicographically by row, from top to bottom. 
$$
    M_{\E}(w)
    =\left[
    \begin{matrix}
        {\tt e} & {\tt v} & {\tt k} & {\tt o} & {\tt y} & {\tt l} \\
        {\tt k} & {\tt o} & {\tt y} & {\tt l} & {\tt e} & {\tt v} \\
        {\tt l} & {\tt e} & {\tt v} & {\tt k} & {\tt o} & {\tt y} \\
        {\tt o} & {\tt y} & {\tt l} & {\tt e} & {\tt v} & {\tt k} \\
        {\tt v} & {\tt k} & {\tt o} & {\tt y} & {\tt l} & {\tt e} \\
        {\tt y} & {\tt l} & {\tt e} & {\tt v} & {\tt k} & {\tt o}
    \end{matrix}
    \right].
$$

Concatenating the letters of the last column of $M_{\E}(w)$, beginning with the first row, we obtain
$\bwt{\E}{w} = {\tt lvykeo}$.
\end{example}

\begin{example}
\label{ex:peterbald}
Let us consider the two words
$w' = {\tt peterbald}$
and $w'' = {\tt bambino}$
over the same alphabet
$\E$ as in Example~\ref{ex:levkovy}.
We compute $\bwt{\E}{w'} = {\tt brltpadee}$ and
$\bwt{\E}{w''} = {\tt bombain}$.

\end{example}

The Burrows-Wheeler transform of the word $\bwt{\E}{w} = \tt lvykeo$ in Example \ref{ex:levkovy} has six distinct runs of length one. In Example \ref{ex:peterbald}, $\bwt{\E}{w} = \tt brltpadee$ has seven runs of length one and one run of length two: $\tt ee$.

The following results are well known (see, e.g.,\cite{ChrochemoreDesarmenienPerrin05,MantaciRestivoRosoneSciortino07,MantaciRestivoSciortino03}).

\begin{proposition}
\label{pro:bwt-conjugates}
Two words $u,v$ over the same ordered alphabet $\A$ are conjugate if and only if
$\bwt{\A}{u} = \bwt{\A}{v}$.
\end{proposition}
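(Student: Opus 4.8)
The plan is to prove both implications separately, treating the ``only if'' direction as essentially immediate and concentrating on the ``if'' direction, which requires recovering a word from its Burrows–Wheeler transform up to conjugacy.

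For the forward direction, suppose $u$ and $v$ are conjugate. Then they have the same multiset of conjugates: the cyclic rotations of $u$ and the cyclic rotations of $v$ coincide as multisets of words. Sorting this common multiset lexicographically produces the same ordered list $w^{(0)} \le w^{(1)} \le \cdots \le w^{(n-1)}$ in both cases, so reading off the last letters yields $\bwt{\A}{u} = \bwt{\A}{v}$. This uses only the definition of $\bwt{\A}{\cdot}$ as a function of the sorted conjugacy class.

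For the converse, the key observation is the so-called \emph{last-to-first} or LF-correspondence. Given $w$ of length $n$, consider the matrix $M_{\A}(w)$ whose rows are the sorted conjugates. I would first argue that, for each letter $a \in \A$, the occurrences of $a$ in the first column appear in the same relative order (reading top to bottom) as the occurrences of $a$ in the last column; this holds because both columns, restricted to rows ending (resp.\ beginning) with $a$, list the $n$-letter cyclic blocks sorted by the word that \emph{follows} that $a$, and these following words are the same set in both cases. From this I obtain a bijection $\mathrm{LF}$ from row indices to row indices: if row $i$ ends with the $j$-th occurrence of letter $a$ in the last column, then $\mathrm{LF}(i)$ is the row whose first letter is the $j$-th occurrence of $a$ in the first column. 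The crucial property is that $\mathrm{LF}(i)$ is the cyclic left-shift of row $i$ — equivalently, row $\mathrm{LF}(i)$ is obtained from row $i$ by moving its last letter to the front. Consequently, starting from any row and iterating $\mathrm{LF}$ exactly $n$ times returns to the starting row and spells out (reading the last letters encountered, or equivalently the first letters) a single conjugate of $w$. Since the first column is just the sorted multiset of letters of $w$ — which is $\Psi_{\A}(\bwt{\A}{w})$ rearranged in nondecreasing order — and since $\mathrm{LF}$ is determined entirely by $\bwt{\A}{w}$ together with this sorted first column, the whole matrix $M_{\A}(w)$, and hence the conjugacy class $[w]$, is a function of $\bwt{\A}{w}$ alone. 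Therefore $\bwt{\A}{u} = \bwt{\A}{v}$ forces $[u] = [v]$, i.e., $u$ and $v$ are conjugate.

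The main obstacle is establishing the order-preservation property underlying $\mathrm{LF}$ and verifying carefully that iterating $\mathrm{LF}$ reconstructs a genuine conjugate rather than some spurious cyclic sequence; this is where one must be attentive to repeated letters and to the fact that, when $w$ is not primitive, the ``rows'' of $M_{\A}(w)$ are a multiset with repetitions, so the bijection $\mathrm{LF}$ must be defined on indices rather than on distinct words. Once the LF-mapping is shown to be well defined and to act as the cyclic shift, the reconstruction of $M_{\A}(w)$ from $\bwt{\A}{w}$ — and thus the result — follows directly. I would cite the standard references already listed (\cite{ChrochemoreDesarmenienPerrin05,MantaciRestivoSciortino03}) for the details of the LF-mapping argument and present only the structural outline here.
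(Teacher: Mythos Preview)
Your outline is correct and follows the standard last-to-first mapping argument. Note, however, that the paper does not actually prove this proposition: it is stated as well known, with references to \cite{ChrochemoreDesarmenienPerrin05,MantaciRestivoRosoneSciortino07,MantaciRestivoSciortino03}, and no proof is given in the text. Your sketch is precisely the classical argument found in those references, so there is nothing to compare against beyond observing that you have supplied what the paper deliberately omits.
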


\begin{proposition}
\label{pro:bwt-primitive}
Let $u \in \A^*$ and $p \in \N$.
A word $w$ is a conjugate of $u^p$ if and only if
$\bwt{\A}{u} = b_1 \cdots b_{|u|}$
and
$\bwt{\A}{w} = b_1^p \cdots b_{|u|}^p$,
with $b_i \in \A$.
\end{proposition}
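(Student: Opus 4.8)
The plan is to reduce the statement to Proposition~\ref{pro:bwt-conjugates} by computing $\bwt{\A}{u^p}$ explicitly in terms of $\bwt{\A}{u}$. For this I would first isolate two elementary facts about the map $x \mapsto x^p$ restricted to words of a fixed length $\ell$. First, this map is order-preserving: for $x,y \in \A^*$ with $|x|=|y|=\ell$, one has $x^p \le y^p$ if and only if $x \le y$, and $x^p = y^p$ if and only if $x = y$; indeed two words of the same length first disagree (if at all) at the same position as their $p$-th powers and with the same pair of letters, so no prefix/length comparison intervenes. Second, the last letter of $x^p$ is the last letter of $x$. Both are immediate from the definitions of Section~\ref{sec:words}.

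Next I would describe the multiset of conjugates of $u^p$. Writing $n = |u|$ and letting $v_0, \dots, v_{n-1}$ denote the conjugates (cyclic rotations) of $u$, the rotation of $u^p$ obtained by cyclically shifting by $j$, for $0 \le j < np$, equals $v_{j \bmod n}^{\,p}$, since $u^p$ has period $n$ and length $np$. Consequently, as a multiset, the $np$ conjugates of $u^p$ are obtained from the $n$ conjugates of $u$ by replacing each rotation $c$ of $u$ by $p$ copies of $c^p$. Combining this with the order-preservation fact, the lexicographically sorted list of conjugates of $u^p$ is exactly the sorted list of conjugates of $u$ with every entry $c$ replaced by $p$ consecutive copies of $c^p$ (any coincidences among the $v_i$, which occur precisely when $u$ is not primitive, cause no difficulty, since equal conjugates of $u$ produce equal conjugates of $u^p$). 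Reading off the last column and applying the second fact, if $\bwt{\A}{u} = b_1 \cdots b_n$ with $b_i \in \A$, then $\bwt{\A}{u^p} = b_1^{p} b_2^{p} \cdots b_n^{p}$.

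Finally I would conclude with Proposition~\ref{pro:bwt-conjugates}: a word $w$ is a conjugate of $u^p$ if and only if $\bwt{\A}{w} = \bwt{\A}{u^p}$, which by the previous step is exactly the stated condition that $\bwt{\A}{w} = b_1^{p} \cdots b_n^{p}$ where $\bwt{\A}{u} = b_1 \cdots b_n$. I do not expect a genuine obstacle here; the only delicate point is the bookkeeping with multiplicities in the sorted list when $u$ is not primitive, but since the whole argument is phrased at the level of multisets and $x \mapsto x^p$ respects equality, no separate case is needed. (If one prefers to avoid multiset language, an alternative is to prove the statement first for primitive $u$, where $v_0,\dots,v_{n-1}$ are distinct, and then deduce the general case by writing $u = z^q$ with $z$ primitive and applying the primitive case to $z$.)
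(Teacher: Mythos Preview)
Your argument is correct. The two elementary facts about $x\mapsto x^p$ on words of fixed length are exactly what is needed, the description of the multiset of rotations of $u^p$ as $p$ copies of each $v_r^{\,p}$ is right (your period computation $w^{(j)}=v_{j\bmod n}^{\,p}$ is the clean way to see it), and the reduction to Proposition~\ref{pro:bwt-conjugates} closes both directions at once.

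As for comparison: the paper does not give its own proof of this proposition. It is stated as a well-known fact with references to the literature (Crochemore--D\'esarm\'enien--Perrin, Mantaci--Restivo--Rosone--Sciortino, Mantaci--Restivo--Sciortino), so there is nothing to compare against. Your write-up is a perfectly acceptable self-contained proof; the parenthetical remark about handling non-primitive $u$ via multisets (or alternatively via reduction to the primitive root) is a nice touch and shows you have anticipated the only place a reader might hesitate.
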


In~\cite{MantaciRestivoRosoneSciortino07} it is shown that, given an ordered alphabet $\A$, an extended version of the Burrows-Wheeler transform, denoted $\textrm{ebwt}$, gives a bijection between $\A^*$ and the multiset of Lyndon words (equivalently, to the set of powers of Lyndon words) over $\A$, where the conjugates, possibly of different lengths, are ordered using the $\omega$-order instead of the lexicographic order: $u \le_{\omega} v$ if $u^\omega \le v^\omega$.

\begin{example}
\label{ex:ebwt}
Let $W$ be the multiset $\{ {\tt aac}, {\tt ab}, {\tt ab} \}$ of Lyndon words over the alphabet $\A = \{ {\tt a} < {\tt b} < {\tt c} \}$.
The Parikh vector of $W$ is $\Psi_{\A}(W) = (4,2,1)$.
The extended Burrows-Wheeler transform of the multiset is
$\ebwt{\A}{W} = {\tt c bb aaaa}$.
Indeed, the conjugates of $W$ are:
${\tt aac} <_\omega {\tt ab} \le_\omega {\tt ab} <_\omega {\tt aca} <_\omega {\tt ba} \le_\omega {\tt ba} <_\omega {\tt caa}$.
\end{example}

Let $\pi$ be a permutation over an ordered alphabet $\A$.
A word $w \in \A^{*}$ is said to be $\pi$\emph{-clustering} for $\A$ if
$$
    \bwt{\A}{w} = a^{k_1}_{\pi(a_1)} \cdots a^{k_r}_{\pi(a_k)},
$$
where $k_i = |w|_{\pi(a_i)}$ and $k =\card{\A}$.
A word is \emph{perfectly clustering} (for its alphabet) if it is $\pi$-clustering with $\pi \in S_\A$ the symmetric permutation.
Thus, a word is clustering if its image under the Burrows-Wheeler transform is \emph{clustered}, i.e., the number of runs in it corresponds to the number of distinct letters.

Note that it is also possible to define the clustering property in terms of two permutations (starting with an unordered alphabet), as done, e.g., in~\cite{ferenczi2025clustering}.

\begin{example}
\label{ex:clustering}
Let us consider the two words $w$,$w'$ and $w''$ in Examples~\ref{ex:levkovy} and~\ref{ex:peterbald}.
The words $w$ and $w'$ are clustering, while $w''$ is not.
\end{example}

The notions of clustering and perfectly clustering can be extended to multisets of words.

\begin{example}
The multiset $W$ of Example~\ref{ex:ebwt} is (perfectly) clustering, while the multiset
$W' = \{ {\tt aab}, {\tt ab}, {\tt ab} \}$,
defined over the same ordered alphabet, is not.
Indeed, one has
$\ebwt{\A}{W'} = {\tt babbaaa}$,
since the conjugates of $W'$ are:
${\tt aab} <_\omega {\tt aba} \le_\omega {\tt ab} <_\omega {\tt ab} <_\omega {\tt baa} \le_\omega {\tt ba} <_\omega {\tt ba}$.
\end{example}

The clustering property evidently depends not only on the word (resp., multiset of words), but on the order of the alphabet as well.

\begin{example}
\label{ex:banana3}
Let us consider the three alphabets
$$
    \A = \{ {\tt a} < {\tt b} < {\tt n} \},
    \quad
    \A' = \{ {\tt a} < {\tt n} < {\tt b} \},
    \quad \text{and} \quad
    \A'' = \{ {\tt n} < {\tt a} < {\tt b} \}.
$$
The word $w = {\tt banana}$ is pangrammatic for all three alphabets.
One has
$$
    \bwt{\A}{w} = {\tt nnbaaa},
    \quad
    \bwt{\A'}{w} = {\tt bnnaa}
    \quad \text{and} \quad
    \bwt{\A''}{w} = {\tt aabna}.
$$
So, $w$ is perfectly clustering for $\A$ and $\A'$, but not clustering for $\A''$.
\end{example}

Over a binary alphabet (perfectly) clustering words coincide with powers of Christoffel words and their conjugates (see~\cite{MantaciRestivoRosoneSciortino07}).
A characterization over larger alphabets in terms of factorization into palindromes is
given in~\cite{LapointeReutenauer24} (see also~\cite{SimpsonPuglisi08}).

The following result follows immediately from Propositions~\ref{pro:bwt-conjugates} and~\ref{pro:bwt-primitive}.

\begin{proposition}
\label{pro:clusteringprimitive}
Let $w = u^p \in \A^*$ with $u$ a primitive word and $p \in \N$.
Then $w$ is $\pi$-clustering for $\A$ if and only if $u$ is $\pi$-clustering for $\A$.
\end{proposition}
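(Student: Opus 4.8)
The plan is to deduce the statement from Proposition~\ref{pro:bwt-primitive}, together with the elementary remark that replacing every letter of a word by a fixed power is an injective operation. Recall that, by definition, a word $x \in \A^*$ is $\pi$-clustering for $\A$ exactly when $\bwt{\A}{x} = a_{\pi(a_1)}^{|x|_{\pi(a_1)}} \cdots a_{\pi(a_k)}^{|x|_{\pi(a_k)}}$, so the task is to compare $\bwt{\A}{w}$ and $\bwt{\A}{u}$ with their respective right-hand sides. Since $w = u^p$, we have $|w|_a = p\,|u|_a$ for every $a \in \A$; we may assume $p \ge 1$, as $w$ is the empty word when $p = 0$.

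First I would introduce the morphism $\sigma_p \colon \A^* \to \A^*$ defined by $\sigma_p(a) = a^p$ for every $a \in \A$. Since $\sigma_p$ maps each letter to a word of the common length $p$, it is injective: one recovers $x$ from $\sigma_p(x)$ by reading the letters in positions $1, p+1, 2p+1, \ldots$. Applying Proposition~\ref{pro:bwt-primitive} to $w = u^p$, which is a conjugate of $u^p$, and writing $\bwt{\A}{u} = b_1 \cdots b_{|u|}$ with $b_i \in \A$, yields $\bwt{\A}{w} = b_1^p \cdots b_{|u|}^p = \sigma_p\bigl( \bwt{\A}{u} \bigr)$. In parallel, because $\sigma_p$ is a morphism and $a^{pm} = \sigma_p(a^m)$, one has
\[
    a_{\pi(a_1)}^{|w|_{\pi(a_1)}} \cdots a_{\pi(a_k)}^{|w|_{\pi(a_k)}} = \sigma_p\Bigl( a_{\pi(a_1)}^{|u|_{\pi(a_1)}} \cdots a_{\pi(a_k)}^{|u|_{\pi(a_k)}} \Bigr).
\]

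From these two identities the conclusion is immediate: $w$ is $\pi$-clustering if and only if $\sigma_p$ carries both $\bwt{\A}{u}$ and $a_{\pi(a_1)}^{|u|_{\pi(a_1)}} \cdots a_{\pi(a_k)}^{|u|_{\pi(a_k)}}$ to the same word, which, by injectivity of $\sigma_p$, holds if and only if $\bwt{\A}{u} = a_{\pi(a_1)}^{|u|_{\pi(a_1)}} \cdots a_{\pi(a_k)}^{|u|_{\pi(a_k)}}$, that is, if and only if $u$ is $\pi$-clustering. I do not expect any real obstacle: the argument is a short formal manipulation around Proposition~\ref{pro:bwt-primitive} (with Proposition~\ref{pro:bwt-conjugates} available to recall that the clustering property is a conjugacy invariant). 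The only step that must not be skipped is the injectivity of $\sigma_p$, which is precisely what licenses the passage from ``$w$ is $\pi$-clustering'' back to ``$u$ is $\pi$-clustering'', since the identity $\bwt{\A}{w} = b_1^p \cdots b_{|u|}^p$ by itself pins down $\bwt{\A}{u}$ only up to this uniform letter-power substitution.
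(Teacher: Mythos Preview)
Your argument is correct and is essentially the approach the paper has in mind: the paper simply states that the result follows immediately from Propositions~\ref{pro:bwt-conjugates} and~\ref{pro:bwt-primitive}, and you have written out exactly that deduction in detail, using the uniform letter-power morphism $\sigma_p$ to make the passage between $\bwt{\A}{u}$ and $\bwt{\A}{w}$ explicit.
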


\section{Interval Exchange Transformations}
\label{sec:iets}

By a (real) interval we mean a left-closed and right-open interval over the real line.
Let $\A$ be an ordered alphabet of cardinality $k$ and $\pi$ a permutation on $\A$.
An ordered partition $(I_a)_{a \in \A}$ of an interval $I$ is a finite sequence of mutually disjoint sub-intervals of $I$ such that $I_a$ is to the left of $I_b$ when $a < b$, and such that $\bigcup_{a \in \A} I_a = I$.
We assume that $|I_a|>0$ for every letter $a \in \A$.

The $k$-\emph{interval exchange transformation} (or $k$-\emph{IET} or just \emph{IET} in short) $T$ associated with a partition $(I_a)_{a \in \A}$ of $I = [\ell, r)$ and a permutation $\pi \in S_\A$ is the piecewise translation defined by
$$
    T(x) = x + \tau_a
    \quad \text{if} \quad
    x \in I_a,
$$
where
$$
    \tau_a =
    \sum_{\pi^{-1}(b) < \pi^{-1}(a)} |I_b| - \sum_{b < a} |I_b|.
$$
We define $D(T) = \displaystyle \left\{ \sum_{b<a} |I_b| \; : \; a \in \A \right\} \setminus \{ \ell \}$ as the set of \emph{formal discontinuities} of $T$.
The \emph{orbit} of a point $x \in I$ under $T$ is the set $\{T^k(x) \; | \; k \in \Z \}$.
An IET is \emph{periodic} if the orbit of any point $x \in I$ is finite.
It is \emph{minimal} if the orbit of any point $x \in I$ is dense in the interval (with respect to the standard Euclidean topology).
It is \emph{regular}, or satisfies the \emph{Keane condition} or \emph{i.d.o.c.}, if the orbits of the formal discontinuities are infinite and disjoint.
A regular IET is minimal and aperiodic (\cite{Keane75}), while the inverse is not true (see, e.g.,~\cite{bifixcodesIETs}).

A \emph{connection} of an IET $T$ is a triple $(x,y,n)$ where $x\in D(T^{-1})$, $y \in D(T)$, $n \ge 0$ and $T^n(x) = y$.
When $n=0$, we call the point $x=y$ a $0$-\emph{connection}.

Given an IET $T$, define the set
$$
    C(T) = \left\{ T^i(x) \, : \, (x,y,n) \text{ is a connection}, i \ge 0 \right\}.
$$
The set $C(T)$ is finite, since the alphabet is finite, and thus $D(T)$ is finite too.
Clearly, $C(T)$ is empty if and only if $T$ is regular.

Given an IET $T$ on $I$, we can assign an infinite word $\Omega_T(x) = w_0 w_1 w_2 \cdots$ describing the orbit of each point $x \in I$ by setting
$w_i = a$
if
$T^i(x) \in I_a$.
This word is called the \emph{trajectory} of $x$ under $T$ or the \emph{natural coding} of $T$ relative to $x$.
When $T$ is clear from the context, we write $\Omega(x)$ instead of $\Omega_T (x)$.
The \emph{language} of an IET $T$ is
$$
    \LL(T) =
    \bigcup_{x \in I} \LL(\Omega(x)).
$$
When $T$ is minimal or has only one periodic component (i.e., there is only one possible trajectory up to a shift), $\LL(T)$ does not depend on the choice of $x$.
Moreover, in this case $\LL(T)$ is uniformly recurrent (see, e.g.,~\cite{branching}) and thus recurrent.


Given an IET $T$ over $[\ell, r)$ and a word
$w = w_0 w_1 \cdots w_{n-1} \in \LL(T)$,
we define the \emph{cylinder} corresponding to $w$
$$
    I_w =
    I_{w_0} \cap T^{-1}(I_{w_1}) \cap \cdots \cap T^{-(n-1)}(I_{w_{n-1}}).
$$
By convention, we have
$I_\varepsilon = [\ell, r)$.
For every point $x \in I_w$, the trajectory $\Omega(x)$ has $w$ as a prefix.

Let $T$ be a $k$-IET on interval $[\ell, r)$.
We recall the following definitions from~\cite{branching}.
Let $J = [u,v) \subseteq [\ell,r)$.
For a point $z \in [\ell,r)$ the \emph{forward} and \emph{backward return times} of $z$ to $J$ are respectively
$$
    \rho^{+}_{J,T}(z) =
    \min\{ n > 0 \; : \; T^{n}(z) \in (u,v) \}
    \; \text{and} \;
    \rho^{-}_{J,T}(z) =
    \min\{ n \ge 0 \; : \; T^{-n}(z) \in (u,v)\}.
$$
We write
$$
    E_{J,T}(z) =
    \{ -\rho^{-}_{J,T}(z), \ldots , \rho^{+}_{J,T}(z)-1 \}
$$
and
$$
    N_{J,T}(z) =
    \{ T^{i}(z) \; : \; i \in E_{J,T}(z) \}.
$$
Finally, we set
$$
    \text{Div}(J,T) =
    \bigcup_{\gamma \in D(T)} N_{J,T} (\gamma).
$$
We say the interval $J = [u,v)$ is \emph{admissible} if $u,v \in \text{Div}(J,T) \cup \{ r \}$.
Every cylinder $I_w$ with $w \in \LL(T)$ is admissible (\cite{branching}).

A \emph{region} of an IET $T$ on the interval $I$ is a sub-interval of $I$ between two consecutive $0$-connections or between $\ell$ and the first $0$-connection or between the last $0$-connection and $r$.
Following~\cite{Boshernitzan88}, we define a \emph{component} of $T$ as the union of the sub-intervals having as endpoints consecutive points in $C(T) \cup \{0, \ell\}$ whose interior is intersected by the orbit of a point $x \in I$. Such components comprise equivalence classes of points in $I$. On each component, $T$ is either minimal or periodic.

It is known that every IET can be decomposed into minimal and periodic components (see, e.g.,~\cite{Boshernitzan88,Mayer43}).

\begin{example}
\label{ex:non-minimal-IET}
Let $T$ be the IET on the interval $[0, 1)$ associated with the partition indexed by $\{ {\tt a} < {\tt b} < {\tt c} < {\tt d} < {\tt e} \}$, with
$|I_{\tt a}| = |I_{\tt d}| = |I_{\tt e}| = 1/6$, $|I_{\tt b}| = (1-\alpha)/2$, $|I_{\tt c}| = \alpha/2$ and $\alpha < 1$ an irrational; and permutation is $\pi = ({\tt e, c, b, d, a})$ (see Figure~\ref{fig:non-minimal-IET}).

The IET $T$ has three components:
$I_{\tt a} \cup I_{\tt e}$,
$I_{\tt d}$ and
$I_{\tt b} \cup I_{\tt c}$,
the first two being periodic and the last minimal.
It has three $0$-connections: $\frac{1}{6}$, $\frac{2}{3}$ and $\frac{5}{6}$ and no other connection.
The four regions are
$I_{\tt a} = T(I_{\tt e})$,
$I_{\tt b} \cup I_{\tt c} = T(I_{\tt c}) \cup T(I_{\tt b})$,
$I_{\tt d} = T(I_{\tt d})$ and
$I_{\tt e} = T(I_{\tt a})$.

Note also that $\pi$ is reducible since $\{ {\tt b} < {\tt c} \}$ and $\{ {\tt d} \}$ are invariant under the permutation.
\end{example}

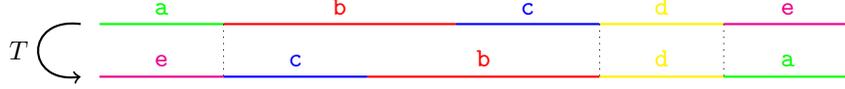
\begin{figure}[ht]
    \centering
    \begin{tikzpicture}[x=5cm, y=1cm]
        \draw[thick,green] (0,0) -- (0.33,0) node[midway,above] {\tt a};
        \draw[thick,red] (0.33,0) -- (0.33+1-0.382,0) node[midway,above] {\tt b};
        \draw[thick,blue] (0.33+1-0.382,0) -- (0.33+1,0) node[midway,above] {\tt c};
        \draw[thick,yellow] (0.33+1,0) -- (2*0.33+1,0) node[midway,above] {\tt d};
        \draw[thick,magenta] (2*0.33+1,0) -- (2,0) node[midway,above] {\tt e};
        
        \draw[thick,magenta] (0,-0.7) -- (0.33,-0.7) node[midway,above] {\tt e};
        \draw[thick,blue] (0.33,-0.7) -- (0.33+0.382,-0.7) node[midway, above] {\tt c};
        \draw[thick,red] (0.33+0.382,-0.7) -- (0.33+1,-0.7) node[midway,above] {\tt b};
        \draw[thick,yellow] (0.33+1,-0.7) -- (2*0.33+1,-0.7) node[midway,above] {\tt d};
        \draw[thick,green] (2*0.33+1,-0.7) -- (2,-0.7) node[midway,above] {\tt a};

        \draw[dotted] (0.33,0) -- (0.33,-0.7) {};
        \draw[dotted] (1.33,0) -- (1.33,-0.7) {};
        \draw[dotted] (1.66,0) -- (1.66,-0.7) {};
        
        \draw[->, thick] (-.05, 0) .. controls (-0.2, 0.1) and (-0.2, -0.8) .. (-.05, -0.7) node[midway, left] {$T$};
    \end{tikzpicture}
    \caption{An IET with three components and four regions (for simplicity we denote with the letter $a$ both by $I_a$ and $T(I_a)$.}
    \label{fig:non-minimal-IET}
\end{figure}

\begin{example}
\label{ex:coding-non-minimal-IET}
Let $T$ be the IET defined in Example~\ref{ex:non-minimal-IET}, with $\alpha = \frac{3-\sqrt{5}}{2}$.
One can check that
$$
    \LL(T) =
    \LL(({\tt ae})^\omega) \cup \LL(({\tt d})^\omega) \cup \LL(\ff),
$$
where
$\ff = {\tt bcbbcb}\cdots$
is the well-known Fibonacci word over the alphabet
$\{ {\tt b} < {\tt c} \}$.
\end{example}

The \emph{discrete interval exchange} (or \emph{DIET} in short) 
associated with the composition
$(n_1, n_2, \ldots, n_k)$
of
$n = \sum_{i=1}^k n_i$
and the permutation
$\pi \in S_k$
is the map
$$
    T(h) = h + t_i,
    \quad \text{if} \quad
    \sum_{j < i} n_j < h \le \sum_{j \le i} n_j,
$$
where
$$
    t_i =
    \sum_{\pi^{-1}(j) < \pi^{-1}(i)} n_j \; - \; \sum_{j < i} n_j.
$$
Every such a DIET
correspond to a standard IET associated with a partition
$(I_{a})_{a \in \A}$
and a permutation
$\pi \in S_\A$,
where
$\A = \{ a_1 < \ldots < a_d \}$
and
$|I_{a_i}| = n_i$
(see Figure~\ref{fig:7DIET} below).
Note that each component of this corresponding IET is periodic;
thus, in particular, this associated IET cannot be minimal or regular.

Cylinders for DIETs are defined analogously to the IET case.

There is a strong connection between clustering multisets of words and DIETs.
In fact, if a multiset $W \subset \A^*$ is $\pi$-clustering, then its Parikh vector yields a composition of
$n = \displaystyle \sum_{w \in W} |w|$
that, along with $\pi$, defines a DIET.
Similarly to IETs, we can encode the (periodic) trajectories by encoding each integer
$\displaystyle h \in \left[ \sum_{j<i} n_j, \, \sum_{j \le i} n_j \right]$ by the $i^{\text{th}}$ letter of the alphabet.

In a symmetric way, it is possible to show that every DIET corresponds to a unique multiset of Lyndon words.

\begin{example}
\label{ex:ebwt-DIET}
Let $W$ be the multiset of Example~\ref{ex:ebwt}.
We can define a DIET $T$ associated with the composition
$(4,2,1)$ of $7$
and the permutation
$\pi = ({\tt c}, {\tt b}, {\tt a})$.
The action of the DIET over
$\{ {\tt 1}, {\tt 2}, \ldots, {\tt 7} \}$
is given by
$$
    \mu = ({\tt 1,4,7})({\tt 2,5})({\tt 3,6}) \in S_7,
$$
where one can clearly identify $S_7$ with $S_{\B}$ with $\B = \{ {\tt 1} < {\tt 2} < \cdots < {\tt 7} \}$ (see left of Figure~\ref{fig:7DIET}).
Note that each cycle corresponds to one of the primitive words in $W$.
For instance, the trajectory of ${\tt 4}$ is given by
$\Omega({\tt 4}) = ({\tt aca)}^\omega$,
i.e., the infinite repetition of a conjugate of ${\tt aac}$.
One can check that
$$
    I_{\tt a} = \{ {\tt 1,2,3,4} \},
    \quad
    I_{\tt ab} = \{ {\tt 2,3} \},
    \quad \text{and} \quad
    I_{\tt aac} = \{ {\tt 1} \}.
$$
The corresponding IET is shown on the right of Figure~\ref{fig:7DIET}.
\end{example}

\begin{figure}[ht]
    \centering
    \begin{tikzpicture}[scale=0.5]
        \node (u1) {\tt 1};
        \node (u2) [right=0cm of u1] {\tt 2};
        \node (u3) [right=0cm of u2] {\tt 3};
        \node (u4) [right=0cm of u3] {\tt 4};
        \node (u5) [right=0cm of u4] {\tt 5};
        \node (u6) [right=0cm of u5] {\tt 6};
        \node (u7) [right=0cm of u6] {\tt 7};
        \node[draw,rounded corners,blue,minimum width=1.6cm, minimum height=0.4cm, right=-0.35cm of u1] (ua) {};
        \node[draw,rounded corners,red,minimum width=0.75cm, minimum height=0.4cm, right=0.05cm of ua] (ub) {};
        \node[draw,rounded corners,green,minimum width=0.3cm, minimum height=0.4cm, right=0.15cm of ub] (uc) {};

        \node (b1) [below=0.1cm of u1] {\tt 1};
        \node (b2) [right=0cm of b1] {\tt 2};
        \node (b3) [right=0cm of b2] {\tt 3};
        \node (b4) [right=0cm of b3] {\tt 4};
        \node (b5) [right=0cm of b4] {\tt 5};
        \node (b6) [right=0cm of b5] {\tt 6};
        \node (b7) [right=0cm of b6] {\tt 7};
        \node[draw,rounded corners,green,minimum width=0.3cm, minimum height=0.4cm, right=-0.35cm of b1] (bc) {};
        \node[draw,rounded corners,red,minimum width=0.7cm, minimum height=0.4cm, right=0.1cm of bc] (bb) {};
        \node[draw,rounded corners,blue,minimum width=1.6cm, minimum height=0.4cm, right=0.15cm of bb] (ba) {};
        
        \draw[->, thick] (-.4, 0) .. controls (-.9, -0.1) and (-.9, -0.9) .. (-.4, -1) node[midway, left] {$T$};
    \end{tikzpicture}
    \qquad
    \begin{tikzpicture}[scale=0.5]
        \draw[thick,blue] (0,0.5) -- (4,0.5) node[midway,above] {\tt a};
        \draw[thick,red] (4,0.5) -- (6,0.5) node[midway,above] {\tt b};
        \draw[thick,green] (6,0.5) -- (7,0.5) node[midway,above] {\tt c};

        \draw[thick,green] (0,-0.7) -- (1,-0.7) node[midway,above] {\tt c};
        \draw[thick,red] (1,-0.7) -- (3,-0.7) node[midway, above] {\tt b};
        \draw[thick,blue] (3,-0.7) -- (7,-0.7) node[midway,above] {\tt a};

        \draw[->, thick] (-.3, 0.5) .. controls (-.9, 0.4) and (-.9, -0.6) .. (-.3, -0.7) node[midway, left] {$T$};
    \end{tikzpicture}
    \caption{A DIET (on the left) and its associated IET (on the right).}
    \label{fig:7DIET}
\end{figure}
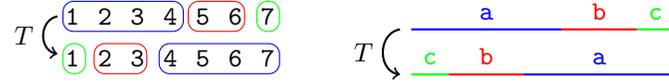

In particular, one can interpret every primitive $\pi$-clustering word $w \in \A^*$ as a DIET associated with the composition $\Psi_{\A}(w)$ of $|w|$ and the permutation $\pi$, the permutation $\mu$ describing the action of such a DIET being circular (see~\cite{FerencziZamboni13} for a characterization of $\pi$-clustering words in terms of trajectories in IETs or DIETs).




\section{Induced Interval Exchange Transformations}
\label{sec:Rauzy}

Let $\A = \{ a_1 < \ldots < a_k \}$ be an ordered alphabet and let $\pi \in S_{\A}$ be a permutation on $\A$.
Let $T$ be an IET over $I = [\ell, r)$ associated with a partition $(I_a)_{a \in \A}$ of $I$ and a permutation $\pi \in S_\A$.
The \emph{transformation induced} by $T$ on a sub-interval $J \subset I$ is the map $T': J \to J$ defined by $T'(z) = T^{\nu(z)} (z)$, where
$$
    \nu(z) = \min \{ n>0 \; : \; T^n(z) \in J \}
$$
is called the \emph{first return map} of $T$ to $J$.
Note that $\nu(z)$ is well-defined because IETs do not have wandering intervals (see, e.g.,~\cite{FerencziHubertZamboni24}).

We now develop the induction machinery necessary to extend the results of~\cite{Rauzy79} and~\cite{branching}.
From any standard IET $T$, we obtain "related" IETs, i.e., IETs associated either with the same alphabet $\A$ (possibly reordered) or with a sub-alphabet of it (possibly circularly reordered). We do this via an analysis of IET first-return map dynamics.

Assume first that $\pi$ is irreducible.

\subsection{Right Rauzy step}
\label{sec:right}

A \emph{right Rauzy step} is a mapping $\rho$ sending $T$ to the induced transformation $T'$ on $I' = [\ell, r')$, where $r'$ is the rightmost among the points in $D(T) \cup D(T^{-1})$.

Assume that $\pi(a_k) \neq a_k$.

If $r'$ is not a connection, that is, if $|I_{a_k}| \ne |I_{\pi(a_k)}|$, we call the step \emph{non-degenerate}.
This case corresponds to the classical right Rauzy induction introduced by Rauzy in~\cite{Rauzy79}.

If $|I_{a_k}| > |I_{\pi(a_k)}|$ then $\Omega(x)$ starts with $\pi(a_k) a_k$ for every $x \in I_{\pi(a_k)}$.

If $|I_{a_k}| < |I_{\pi(a_k)}|$ then $\Omega(x)$ starts with $\pi(a_k) a_k$ for every $x \in T^{-1}(I_{a_k})$.

We first consider the former case (i.e., when $|I_{a_k}| > |I_{\pi(a_k)}|$).
Let $h = \pi^{-1} (k)$.
Then $\rho (T)$ is the IET associated with $\mathcal{A}$ and $\pi' \in S_{\mathcal{A}}$ defined as
\begin{equation}
    \label{eq:pik}
    \pi'(i) =
    \begin{cases}
        \pi(i) & \text{if } i \le h \\
        \pi(k) & \text{if } i = h + 1 \\
        \pi(i) + 1 & \text{if } i > h + 1
    \end{cases}
    .
\end{equation}
In the latter case
(i.e., when $|I_{a_k}| < |I_{\pi(a_k)}|$)
$\rho (T)$ is the IET associated with
$\A' = \{a_1 < \ldots < a_h < a_k < a_{h+1} < \ldots < a_{k-1} \}$
and
$\pi \in S_{\A'}$
defined as
\begin{equation}
    \label{eq:pipik}
    \pi' (i) = \pi (i)
    \quad \text{for all }
    1 \le i \le k.
\end{equation}
It is well-known that if $T$ is regular, we can apply this step infinitely many times and always obtain a new regular IET over the same alphabet (possibly reordered).

Let us now consider the \emph{degenerate} case, which occurs when $r'$ is a connection.
Equivalently, the degenerate case occurs when we have $|I_{a_k}| = |I_{\pi(a_k)}|$.
In such a case, $\pi$ can be either reducible or irreducible.


For every point $x \in I_{\pi(a_k)}$, the trajectory $\Omega(x)$ starts with $\pi(a_k) a_k$.
The first return map of $T$ to $I' = [\ell, r')$ coincides with $T$ for every point $x \notin I_{\pi(a_k)}$ and with $T^2$ for $x \in I_{\pi(a_k)}$.
In this case, we define $\rho(T)$ to be the IET on $I'$, partitioned by $(I_a)_{a \in \A'}$, with the permutation $\pi' \in S_{\A'}$, where
$\A' = \{ a_1 < \ldots < a_{k-1} \}$
and
\begin{equation}
    \label{eq:pi'deg}
    \pi'(a) =
    \begin{cases}
        \pi(a_k) & \text{if } a = \pi^{-1}(a_k) \\
      \pi(a) & \text{otherwise}
    \end{cases}.
\end{equation}

The following proposition shows that this new map is truly an IET and behaves as expected, including in the degenerate case.

\begin{proposition}
\label{pro:well-behaved-merging}
Let $T$ be a $k$-IET on the interval $I$ over the alphabet $\A$ with associated permutation $\pi \in S_\A$.
Suppose that $|I_{a_k}| = |I_{\pi(a_k)}|$ and $a_k \neq \pi(a_k)$.
Then, the map $\rho(T)$ on $I' = [\ell, r - |I_{a_k}|)$ over the alphabet $\A' = \A \setminus \{a_k\}$ with the permutation $\pi'$ as in Equation~\ref{eq:pi'deg} is a $(k-1)$-IET.

Moreover, for all $x \in I'$,
$\Omega_T(x) = \varphi(\Omega_{\rho(T)}(x))$,
where $\varphi$ is the morphism
$\varphi: \A'^\omega \rightarrow \A^\omega$
defined by $\varphi(\pi(a_k)) = \pi(a_k) a_k$ and $\varphi(c) = c$ for every letter $c \ne \pi(a_k)$.
\end{proposition}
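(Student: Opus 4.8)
The plan is to identify $\rho(T)$ with the first-return map $T'$ of $T$ to $I'$, and then to read the morphism relation off the resulting tower. The geometric input is the hypothesis $|I_{a_k}| = |I_{\pi(a_k)}|$. Listing the image intervals $T(I_a)$ by increasing left endpoint, $T(I_a)$ begins at $\ell + \sum_{c\,:\,\pi^{-1}(c) < \pi^{-1}(a)} |I_c|$, which is largest exactly when $\pi^{-1}(a)$ is the largest letter, i.e. when $a = \pi(a_k)$; so $T(I_{\pi(a_k)})$ is the image interval abutting $r$, it begins at $r - |I_{\pi(a_k)}| = r'$, and since its length is $|I_{\pi(a_k)}| = |I_{a_k}|$ it coincides with $[r', r) = I_{a_k}$. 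Thus $T(I_{\pi(a_k)}) = I_{a_k}$ and $T^{-1}(I_{a_k}) = I_{\pi(a_k)}$, while the remaining $k-1$ image intervals of $T$ tile $I' = [\ell, r')$.

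Given this, I would first compute return times: since $\pi(a_k)\neq a_k$ we have $I_{\pi(a_k)}\subseteq I'$, and for $x\in I'$ one has $T(x)\in I_{a_k}$ iff $x\in I_{\pi(a_k)}$; hence the return time of $x$ to $I'$ is $1$ if $x\notin I_{\pi(a_k)}$ and $2$ if $x\in I_{\pi(a_k)}$ (in the latter case $T^2(x)\in T(I_{a_k})\subseteq I'$). Consequently $T'$ acts on each $I_a$, $a\in\A'$, by a single translation ($T$ if $a\neq\pi(a_k)$; $T^2$ if $a=\pi(a_k)$, this being a translation because $T$ maps $I_{\pi(a_k)}$ onto the single continuity interval $I_{a_k}$), so $T'$ is an IET on the partition $(I_a)_{a\in\A'}$ of $I'$ --- a $(k-1)$-IET. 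For its permutation I would track images: $T'(I_a)=T(I_a)$ occupies the position held by the image interval $T(I_a)$ of $T$ when $a\neq\pi(a_k)$ --- the position indexed by $\pi^{-1}(a)$ --- whereas $T'(I_{\pi(a_k)})=T(I_{a_k})$ occupies the position indexed by $\pi^{-1}(a_k)$; deleting the rightmost image interval $I_{a_k}$ leaves the order of the remaining $k-1$ unchanged, so the permutation $\sigma$ of $T'$ satisfies $\sigma^{-1}(\pi(a_k)) = \pi^{-1}(a_k)$ and $\sigma^{-1}(a) = \pi^{-1}(a)$ otherwise. A quick check that $\sigma$ is well-defined as an element of $S_{\A'}$, set against the formula $\pi'(\pi^{-1}(a_k)) = \pi(a_k)$ and $\pi'(a)=\pi(a)$ for $a\neq\pi^{-1}(a_k)$, shows $\sigma=\pi'$. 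Hence $\rho(T)=T'$, which is a $(k-1)$-IET, as claimed.

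For the morphism relation, given $x\in I'$ write $\Omega_{\rho(T)}(x)=v_0 v_1 v_2\cdots$. By the tower structure the orbit of $x$ under $T$ meets $I'$ at times $0=n_0<n_1<\cdots$ with $n_{i+1}-n_i$ equal to $1$ when $v_i\neq\pi(a_k)$ and to $2$ when $v_i=\pi(a_k)$, and the corresponding block $w_{n_i}\cdots w_{n_{i+1}-1}$ of $\Omega_T(x)$ is $v_i$ in the first case and $\pi(a_k)\,a_k$ in the second (using $T(I_{\pi(a_k)})=I_{a_k}$). In either case the block equals $\varphi(v_i)$, so $\Omega_T(x)=\varphi(v_0)\varphi(v_1)\cdots=\varphi(\Omega_{\rho(T)}(x))$. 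I expect the delicate point to be the permutation bookkeeping of the previous paragraph --- in particular that the letter with return time $2$ is $\pi(a_k)$ while the letter whose image under the permutation is altered in Equation~\ref{eq:pi'deg} is $\pi^{-1}(a_k)$ --- and the attendant verification that $\pi'$ is a genuine permutation of $\A'$; the rest is routine, and no boundary issues arise because $r'$ is a formal discontinuity of $T$.
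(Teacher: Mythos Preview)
Your proposal is correct and follows essentially the same route as the paper: identify $\rho(T)$ with the first-return map to $I'$, compute the return time as $1$ off $I_{\pi(a_k)}$ and $2$ on it (using $T(I_{\pi(a_k)})=I_{a_k}$), read off the permutation from the image intervals, and then deduce the morphism relation by partitioning the $T$-orbit into blocks between successive visits to $I'$. Your treatment is slightly more explicit than the paper's in two places --- the geometric justification that $T(I_{\pi(a_k)})=[r',r)=I_{a_k}$, and the permutation bookkeeping via $\sigma^{-1}$ --- but the argument is the same.
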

\begin{proof}
Consider the first-return map $\nu$ of $T$ to $I$.

If $x \in I_{\pi (a_k)} \subset I'$, then $T(x) \in I_{a_k} \not\subset I'$, but $T^2 (x) \in I'$.
Thus, $\nu (x) = 2$.

Otherwise, if $x \in I_{a}$ with $a \neq \pi (a_k)$, then $T(x) \in T(I_a) \subset I'$.
Thus, $\nu(x) = 1$.

Let us denote this first-return map by $T'$.
For each each $x \in I_a \subset I'$, with $a \neq \pi (a_k)$, we have $T'(x) = T(x)$.
For all $x \in I_{\pi (a_k)}$, we have $T'(x) = T^2(x)$.
Hence,
$$
    T'(x) =
    \begin{cases}
        T^2(x) & \text{if } x \in I_{\pi(a_k)} \\
        T(x) & \text{otherwise}
    \end{cases}.
$$
Consequently, we obtain the induced map on $I'$ with the domain alphabet $\A' = \A \setminus \{a_k\}$.
In particular, $T'$ is an IET over the alphabet $ \A' $, with $|\A'| = k-1$.
Thus, $T'$ is a $(k-1)$-IET.

Reading the image sub-intervals of $T'$ in $I'$ gives a permutation of $\A'$ obtained precisely by removing $a_k$ from $\pi$ and replacing the single occurrence of $a_k$ in the image of $\pi$ by $\pi(a_k)$, and removing the $\pi(a_k)$ in the image of $\pi$.
This is exactly the definition of $\pi'$.

Thus, $T' = \rho (T)$ is a $(k-1)$-IET on $I'$ over the alphabet $\A'$ with permutation $\pi'$, proving the first part of the statement.

Let us now consider a point $x \in I'$.
The trajectory of $x$ is
$\Omega_{\rho(T)}(x) = c_0 c_1 c_2 \cdots$, with every $c_n \in \A'$.
Let us define the sequence of return times $\nu_0 = 0$ and
$$
    \nu_{n+1} =
    \begin{cases}
        \nu_n + 2 & \text{if } c_n = \pi (a_k) \\
        \nu_n + 1 & \text{if } c_n \neq \pi (a_k)
    \end{cases}.
$$
By construction, $T^{\nu_n} (x) = T'^n (x) \in I'$, and there is no visit to $I'$ at any intermediate time between $\nu_n$ and $\nu_{n+1}$.
We claim that
$$
    \Omega_T (x) =
    \varphi(c_0) \varphi(c_1) \varphi(c_2) \cdots =
    \varphi(\Omega_{\rho(T)}(x)).
$$
In particular, we claim that for each $n \ge 0$, the factor of $\Omega_T(x)$ from position $\nu_n$ up to position $\nu_{n+1} - 1$ equals $\varphi(c_n)$.
To show the equality, we deal with the following two cases.

First, suppose $c_n \neq \pi(a_k)$.
An application of $T$ to $T'^n(x) \in I_{c_n}$ results in a $T(x)$ which remains in $I'$.
Consequently, the return to $I'$ occurs at time $\nu_n + 1$ and the letter at position $\nu_n$ is exactly $\varphi(c_n)$.

Now suppose $c_n = \pi (a_k)$.
We have $T(I_{\pi (a_k)}) = I_{a_k} \not\subset I'$.
The first application of $T$ to an $x \in I_{\pi (a_k)}$ therefore results in an occurrence of $\pi (a_k)$ in the trajectory.
The second application of $T$ returns $x$ to $I'$ and results in an occurrence of $a_k$ in the trajectory.
Hence, the two letters at positions $\nu_n$ and $\nu_{n} + 1$ are $\pi (a_k) a_k = \varphi(\pi(a_k))$, and the return occurs exactly at time $\nu_n + 2$.

Concatenating over all $n$ yields the desired equality.
\end{proof}

\begin{example}
Let $T$ be the IET defined in Example~\ref{ex:non-minimal-IET}.
The IET $\rho(T)$
is associated with $(I_a)_{a \in \A'}$ and $\pi'$, where
$\A' = \{ {\tt a} < {\tt b} < {\tt c} < {\tt d} \}$
and
$\pi' = ({\tt a c b d})$
(see Figure~\ref{fig:merging}).
\label{ex:merging}

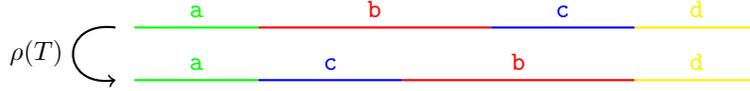
\begin{figure}[ht]
    \centering
    \begin{tikzpicture}[x=5cm, y=1cm]
       \draw[thick,green] (0,0) -- (0.33,0) node[midway,above] {\tt a};
       \draw[thick,red] (0.33,0) -- (0.33+1-0.382,0) node[midway,above] {\tt b};
        \draw[thick,blue] (0.33+1-0.382,0) -- (0.33+1,0) node[midway,above] {\tt c};
        \draw[thick,yellow] (0.33+1,0) -- (2*0.33+1,0) node[midway,above] {\tt d};

        \draw[thick,green] (0,-0.7) -- (0.33,-0.7) node[midway,above] {\tt a};
        \draw[thick,blue] (0.33,-0.7) -- (0.33+0.382,-0.7) node[midway, above] {\tt c};
        \draw[thick,red] (0.33+0.382,-0.7) -- (0.33+1,-0.7) node[midway,above] {\tt b};
        \draw[thick,yellow] (0.33+1,-0.7) -- (2*0.33+1,-0.7) node[midway,above] {\tt d};

        \draw[->, thick] (-.05, 0) .. controls (-0.2, 0.1) and (-0.2, -0.8) .. (-.05, -0.7) node[midway, left] {$\rho(T)$};
    \end{tikzpicture}
    \caption{The IET in Example~\protect{\ref{ex:non-minimal-IET}}.}
    \label{fig:merging}
\end{figure}
\end{example}

\begin{remark}
Another possible choice would be to "preserve" $a_k$ instead of $\pi(a_k)$.
The procedure would be similar, with a simpler description of $\pi'$, but a more complicated description of $\A'$.
\end{remark}

\subsection{Left Rauzy step}
\label{sec:left}

We symmetrically define the \emph{left Rauzy step} $\lambda$ sending $T$ to the induced transformation $T'$ on $I' = [\ell', r)$, where $\ell'$ is the left-most point of $(\ell,r)$ belonging to $D(T) \cup D(T^{-1})$.

Assume $\pi(a_1) \neq a_1$.

If $\ell'$ is not a connection, that is, if $|I_{a_1}| \neq |I_{\pi(a_1)}|$, we call the step \emph{non-degenerate}.
This is the mirror of the classical right Rauzy induction of~\cite{Rauzy79} (see also~\cite{branching}).
Symmetrically as seen above, if $T$ is regular, we can apply this step infinitely many times and always obtain a new regular IET on the same (possibly reordered) alphabet.

If $|I_{a_1}| > |I_{\pi(a_1)}|$ then $\Omega(x)$ starts with $\pi(a_1)\, a_1$ for every $x \in I_{\pi(a_1)}$. 

Similarly, if $|I_{a_1}| < |I_{\pi(a_1)}|$, then $\Omega(x)$ starts with $\pi(a_1) a_1$ for every $x \in T^{-1}(I_{a_1})$. 

We first consider the former case
(i.e., when $|I_{a_1}| > |I_{\pi(a_1)}|$).
Let $h = \pi^{-1} (1)$.
Then $\lambda (T)$ is the regular IET associated with $\A$ and $\pi' \in S_\A$ defined as
\begin{equation}
    \label{eq:pikL}
    \pi'(i) =
    \begin{cases}
        \pi(i) - 1& \text{if } i < h - 1 \\
        \pi(1) & \text{if } i = h - 1 \\
        \pi(i) & \text{if } i \le h
    \end{cases}
    .
\end{equation}
In the latter case
(i.e., when $|I_{a_1}| < |I_{a_{\pi (1)}}|$)
$\lambda (T)$ is the IET associated with
$\A' = \{a_2 < \ldots < a_{h-1} < a_1 < a_h < \ldots < a_k \}$
and
$\pi \in S_{\A'}$,
where
\begin{equation}
    \label{eq:pipikL}
    \pi(i) = \pi'(i)
\end{equation}
for every $1 \le i \le k$.

Let us now consider the \emph{degenerate case}, which occurs when $\ell'$ is a connection.
Equivalently, the degenerate case occurs when we have
$|I_{a_1}| = |I_{\pi(a_1)}|$.
In such a case, $\pi$ can be either reducible or irreducible.

For every point $x \in I_{\pi(a_1)}$, the trajectory $\Omega(x)$ starts with $\pi(a_1) a_1$.
The first return map of $T$ to $I' = [\ell', r)$ coincides with $T$ for all points $x \notin I_{\pi(a_1)}$ and with $T^2$ for $x \in I_{\pi(a_1)}$.
Analogously to the right case, we define $\lambda(T)$ as the IET associated with the partition $(I_a)_{a \in \A'}$ of $I'$ and the permutation $\pi'$, with
$\A' = \{ a_2 < \ldots < a_k \}$
and
\begin{equation}
    \label{eq:pi'degL}
    \pi'(a) =
    \begin{cases}
        \pi(a_1) & \text{if } a = \pi^{-1}(a_1) \\
        \pi(a) & \text{otherwise}
    \end{cases}
    .
\end{equation}
We now provide the mirror proposition of Proposition~\ref{pro:well-behaved-merging}.
The proof is symmetric to that of the previous proposition.

\begin{proposition}
\label{pro:well-behaved-merging-left}
Let $T$ be a $k$-IET on the interval $I$ over the alphabet $\A$ with an associated permutation $\pi \in S_\A$.
Suppose that $|I_{a_1}| = |I_{\pi(a_1)}|$ and $a_1 \neq \pi(a_1)$.
Then, the map $\lambda(T)$ on $I' = [\ell + |I_{a_1}|, r)$ over the alphabet $\A' = \A \setminus \{ a_1 \}$ with the permutation $\pi'$ as in Equation~\eqref{eq:pi'degL} is a $(k-1)$-IET.

Moreover, for all $x \in I'$,
$\Omega_T(x) = \varphi(\Omega_{\lambda(T)}(x))$,
where $\varphi$ is the morphism
$\varphi: \A'^\omega \rightarrow \A^\omega$
defined by
$\varphi(\pi(a_1)) = \pi(a_1) a_1$
and $\varphi(c) = c$ for every letter $c \ne \pi(a_1)$.
\end{proposition}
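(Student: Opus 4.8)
The plan is to run the proof of Proposition~\ref{pro:well-behaved-merging} through the mirror, dwelling only on what is specific to the left boundary. First I would record the geometry behind the degenerate left step: the left-most image interval of $T$ is $T(I_{\pi(a_1)})$, of length $|I_{\pi(a_1)}|$, so the hypothesis $|I_{a_1}| = |I_{\pi(a_1)}|$ forces $T(I_{\pi(a_1)}) = I_{a_1} = [\ell,\ell+|I_{a_1}|)$; thus $\ell' = \ell+|I_{a_1}|$ lies simultaneously in $D(T)$ and $D(T^{-1})$, i.e.\ is a $0$-connection, the interval $I'=[\ell',r)$ is disjoint from $I_{a_1}$, and $I'$ contains $I_{\pi(a_1)}$ since $\pi(a_1)\neq a_1$ gives $\pi(a_1)\ge a_2$. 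Then I would compute the first-return map $\nu$ of $T$ to $I'$ exactly as in the right case: for $x\in I_{\pi(a_1)}\subseteq I'$ one has $T(x)\in I_{a_1}$, disjoint from $I'$, while $T^2(x)\in T(I_{a_1})\subseteq I'$ — the last inclusion because $a_1\neq\pi(a_1)$ makes $I_{a_1}$ non-left-most in the domain, so its $T$-image has left endpoint $\ge\ell'$ — giving $\nu(x)=2$; and for $x\in I_a\subseteq I'$ with $a\in\A'\setminus\{\pi(a_1)\}$ one has $T(x)\in T(I_a)\subseteq I'$ by the same endpoint computation, giving $\nu(x)=1$. Hence $\lambda(T)$ agrees with $T^2$ on $I_{\pi(a_1)}$ and with $T$ elsewhere, so it is a piecewise translation of the ordered partition $(I_a)_{a\in\A'}$ of $I'$, hence a $(k-1)$-IET.

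To identify the permutation, I would note that the image intervals of $\lambda(T)$ inside $I'$ are precisely those of $T$ with the left-most one deleted — and that left-most one is exactly $T(I_{\pi(a_1)}) = I_{a_1}$ — together with $T(I_{a_1})$ now serving as the image of $I_{\pi(a_1)}$, since $\lambda(T)$ acts there by $T^2$ and $T^2(I_{\pi(a_1)}) = T(I_{a_1})$. The surviving image intervals keep their relative left-to-right order within $I'$, so reading off this re-routing yields exactly $\pi'$ of Equation~\eqref{eq:pi'degL}: $\pi^{-1}(a_1)\mapsto\pi(a_1)$ and $a\mapsto\pi(a)$ for every other $a\in\A'$, where $\pi(a)\in\A'$ because $a\neq\pi^{-1}(a_1)$; a short check confirms $\pi'$ is a bijection of $\A'$. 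This index chase is the step most prone to error, but it is the verbatim mirror of the corresponding bookkeeping for $\rho$, with the left boundary playing the role of the right one, so it transfers directly.

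For the trajectory identity, fix $x\in I'$ and write $\Omega_{\lambda(T)}(x) = c_0c_1c_2\cdots$ with $c_n\in\A'$. Define $\nu_0=0$, and $\nu_{n+1}=\nu_n+2$ if $c_n=\pi(a_1)$ while $\nu_{n+1}=\nu_n+1$ otherwise, so that $T^{\nu_n}(x) = (\lambda(T))^n(x)\in I'$ with no intermediate visit to $I'$. I then claim that the block of $\Omega_T(x)$ at positions $\nu_n,\ldots,\nu_{n+1}-1$ equals $\varphi(c_n)$: if $c_n\neq\pi(a_1)$ the single step $T$ keeps $(\lambda(T))^n(x)\in I_{c_n}$ in $I'$ and contributes the single letter $c_n=\varphi(c_n)$; if $c_n=\pi(a_1)$ then $T$ sends $(\lambda(T))^n(x)\in I_{\pi(a_1)}$ into $I_{a_1}$ and only the second step returns it to $I'$, contributing $\pi(a_1)\,a_1=\varphi(\pi(a_1))$. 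Concatenating over all $n$ gives $\Omega_T(x)=\varphi(\Omega_{\lambda(T)}(x))$, as required. Overall the argument is a strict symmetry of the one given for Proposition~\ref{pro:well-behaved-merging}, and no new difficulty arises beyond the mirrored index bookkeeping noted above.
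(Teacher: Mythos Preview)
Your proposal is correct and follows exactly the approach the paper indicates: the paper's proof consists solely of the sentence ``The proof is symmetric to that of the previous proposition,'' and you have carried out that symmetric argument in full detail, with the first-return computation, the permutation bookkeeping, and the block-by-block trajectory identity all being the faithful mirrors of the corresponding steps for Proposition~\ref{pro:well-behaved-merging}.
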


\begin{remark}
\label{rmk:moving}
Right and left Rauzy steps "cut" the IET only around $\ell$, if we identify $\ell$ with $r$.
Mutatis mutandis, it is also possible to define Rauzy steps around the other $0$-connections.
Indeed, considering a $0$-connection of the form $\sum_{b < a_i} |I_{a_i}|$, it is enough to permute the alphabet circularly sending $a_j \mapsto a_{(j-i \pmod k) + 1}$.
\end{remark}

\begin{example}
\label{ex:IET}
Consider the $5$-IET $T$ depicted in Figure~\ref{fig:non-minimal-IET}.
We can circularly reorder it about the connection between $I_a$ and $I_b$, obtaining an IET $T'$ such that the leftmost interval in the domain order is now $I_b$ (see Figure~\ref{fig:non-minimal-circreordered}).

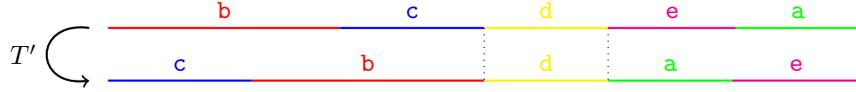
\begin{figure}[ht]
    \centering
    \begin{tikzpicture}[x=5cm, y=1cm]
    \draw[thick,red] (0,0) -- (0.618,0) node[midway,above] {\tt b};
    \draw[thick,blue] (0.618,0) -- (1,0) node[midway,above] {\tt c};
    \draw[thick,yellow] (1,0) -- (1.33,0) node[midway,above] {\tt d};
    \draw[thick,magenta] (1.33,0) -- (1.67,0) node[midway,above] {\tt e};
    \draw[thick,green] (1.67,0) -- (2,0) node[midway,above] {\tt a};

    \draw[thick,blue] (0,-0.7) -- (0.382,-0.7) node[midway, above] {\tt c};
    \draw[thick,red] (0.382,-0.7) -- (1,-0.7) node[midway,above] {\tt b};
    \draw[thick,yellow] (1,-0.7) -- (1.33,-0.7) node[midway,above] {\tt d};
    \draw[thick,green] (1.33,-0.7) -- (1.66,-0.7) node[midway,above] {\tt a};
    \draw[thick,magenta] (1.66,-0.7) -- (2,-0.7) node[midway,above] {\tt e};

    \draw[dotted] (1,0) -- (1,-0.7) {};
    \draw[dotted] (1.33,0) -- (1.33,-0.7) {};

    \draw[->, thick] (-.05, 0) .. controls (-0.2, 0.1) and (-0.2, -0.8) .. (-.05, -0.7) node[midway, left] {$T'$};
    \end{tikzpicture}
    \caption{A circularly reordered IET of Figure~\protect{\ref{fig:non-minimal-IET}}.}
    \label{fig:non-minimal-circreordered}
\end{figure}
\end{example}

\subsection{Merging}
\label{sec:merging}

To distinguish the behavior of the morphisms $\rho$ and $\lambda$ in the non-degenerate case from their behavior in the degenerate case, we define the notion of \emph{merging}.
Let $T$ be a $k$-IET on $[\ell, r)$ over the alphabet
$\A = \{a_1 < \ldots < a_k \}$
with associated permutation
$\pi \in S_\A$.

Suppose first that a right cut on $T$ would result in a degenerate cut.
In this case, we call $\rho$ a \emph{right merging step}.

Similarly, let $T$ be as before, but suppose that a left cut would result in a degenerate cut.
In this case, we call $\lambda$ a \emph{left merging step}. 

When the context makes the side on which the merging occurs clear, we drop the specification of orientation and simply call the step a \emph{merging step}.

\subsection{Splitting}
\label{sec:splitting}

We develop the \emph{splitting} step to isolate periodic components of IETs when they appear during the induction process.

Let us consider a nonempty proper contiguous sub-alphabet
$\B = \{ a_i < a_{i+1} < \ldots < a_{j} \}$
of $\A$, with $i < j$, and such that
$\bigcup_{a \in \B} I_a$
is an interval in the domain of $T$, i.e., the $I_a$ with $a \in \B$ are contiguous.
Set
$$
    I_\B =
    \bigcup_{a_i \in \B} I_{a_i} =
    \left[ \gamma_\B, \delta_\B \right),
$$
where $\gamma_\B$ and $\delta_\B$ are $0$-connections.
We say that $\B$ is \emph{$T$-invariant} if $T(I_\B) = I_\B$.

Let us also denote $\overline{\B} = \A \setminus \B$.
Supposing $T$ has such an invariant sub-alphabet $\B \subset \A$, let us consider
$$
    I_{\overline{\B}} = I \setminus I_\B =
    \begin{cases}
        [\ell, \delta_{\B} ) & \text{if } I_\B \text{ is of the form } [\delta_\B, r) \\
        [\gamma_\B, r) & \text{if } I_\B \text{ is of the form } [\ell, \gamma_\B) \\
        [\ell, \gamma_\B) \cup [\delta_\B, r) & \text{if } I_\B \text{ is interior to } I
    \end{cases}.
$$
Note that $I_{\overline{\B}}$ is not necessarily an interval.
To "transform" it into an interval in the case when $I_\B$ is interior, we connect can translate $[\delta_\B, r)$ to the left by $|I_\B|$, yielding the interval $[\ell, r - |I_\B|)$.
Formally, let us consider the \emph{gluing map} $G$
$$
    G:
    [\ell, \gamma_\B) \cup [\delta_\B, r)
    \rightarrow
    [\ell, r - |I_\B|)
$$
defined as
$$
    G(x) =
    \begin{cases}
        x & \text{if } x \in [\ell, \gamma_\B) \\
        x - |I_\B| & \text{if } x \in [\delta_\B, r)
    \end{cases}
    .
$$

When $\B \subset \A$ is $T$-invariant, we consider the three possible decompositions of $T$ into two IETs — one on $I_\B$ and one on $I_{\overline{B}}$ — corresponding to the three cases of $I_{\overline{B}}$.

Define
$$
    S_\B :=
    T\vert_{I_\B} :
    I_\B \rightarrow I_\B
$$
to be the restriction of $T$ to $I_\B$.

If $I_{\overline{B}}$ is connected, define the second IET to be the restriction of $T$ to $I_{\overline{B}}$, i.e.,
$$
    S_{\overline{\B}} :=
    T\vert_{I_{\overline{\B}}} :
    I_{\overline{\B}} \rightarrow I_{\overline{\B}}.
$$
If $I_{\overline{\B}}$ is disconnected, we define
$$
    S_{\overline{\B}} :=
    G \circ T\vert_{I_{\overline{\B}}} \circ G^{-1} :
    [\ell, r - |I_\B|) \to [\ell, r - |I_\B|).
$$

We show that the two IETs $S_\B$ and $S_{\overline{\B}}$ are well defined in all cases.

\begin{lemma}
\label{lem:wellDefinedSplitting}
Let $T: [\ell, r) \rightarrow [\ell, r)$ be a $k$-IET over the ordered alphabet $\A$.
Let $\B \subset \A$ be a proper $T$-invariant sub-alphabet whose sub-intervals $I_b$, $b \in \B$, are contiguous in $[\ell, r)$.
Then, $S_\B$ and $S_{\overline{\B}}$ as defined above are well defined IETs.
\end{lemma}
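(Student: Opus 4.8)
The plan is to verify, case by case according to the position of $I_\B$ inside $[\ell,r)$, that both $S_\B$ and $S_{\overline{\B}}$ are genuine IETs: that each is a well-defined bijection of the relevant interval, that it is a piecewise translation with finitely many pieces, and that those pieces are themselves sub-intervals indexed by an ordered (sub)alphabet. First I would deal with $S_\B$. Since $\B$ is $T$-invariant, $T(I_\B)=I_\B$, so $S_\B = T\vert_{I_\B}$ maps $I_\B$ into $I_\B$; because $T$ is a bijection of $[\ell,r)$ it is injective on $I_\B$, and since $T$ preserves Lebesgue measure (or simply by a cardinality/measure argument on the pieces) the restriction is onto $I_\B$. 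On each $I_a$ with $a\in\B$, $T$ acts as the translation $x\mapsto x+\tau_a$, and the images $T(I_a)$ partition $I_\B$; reading these images from left to right inside $I_\B$ gives a permutation of $\B$, and $|I_a|>0$ for every $a$, so $S_\B$ is an honest $|\B|$-IET on the interval $I_\B=[\gamma_\B,\delta_\B)$.

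Next I would treat $S_{\overline{\B}}$, splitting into the three sub-cases. If $I_\B$ is a left end-segment $[\ell,\gamma_\B)$ or a right end-segment $[\delta_\B,r)$, then $I_{\overline{\B}}$ is already an interval, and the argument is identical to the one for $S_\B$: $T$-invariance of $\B$ forces $T(I_{\overline{\B}})=I_{\overline{\B}}$ (the complement of an invariant set under a bijection is invariant), $T$ restricted there is a bijection, it is a piecewise translation with pieces $I_a$, $a\in\overline{\B}$, and reading the images left to right yields a permutation of $\overline{\B}$. The only genuinely new case is when $I_\B=[\gamma_\B,\delta_\B)$ is interior, so that $I_{\overline{\B}}=[\ell,\gamma_\B)\cup[\delta_\B,r)$ is disconnected and we must pass through the gluing map $G$. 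Here I would check: $G$ is a bijection from $I_{\overline{\B}}$ onto $[\ell,r-|I_\B|)$ that is a translation on each of its two pieces (the identity on $[\ell,\gamma_\B)$, a shift by $-|I_\B|$ on $[\delta_\B,r)$); $T\vert_{I_{\overline{\B}}}$ is a bijection of $I_{\overline{\B}}$ (again by invariance of the complement) that is a piecewise translation on the intervals $I_a$, $a\in\overline{\B}$; hence the conjugate $G\circ T\vert_{I_{\overline{\B}}}\circ G^{-1}$ is a bijection of $[\ell,r-|I_\B|)$, and on each set $G(I_a)$ it is a composition of three translations, hence a translation. The sets $G(I_a)$, $a\in\overline{\B}$, are sub-intervals (each $I_a$ lies entirely in one of the two pieces of $I_{\overline{\B}}$, since $\gamma_\B,\delta_\B\in D(T)$ are $0$-connections and therefore endpoints of the partition) and they partition $[\ell,r-|I_\B|)$; ordering them left to right gives an ordered alphabet identified with $\overline{\B}$ and a permutation on it. Thus $S_{\overline{\B}}$ is a $|\overline{\B}|$-IET.

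I expect the main obstacle to be the bookkeeping in the interior case: one must be careful that $G$ does not split any piece $I_a$ (this uses that $\gamma_\B$ and $\delta_\B$ are formal discontinuities of $T$, i.e.\ genuine partition points, which is exactly the hypothesis that the $I_b$, $b\in\B$, are contiguous and that $I_\B=[\gamma_\B,\delta_\B)$ with $\gamma_\B,\delta_\B$ $0$-connections), and that the resulting order of the $G(I_a)$ is consistent — namely that $G$ preserves the left-to-right order on $I_{\overline{\B}}$, so that the induced order on $\overline{\B}$ agrees with the restriction of the order on $\A$. Once that is in place, the translation-structure of $S_{\overline{\B}}$ and the fact that it is a bijection follow formally, and surjectivity in every case is immediate from injectivity together with the fact that a finite disjoint union of half-open intervals of the prescribed total length is exhausted by its image.
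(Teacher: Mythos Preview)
Your proposal is correct and follows essentially the same approach as the paper's own proof: restrict $T$ to the invariant block $I_\B$ (resp.\ its complement), observe this is a bijection acting as a translation on each piece, and in the interior case conjugate by the order-preserving gluing map $G$. The paper's argument is considerably terser and does not spell out the bookkeeping you flag (that $G$ cannot split a piece $I_a$ and preserves the left-to-right order), but the underlying idea is the same.
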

\begin{proof}
Let $b \in \B$.
Then $T$ acts on $I_b$ as an isometric translation back into (and never outside of) $I_\B$.
Furthermore, $T$ is a bijection from $I_\B$ to $I_\B$.
Thus, $T \vert_{I_\B}$ (i.e., $S_\B$ by definition) is a $|\B|$-IET. 

The same argument holds on $I_{\overline{\B}}$.
In the interior case, i.e., when $I_{\overline{\B}}$ is not an interval, we conjugate by the order-preserving translation $G$, and so $S_{\overline{\B}}$ is also an IET.
In particular, it is a $|\overline{\B}|$-IET, where $|\overline{\B}| = k - |\B|$.
\end{proof}

Furthermore, for given IET $T$ over $\A$ and proper contiguous sub-alphabet $\B \subset \A$, the $|\B|$-IET $S_\B$ and the $|\overline{\B}|$-IET $S_{\overline{\B}}$ are unique.

\begin{lemma}
\label{lem:uniquelyDeterminedSplitting}
The IETs $S_\B$ and $S_{\overline{\B}}$ are uniquely determined by $T$ and $\B$.
\end{lemma}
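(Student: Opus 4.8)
The plan is to show that the construction preceding the statement involves \emph{no} free choices: every object entering the definitions of $S_\B$ and $S_{\overline{\B}}$ is produced by an explicit formula in terms of the defining data of $T$ — its ordered partition $(I_a)_{a\in\A}$, the lengths $|I_a|$, and the permutation $\pi$ — together with the sub-alphabet $\B$. Uniqueness is then immediate, since the output of the construction is literally a function of the pair $(T,\B)$.

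First I would unwind the data. Writing $\B=\{a_i<\dots<a_j\}$, the endpoints $\gamma_\B=\sum_{a<a_i}|I_a|$ and $\delta_\B=\sum_{a\le a_j}|I_a|$ are determined by $T$ and $\B$, hence so are $I_\B=[\gamma_\B,\delta_\B)$, its complement $I_{\overline{\B}}=I\setminus I_\B$, and which of the three cases occurs (this trichotomy depends only on whether $\gamma_\B=\ell$, $\delta_\B=r$, or neither). In the interior case the gluing map $G$ is given by the displayed piecewise translation in terms of $\gamma_\B$, $\delta_\B$ and $|I_\B|=\delta_\B-\gamma_\B$, so $G$ is likewise determined. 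Consequently the maps $S_\B=T\vert_{I_\B}$ and $S_{\overline{\B}}$ (equal to $T\vert_{I_{\overline{\B}}}$ or to $G\circ T\vert_{I_{\overline{\B}}}\circ G^{-1}$ according to the case) are uniquely pinned down, and by Lemma~\ref{lem:wellDefinedSplitting} they are IETs of sizes $|\B|$ and $k-|\B|$.

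Next I would check that the \emph{combinatorial} data attached to these maps is canonical too, not merely the underlying set maps. The ordered partition on the domain of $S_\B$ is $(I_a)_{a\in\B}$, and that on the domain of $S_{\overline{\B}}$ is $(I_a)_{a\in\overline{\B}}$ (or its image under $G$ in the interior case); both are restrictions of the partition of $T$, so no choice is made. The permutations are then forced, since reading off the left-to-right order of the image subintervals yields a unique permutation of $\B$, respectively $\overline{\B}$ — exactly the bookkeeping already performed in the proof of Lemma~\ref{lem:wellDefinedSplitting}.

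The argument is essentially pure bookkeeping, so I do not expect a genuine obstacle; the one point that needs a word of care is precisely this last step — distinguishing $S_\B$ as a set map from $S_\B$ as an interval exchange transformation (map together with ordered partition and permutation) — together with checking that the three cases of $I_{\overline{\B}}$ really are distinguished by $(T,\B)$ alone, so that the branch of the construction taken is itself not a choice. Once those are in hand, the conclusion that $(S_\B,S_{\overline{\B}})$ depends only on $T$ and $\B$ follows.
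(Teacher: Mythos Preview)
Your proposal is correct and follows essentially the same approach as the paper: both arguments observe that every ingredient of the construction (the endpoints $\gamma_\B,\delta_\B$, the domain $I_\B$, the trichotomy for $I_{\overline{\B}}$, the gluing map $G$, and hence the maps $S_\B$ and $S_{\overline{\B}}$ themselves) is an explicit function of $(T,\B)$, so no free choice is made. If anything, your version is slightly more thorough in also verifying that the ordered partitions and permutations are forced, a point the paper defers to the surrounding lemmata.
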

\begin{proof}
The domains of $S_\B$
(i.e., $I_\B$)
and $S_{\overline{\B}}$
(i.e., $I_{\overline{\B}}$ or $[\ell, r - |I_\B|)$)
are forced by the choice of $\B$ by definition.
Since $T(I_\B) = I_\B$, the restriction
$T \vert_{I_\B} = S_\B$ is unique.
This forces the restriction of $S_{\overline{\B}}$ to the complement of $I_\B$, i.e., $I_{\overline{\B}}$ in the non-internal case.
In the internal case, $G$ is an order-preserving translation, and so $S_{\overline{\B}}$ is uniquely $T \vert_{I_{\overline{\B}}}$ or its conjugate by $G$, $G \circ T\vert_{\overline{\B}} \circ G^{-1}$.
\end{proof}

Finally, we show that the permutations of $S_\B$ and $T_2$ are exactly the restrictions of the permutation of $T$ to their respective sub-alphabets.

\begin{lemma}
\label{lem:permutationPreservingSplitting}
The IETs $S_\B$ and $S_{\overline{\B}}$ have respectively the permutation as $T$ restricted to the alphabets $\B$ and $\overline{\B}$.
\end{lemma}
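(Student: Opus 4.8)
The plan is to track how the \emph{one-line notation} of the permutation behaves under the two operations that produce $S_\B$ and $S_{\overline\B}$ from $T$, namely restriction to an invariant union of contiguous intervals and conjugation by the gluing map. The guiding principle is that the permutation of any IET $U$ with domain partition $(J_c)_{c \in \mathcal{C}}$, $\mathcal{C}$ ordered, is recovered by listing the labels $c$ in the left-to-right order in which the image intervals $U(J_c)$ appear: this gives the bottom row of $U$, while the top row is $\mathcal{C}$ in increasing order. Accordingly, ``$\pi$ restricted to $\B$'' is the permutation whose top row is $\B$ in increasing order and whose bottom row is the subword of $\pi(a_1)\cdots\pi(a_k)$ obtained by deleting every letter not in $\B$; this is indeed a permutation of $\B$, since every letter of $\B$ occurs exactly once in $\pi(a_1)\cdots\pi(a_k)$. (Note that $\pi(\B)$ need not equal $\B$ as a set: length coincidences can force $T(I_\B)=I_\B$ even when $\pi$ is irreducible, so ``restriction'' has to be read in this positional sense rather than as a literal corestriction.)

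For $S_\B=T\vert_{I_\B}$: by Lemma~\ref{lem:wellDefinedSplitting} this is a $|\B|$-IET with domain partition $(I_b)_{b\in\B}$, so its top row is $\B$ in increasing order. Its image intervals are $(T(I_b))_{b\in\B}$, a subfamily of the image intervals of $T$; and since $\B$ is $T$-invariant with the $I_b$ contiguous, $\bigcup_{b\in\B}T(I_b)=T(I_\B)=I_\B=[\gamma_\B,\delta_\B)$ is an interval. Hence these image intervals occupy a block of consecutive positions inside the image partition of $T$, so reading their labels from left to right returns exactly the subword of $\pi(a_1)\cdots\pi(a_k)$ supported on $\B$. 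That subword is the bottom row of $S_\B$, so the permutation of $S_\B$ is $\pi$ restricted to $\B$.

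For $S_{\overline\B}$: first observe that $\overline\B$ is $T$-invariant too, since $T(I_{\overline\B})=[\ell,r)\setminus T(I_\B)=[\ell,r)\setminus I_\B=I_{\overline\B}$. If $I_{\overline\B}$ is connected the argument for $S_\B$ applies verbatim. In the internal case $\overline\B=\{a_1<\cdots<a_{i-1}<a_{j+1}<\cdots<a_k\}$ and $S_{\overline\B}=G\circ T\vert_{I_{\overline\B}}\circ G^{-1}$, with $G$ the identity on $[\ell,\gamma_\B)$ and the translation $x\mapsto x-|I_\B|$ on $[\delta_\B,r)$. The point is that $G$ is a bijective, strictly increasing, piecewise translation: it carries $[\ell,\gamma_\B)$ to itself and $[\delta_\B,r)$ to the adjacent interval $[\gamma_\B,r-|I_\B|)$, without ever reversing orientation. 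Therefore $G$ preserves the left-to-right order of the domain intervals $I_b$ ($b\in\overline\B$) — which after gluing are laid out in the order $a_1,\dots,a_{i-1},a_{j+1},\dots,a_k$, i.e.\ the sorted order of $\overline\B$ — and also the left-to-right order of the image intervals $T(I_b)$ ($b\in\overline\B$). So conjugation by $G$ leaves both rows of the permutation intact, and exactly as above the bottom row of $S_{\overline\B}$ is the subword of $\pi(a_1)\cdots\pi(a_k)$ supported on $\overline\B$; that is, the permutation of $S_{\overline\B}$ is $\pi$ restricted to $\overline\B$.

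The main obstacle I anticipate is the combinatorial bookkeeping of the last two paragraphs: making fully precise that the $\B$-labelled image intervals of $T$ occupy one consecutive block (and, for $\overline\B$, that the $\overline\B$-labelled ones occupy the two blocks tiling $[\ell,\gamma_\B)$ and $[\delta_\B,r)$, which $G$ then splices together in order), and verifying carefully that $G$ is order-preserving so that the internal case genuinely reduces to the connected one. This is elementary but demands attention to indices and to the convention linking $\pi$ to the geometric placement of the image intervals.
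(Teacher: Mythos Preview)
Your proof is correct and follows the same approach as the paper's: both read off the top row of $S_\B$ (resp.\ $S_{\overline\B}$) as $\B$ (resp.\ $\overline\B$) in inherited order, identify the bottom row as the subword of $\pi(a_1)\cdots\pi(a_k)$ supported on $\B$ (resp.\ $\overline\B$), and invoke the order-preservation of $G$ to reduce the internal case to the connected one. Your parenthetical caveat---that $T(I_\B)=I_\B$ alone need not force $\pi(\B)=\B$ as a set---is a valid subtlety the paper glosses over (it writes ``since $\pi(\B)=\B$'', which is justified in the intended application to reducible $\pi$ but not by the bare hypothesis of $T$-invariance); your positional reading of ``restriction'' sidesteps this cleanly and makes the argument slightly more general.
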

\begin{proof}
The top partition of $S_\B$ is $(I_b)_{b \in \B}$ in the previous left-to-right order.
Since $T (I_\B) = I_\B$, the set $I \setminus I_\B$ also disappears from the bottom ordering, which leaves exactly the restriction
$$
    \pi \vert_{\B} \in S_{\B}.
$$
Therefore, $S_\B$ has the permutation $\pi \vert_{\B}$ on $\B$.
Similarly, in the case of $S_{\overline{\B}}$, the top partition is $(I_b)_{b \in \overline{\B}}$ in the previous left-to-right order.
Since $\pi (\B) = \B$, the set $I_\B$ also disappears from the bottom ordering, which leaves exactly the restriction 
$$
    \pi \vert_{{\overline{\B}}} \in S_{\overline{\B}}.
$$
Therefore, $S_{\overline{\B}}$ has the permutation $\pi \vert_{{\overline{\B}}}$ on $\overline{\B}$.

Consequently, the permutation of $T$ is preserved on the restricted alphabets $\B$ and $\overline{\B}$ by $S_\B$ and $S_{\overline{\B}}$, respectively.
\end{proof}

The decomposition of $T$ into $S_\B$ and $S_{\overline{\B}}$ isolates invariant sub-alphabets of $T$ from the rest of $T$.
Such invariant sub-alphabets, however, are handled neither by left/right Rauzy steps nor the merging steps.

Given $T$, $\A$ and $\B$ as above, we define the \emph{splitting} step $\sigma_\B$.
This map is defined on the set of IETs having $\B$ as a proper contiguous invariant sub-alphabet, and it takes values in the set of pairs of IETs, mapping  $T \mapsto (S_\B, S_{\overline{\B}})$.
When $\B$ is clear from the context, we write $\sigma$ instead of $\sigma_\B$.

In the following proposition, we formally define this step and show that the dynamics of $\sigma(T)$ behave as expected.

\begin{proposition}
Let $T: [\ell, r) \rightarrow [\ell, r)$ be a $k$-IET over an ordered alphabet $\A$ with corresponding reducible permutation $\pi \in S_\A$.
Let $\B \subset \A$ be a proper contiguous $T$-invariant sub-alphabet of $\A$.
The multi-valued map $\sigma_\B$ is well defined, uniquely determined, and permutation preserving.
\end{proposition}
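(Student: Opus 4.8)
The plan is to recognize that the three assertions about $\sigma_\B$ — well-definedness, unique determination, and permutation preservation — are precisely the contents of Lemmas~\ref{lem:wellDefinedSplitting}, \ref{lem:uniquelyDeterminedSplitting}, and \ref{lem:permutationPreservingSplitting}. So the proof is essentially bookkeeping: verify that the hypotheses of the proposition place us in the common hypothesis set of the three lemmas, and then assemble them.

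Concretely, I would first unwind the definition of $\sigma_\B$: on its domain (IETs admitting $\B$ as a proper contiguous $T$-invariant sub-alphabet) it sends $T$ to the pair $(S_\B, S_{\overline{\B}})$, where $S_\B = T\vert_{I_\B}$ and $S_{\overline{\B}}$ is $T\vert_{I_{\overline{\B}}}$ when $I_{\overline{\B}}$ is an interval and $G \circ T\vert_{I_{\overline{\B}}} \circ G^{-1}$ otherwise. The hypotheses — $\pi$ reducible and $\B$ a proper contiguous $T$-invariant sub-alphabet of $\A$ — say exactly that $T$ lies in this domain (reducibility of $\pi$ being consistent with, and in fact implied by, the existence of such a $\B$ once we recall that $\pi(\B)=\B$ forces the complementary alphabet to be invariant as well). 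I would then invoke Lemma~\ref{lem:wellDefinedSplitting} to conclude that $S_\B$ is a genuine $|\B|$-IET and $S_{\overline{\B}}$ a genuine $(k-|\B|)$-IET, so $\sigma_\B(T)$ is indeed a pair of IETs and $\sigma_\B$ is well defined; Lemma~\ref{lem:uniquelyDeterminedSplitting} to conclude that, $\B$ being fixed, the domains $I_\B$ and $I_{\overline{\B}}$ (resp.\ $[\ell, r - |I_\B|)$) are forced, the restrictions of $T$ are unique, and $G$ is the unique order-preserving translation realizing the gluing, so the pair $(S_\B, S_{\overline{\B}})$ is uniquely determined by $T$ and $\B$; and Lemma~\ref{lem:permutationPreservingSplitting} to conclude that the permutations of $S_\B$ and $S_{\overline{\B}}$ are exactly $\pi\vert_{\B} \in S_\B$ and $\pi\vert_{\overline{\B}} \in S_{\overline{\B}}$ (these being well-defined permutations precisely because $\pi(\B)=\B$), which is what "permutation preserving" means.

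I do not expect a genuine obstacle here, since all the mathematical content sits in the three lemmas. The only points that deserve an explicit sentence are: (i) that the proposition's hypotheses indeed match those of the lemmas; (ii) that "permutation preserving" should be read as "the output permutations are the restrictions of the input permutation", so that Lemma~\ref{lem:permutationPreservingSplitting} applies verbatim; and (iii) that $\B$ being proper guarantees $1 \le |\B| \le k-1$, so both coordinates of $\sigma_\B(T)$ are nonempty IETs and the decomposition is genuine rather than trivial. With these remarks in place the proof is simply a one-line citation of the three lemmas.
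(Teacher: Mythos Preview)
Your proposal is correct and matches the paper's approach exactly: the paper's own proof is a one-line citation of the definition of splitting together with Lemmata~\ref{lem:wellDefinedSplitting}, \ref{lem:uniquelyDeterminedSplitting}, and \ref{lem:permutationPreservingSplitting}. Your additional bookkeeping remarks about hypothesis matching and the meaning of ``permutation preserving'' are more explicit than the paper bothers to be, but entirely in the same spirit.
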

\begin{proof}
The results easily follows from the the definition of splitting and Lemmata~\ref{lem:wellDefinedSplitting},~\ref{lem:uniquelyDeterminedSplitting} and~~\ref{lem:permutationPreservingSplitting}.
\end{proof}

Associated to each $\sigma_\B$ step are two morphisms:
$$
    \iota_{\B, \A} : \B^* \rightarrow \A^*
    \quad \text{and} \quad
    \iota_{\overline{\B}, \A} :\overline{\B}^* \rightarrow \A^*
$$
that map from the free monoid over the underlying alphabets of $S_\B$ and $S_{\overline{\B}}$ to the free monoid over the underlying alphabet of the IET $T$ from which they were obtained.

\begin{example}
\label{ex:splitting}
Let $T'$ be the IET defined in Example~\ref{ex:merging} and $\B = \{ {\tt a} < {\tt b} < {\tt c} \} \subset \A$.
Then, $\sigma_\B(T') = (S_\B, S_{\overline{\B}})$, where
$S_\B$ is the IET associated with $\B$ and 
$S_{\overline{\B}}$ is the IET associated with $\overline{\B} = \{ {\tt d} \}$
(see Figure~\ref{fig:splitting-merged}).

\begin{figure}[H]
    \centering
    \begin{tikzpicture}[x=5cm,y=1cm]
        \draw[thick,green] (0,0) -- (0.33,0) node[midway,above] {\tt a};
        \draw[thick,red] (0.33,0) -- (0.33+0.618,0) node[midway,above] {\tt b};
        \draw[thick,blue] (0.33+0.618,0) -- (0.33+0.618+0.382,0) node[midway,above] {\tt c};

        \draw[thick,green] (0,-0.7) -- (0.33,-0.7) node[midway,above] {\tt a};
        \draw[thick,blue] (0.33,-0.7) -- (0.712,-0.7) node[midway,above] {\tt c};
        \draw[thick,red] (0.712,-0.7) -- (1.33,-0.7) node[midway,above] {\tt b};

        \draw[->,thick] (-0.01,0) .. controls (-0.08,-0.1) and (-0.08,-0.6) .. (-0.01,-0.7) node[midway,left] {$S_\B$};

        \draw[thick,yellow] (1.53,0) -- (1.86,0) node[midway,above] {\tt d};
        \draw[thick,yellow] (1.53,-0.7) -- (1.86,-0.7) node[midway,above] {\tt d};

        \draw[->,thick] (1.52,0) .. controls (1.45,-0.1) and (1.45,-0.6) .. (1.52,-0.7) node[midway,left] {$S_{\overline{\B}}$};
    \end{tikzpicture}
    \caption{The IET of Example \protect~\ref{ex:merging} after the application of $\sigma_\B$.}
    \label{fig:splitting-merged}
\end{figure}
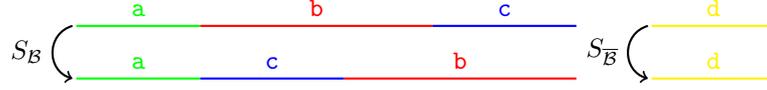
\end{example}

\begin{remark}
\label{rmk:separating}
Each minimal component is a union of sub-intervals of $[\ell, r)$, starting and ending at a $0$-connection, so it can be isolated using a proper sequence of splitting steps.
\end{remark}

\begin{remark}
\label{rmk:arbitrarySplitting}
The choice to define splitting as a two-valued map is purely one of presentation.
One could just as well split off all proper $T$-invariant contiguous sub-alphabets in a single step whenever they appear during the induction process, obtaining three or more IETs when possible.
We chose two in the interest of clarity.
\end{remark}

\subsection{First-return maps to cylinders}
\label{sec:firstreturn}

In~\cite{branching} it is shown that if $T$ is a regular IET, then for every $w \in \LL(T)$ the transformation induced by $T$ on $I_w$ is of the form $\chi(T)$, with
$\chi \in \{ \rho, \lambda \}^*$.
In this case, each morphism corresponds to a non-degenerate step, the regularity hypothesis assuring that no degenerate cases or splitting may happen.

We generalize this result to irregular IETs.
We first recall the following lemmata from the above paper.

\begin{lemma}[\cite{branching}]
\label{lem:allAdmissible}
Let $T$ be a $k$-IET on the interval $I = [\ell, r)$ over the alphabet $\A$.
For every $w \in \LL (T)$, $I_w$ is an admissible interval.
\end{lemma}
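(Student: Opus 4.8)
The plan is to induct on $n=|w|$, using two structural facts that the induction also establishes. First, every cylinder $I_w$ is a half-open interval $[u,v)$ with $u<v$, and the restriction of $T^{|w|}$ to $I_w$ is a single translation; more precisely, $T^{i}$ restricted to any cylinder of length $>i$ is a single translation (its points share the first $i+1$ letters of their trajectory), so $T^{i}$ carries the interior of such a cylinder into the interior of the corresponding partition interval. Second, $\mathrm{Div}(\cdot,T)$ only grows when the interval shrinks: if $J'\subseteq J$ then $\mathrm{int}(J')\subseteq\mathrm{int}(J)$, hence $\rho^{+}_{J',T}(z)\ge\rho^{+}_{J,T}(z)$ and $\rho^{-}_{J',T}(z)\ge\rho^{-}_{J,T}(z)$, hence $E_{J',T}(z)\supseteq E_{J,T}(z)$, $N_{J',T}(z)\supseteq N_{J,T}(z)$, and $\mathrm{Div}(J',T)\supseteq\mathrm{Div}(J,T)$. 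I keep the circular identification of Remark~\ref{rmk:moving} in force, so that the endpoint $\ell$ also belongs to the $\{r\}$ in the definition of admissibility and the seam point $\ell=r$ can serve as a formal discontinuity; the base case $w=\varepsilon$ is then immediate, since the endpoints of $I_\varepsilon=[\ell,r)$ are $\ell$ and $r$.

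For the inductive step, assume $I_w=[u,v)$ is admissible, let $wa\in\LL(T)$ with $n=|w|$, write the restriction of $T^{n}$ to $I_w$ as $x\mapsto x+\tau$, and put $I_a=[\gamma_a,\delta_a)$. Then $I_{wa}=I_w\cap T^{-n}(I_a)=[u'',v'')$ with $u''=\max(u,\gamma_a-\tau)$ and $v''=\min(v,\delta_a-\tau)$, and this interval is nonempty, so $u''\in I_w$. Each of $u''$, $v''$ is either \emph{inherited} (equal to $u$, resp.\ $v$) or \emph{new} (equal to $T^{-n}(\gamma_a)$, resp.\ $T^{-n}(\delta_a)$); an endpoint that is both may be treated as inherited. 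If $u''=u$ is inherited, the inductive hypothesis gives $u\in\mathrm{Div}(I_w,T)\cup\{r\}$, and monotonicity applied to $I_{wa}\subseteq I_w$ yields $\mathrm{Div}(I_w,T)\subseteq\mathrm{Div}(I_{wa},T)$, so $u''\in\mathrm{Div}(I_{wa},T)\cup\{r\}$; the case $v''=v$ is symmetric.

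The substantive case is a new endpoint, say $u''=T^{-n}(\gamma_a)>u$. Suppose first $a\ne a_1$, so $\gamma_a\in D(T)$. Since $T^{n}(u'')=\gamma_a$, we have $T^{-m}(\gamma_a)=T^{n-m}(u'')$ for all $m$, so to conclude $u''\in N_{I_{wa},T}(\gamma_a)\subseteq\mathrm{Div}(I_{wa},T)$ --- noting that $-n\le 0\le\rho^{+}_{I_{wa},T}(\gamma_a)-1$ automatically --- it suffices to prove the orbit-avoidance claim: $T^{l}(u'')\notin\mathrm{int}(I_{wa})$ for $1\le l\le n$. For this, note that $u''$ lies in the interior of every proper prefix cylinder $I_{w_0\cdots w_m}$ with $0\le m\le n-1$ (because $u''>u$ and $I_w\subseteq I_{w_0\cdots w_m}$); by the first structural fact $T^{m}(u'')\in\mathrm{int}(I_{w_m})$, so $T^{m}(u'')\notin D(T)\cup\{\ell\}$ for $0\le m\le n-1$, whereas $T^{n}(u'')=\gamma_a\in D(T)\cup\{\ell\}$ --- that is, $n$ is the first time the forward orbit of $u''$ meets the partition boundary. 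On the other hand, any $q\in\mathrm{int}(I_{wa})$ satisfies $T^{i}(q)\in\mathrm{int}(I_{w_i})$ for $0\le i\le n-1$ and $T^{n}(q)\in\mathrm{int}(I_a)$, so $T^{0}(q),\ldots,T^{n}(q)$ all miss the partition boundary. Hence if $T^{l}(u'')\in\mathrm{int}(I_{wa})$ for some $1\le l\le n$, then, applying the last observation with $q=T^{l}(u'')$ and reading position $n-l\in\{0,\ldots,n-1\}$, we get $T^{n}(u'')=T^{n-l}(q)\in\mathrm{int}(I_{w_{n-l}})$, contradicting $T^{n}(u'')\in D(T)\cup\{\ell\}$. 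The case $a=a_1$ (where $\gamma_a=\ell$) is identical with the seam point $\ell=r$ in place of the discontinuity, and the new right endpoint $v''$ is handled symmetrically (with $\delta_a\in D(T)$, or $\delta_a=r$, in which case either $v''=r$ or again the seam argument applies). This closes the induction.

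The only genuine obstacle is the orbit-avoidance claim: one must witness the freshly created endpoint \emph{inside} $\mathrm{Div}(I_{wa},T)$, not merely inside the larger $\mathrm{Div}(I_w,T)$, which requires pinning down exactly when the orbit of that endpoint first returns to the partition boundary. The remainder is bookkeeping, the only delicate part being to keep the identification $\ell=r$ consistently in play, since the seam point behaves like a formal discontinuity but is, by definition, excluded from $D(T)$.
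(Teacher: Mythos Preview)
The paper does not prove this lemma; it is cited from~\cite{branching} as a black box, so there is no in-paper argument to compare against. Your inductive scheme is sound: the monotonicity $\mathrm{Div}(J',T)\supseteq\mathrm{Div}(J,T)$ for $J'\subseteq J$ handles inherited endpoints, and your orbit-avoidance argument correctly shows that a freshly created endpoint $u''=T^{-n}(\gamma_a)$ with $\gamma_a\in D(T)$ satisfies $\rho^{-}_{I_{wa},T}(\gamma_a)\ge n$ and hence lies in $N_{I_{wa},T}(\gamma_a)$.

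The gap is your treatment of the seam. Invoking Remark~\ref{rmk:moving} to declare that ``$\ell$ also belongs to the $\{r\}$'' and that ``the seam point $\ell=r$ can serve as a formal discontinuity'' is not legitimate: that remark is about relocating the Rauzy cut via a cyclic relabelling of the alphabet, and it does not amend the definitions of $D(T)$ (which explicitly excludes~$\ell$) or of admissibility. Two observations repair this. First, the seam in the inductive step is actually vacuous: if $a=a_1$ then $\gamma_a=\ell$, and since $T^{n}(I_w)\subseteq[\ell,r)$ forces $u+\tau\ge\ell$, one has $u''=\max(u,\ell-\tau)=u$, an inherited endpoint; symmetrically $a=a_k$ forces $v''=v$. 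So new endpoints always come from genuine $\gamma_a,\delta_a\in D(T)$. Second, the only real seam case is the base case, where you must show $\ell\in\mathrm{Div}([\ell,r),T)$ honestly. Provided $\pi(a_1)\ne a_1$, set $\gamma':=T^{-1}(\ell)$, which is the left endpoint of $I_{\pi^{-1}(a_1)}$ and hence lies in $D(T)$; since $T(\gamma')=\ell\notin(\ell,r)$ one has $\rho^{+}_{[\ell,r),T}(\gamma')\ge 2$, whence $\ell=T^{1}(\gamma')\in N_{[\ell,r),T}(\gamma')$. Without some such hypothesis (e.g.\ $k=1$, or $k\ge2$ with $\pi(a_1)=a_1$) the statement as written is actually false: then $T(\ell)=\ell$, no $\gamma\in D(T)$ has $\ell$ in its orbit, and $[\ell,r)$ is not admissible. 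In~\cite{branching} regularity excludes this; in the present paper the lemma is only invoked (inside Proposition~\ref{pro:inductionChain}) after splitting down to a single minimal component, where again $\pi(a_1)\ne a_1$.
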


\begin{lemma}[\cite{branching}]
\label{lem:extremeInclusion}
Let $T$ be a $k$-IET on the interval $I =[\ell, r)$.
We define the sets
$$
    Z(T) = [\ell, r'),
    \quad \text{and} \quad
    Y(T) = [\ell', r),
$$
where $r'$ is the right-most
(resp., $\ell'$ is the left-most)
element in $D(T) \cup D(T^{-1})$.

If an interval $J$ is strictly included in $I$ and $J$ is admissible for $T$, then either $J \subset Z(T)$ or $J \subset Y(T)$.
\end{lemma}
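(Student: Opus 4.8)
The plan is to argue by contradiction, working directly from the definitions of $E_{J,T}$, $N_{J,T}$ and $\text{Div}(J,T)$. First I would record some elementary facts. The elements of $D(T)$ (resp. $D(T^{-1})$) are exactly the interior breakpoints of the domain partition $(I_a)_{a\in\A}$ (resp. of the image partition $(T(I_a))_{a\in\A}$), so each of them lies strictly between $\ell$ and $r$. Since $Z(T)$ and $Y(T)$ are defined, $D(T)\cup D(T^{-1})\ne\emptyset$ (i.e. $k\ge 2$), hence $\ell<\ell'\le r'<r$, so $Z(T)=[\ell,r')$ and $Y(T)=[\ell',r)$ are proper subintervals of $I$, and moreover $D(T)\cup D(T^{-1})\subseteq[\ell',r']$. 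Now suppose $J=[u,v)\subsetneq I$ is admissible but $J\not\subseteq Z(T)$ and $J\not\subseteq Y(T)$. Since automatically $\ell\le u$ and $v\le r$, the first condition forces $v>r'$ and the second forces $u<\ell'$; consequently $(u,v)\supseteq[\ell',r']\supseteq D(T)\cup D(T^{-1})$.

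The crux of the proof is the claim that $\text{Div}(J,T)\setminus(u,v)\subseteq\{\ell\}$. To prove it, I would pick $x\in\text{Div}(J,T)$ with $x\notin(u,v)$ and write $x=T^i(\gamma)$ with $\gamma\in D(T)$ and $i\in E_{J,T}(\gamma)$. Since $\gamma\in D(T)\subseteq(u,v)$ one has $\rho^{-}_{J,T}(\gamma)=0$, so $E_{J,T}(\gamma)=\{0,1,\dots,\rho^{+}_{J,T}(\gamma)-1\}$; and $i\ne 0$, because $T^{0}(\gamma)=\gamma\in(u,v)$ while $x\notin(u,v)$. Thus $i\ge 1$, which already forces $\rho^{+}_{J,T}(\gamma)\ge 2$, i.e. $T(\gamma)\notin(u,v)$. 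But $T$ maps the left endpoint of every $I_a$ to the left endpoint of $T(I_a)$, and the left endpoints of the image intervals are exactly $\{\ell\}\cup D(T^{-1})$; hence $T(\gamma)\in\{\ell\}\cup D(T^{-1})$, and since $D(T^{-1})\subseteq(u,v)$ this forces $T(\gamma)=\ell$.

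From here the behaviour of the orbit of $\gamma$ splits according to whether $\pi$ fixes $a_1$. If $\pi(a_1)=a_1$, then $T(\ell)=\ell$, so $T^{j}(\gamma)=\ell$ for all $j\ge 1$ and in particular $x=T^{i}(\gamma)=\ell$. If $\pi(a_1)\ne a_1$, then $T(\ell)$ is the left endpoint of a non-initial image interval, hence $T(\ell)\in D(T^{-1})\subseteq(u,v)$; therefore the forward orbit of $\gamma$ re-enters $(u,v)$ already at time $2$, so $\rho^{+}_{J,T}(\gamma)=2$, which forces $i=1$ and again $x=T(\gamma)=\ell$. Either way $x=\ell$, proving the claim. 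To conclude, note that $u\in\text{Div}(J,T)\cup\{r\}$ and $u<r$, so $u\in\text{Div}(J,T)$; since $u\notin(u,v)$, the claim gives $u=\ell$. Likewise $v\in\text{Div}(J,T)\cup\{r\}$, and were $v\ne r$ we would get $v\in\text{Div}(J,T)\setminus(u,v)$, hence $v=\ell$, contradicting $v>u=\ell$. So $v=r$, whence $J=[\ell,r)=I$, contradicting $J\subsetneq I$.

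The step I expect to be the main obstacle is the claim, and within it the two points that need care: legitimately deducing $T(\gamma)\notin(u,v)$ rather than allowing the orbit to re-enter $(u,v)$ at time $1$ (the case $\rho^{+}_{J,T}(\gamma)=1$, excluded precisely because it would force $i=0$), and then correctly tracing the single further step $T(\ell)$ in both branches of the dichotomy on $\pi(a_1)$. Everything else — the reduction to $u<\ell'$, $v>r'$ at the start, and the final extraction of $u=\ell$, $v=r$ — is immediate bookkeeping with the definitions, using that $\ell'$ and $r'$ are the extremal elements of $D(T)\cup D(T^{-1})$.
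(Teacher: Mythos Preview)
The paper does not give its own proof of this lemma: it is quoted verbatim from \cite{branching} and used as a black box. There is therefore nothing in the present paper to compare your argument against. That said, your proof is correct and entirely self-contained from the definitions of $\rho^{\pm}_{J,T}$, $E_{J,T}$, $N_{J,T}$, $\mathrm{Div}(J,T)$ and admissibility given in the paper.

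One small simplification: your Case~1 ($\pi(a_1)=a_1$) is in fact vacuous. Once you have deduced $T(\gamma)=\ell$, this says that $T(I_a)$ is the leftmost image interval, where $\gamma$ is the left endpoint of $I_a$; but the leftmost image interval is $T(I_{\pi(a_1)})$, so $a=\pi(a_1)$, and $\gamma\in D(T)$ forces $a\neq a_1$, hence $\pi(a_1)\neq a_1$. Thus only Case~2 can occur, and you can drop Case~1 entirely. This also dissolves any concern about whether $\rho^{+}_{J,T}(\gamma)$ is finite (in the vacuous Case~1 the forward orbit of $\gamma$ would never re-enter $(u,v)$, leaving $\rho^{+}$ undefined; with Case~1 gone, the finiteness of $\rho^{+}$ is automatic because $T^{2}(\gamma)=T(\ell)\in D(T^{-1})\subseteq(u,v)$).
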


\begin{lemma}[\cite{branching}]
\label{lem:divPoints}
Let $T$ be a $k$-IET on $I = [\ell, r)$.
Then, for two admissible intervals $J_1 \subset J_2$, $\text{Div} (J_2, T) \subset \text{Div} (J_1, T)$.
\end{lemma}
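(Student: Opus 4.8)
The plan is to derive the set containment $\text{Div}(J_2,T) \subseteq \text{Div}(J_1,T)$ from a pointwise comparison of the orbit segments $N_{J_i,T}(z)$, using only the monotonicity of return times under shrinking of the target interval. Write $J_1 = [u_1,v_1)$ and $J_2 = [u_2,v_2)$. From $J_1 \subseteq J_2$ we get $u_2 \le u_1 < v_1 \le v_2$, hence the inclusion of open intervals $(u_1,v_1) \subseteq (u_2,v_2)$. Since $\text{Div}(J_i,T) = \bigcup_{\gamma \in D(T)} N_{J_i,T}(\gamma)$, it suffices to show $N_{J_2,T}(z) \subseteq N_{J_1,T}(z)$ for every $z \in [\ell,r)$, and then specialize to $z = \gamma \in D(T)$ and take the union.

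The core step is to compare return times. Fix $z \in [\ell,r)$. Because $(u_1,v_1) \subseteq (u_2,v_2)$, every time $n$ at which $T^n(z)$ lands in $(u_1,v_1)$ is also a time at which it lands in $(u_2,v_2)$; hence
\[
\{ n > 0 : T^n(z) \in (u_1,v_1) \} \subseteq \{ n > 0 : T^n(z) \in (u_2,v_2) \},
\]
and, taking minima (with the convention $\min \emptyset = +\infty$), $\rho^{+}_{J_1,T}(z) \ge \rho^{+}_{J_2,T}(z)$. The same reasoning applied to the backward orbit gives $\rho^{-}_{J_1,T}(z) \ge \rho^{-}_{J_2,T}(z)$. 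Both $E_{J_1,T}(z)$ and $E_{J_2,T}(z)$ are intervals of consecutive integers; since the left endpoint $-\rho^{-}_{J_2,T}(z)$ of the latter is $\ge -\rho^{-}_{J_1,T}(z)$ and its right endpoint $\rho^{+}_{J_2,T}(z) - 1$ is $\le \rho^{+}_{J_1,T}(z) - 1$, we obtain $E_{J_2,T}(z) \subseteq E_{J_1,T}(z)$. Applying $i \mapsto T^i(z)$ to both sides yields $N_{J_2,T}(z) \subseteq N_{J_1,T}(z)$.

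Unioning over $\gamma \in D(T)$ then gives $\text{Div}(J_2,T) = \bigcup_{\gamma} N_{J_2,T}(\gamma) \subseteq \bigcup_{\gamma} N_{J_1,T}(\gamma) = \text{Div}(J_1,T)$, completing the proof.

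I do not expect a real obstacle here; the only thing to keep straight is the direction of the inequalities — a smaller target interval is visited no more often, so its first-return time is no smaller and the corresponding orbit segment is no shorter, which is why $N_{J_2,T}$ sits inside $N_{J_1,T}$ and the containment of the $\text{Div}$ sets reverses the containment of the intervals. Admissibility of $J_1$ and $J_2$ is not needed for the inclusion itself; in the ambient (regular) setting one may additionally invoke the absence of wandering intervals to ensure all these return times are finite, so that each $N_{J_i,T}(\gamma)$ is a genuine finite orbit arc, but the argument above is insensitive to this.
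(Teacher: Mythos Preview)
Your argument is correct: the monotonicity of forward and backward return times under shrinking the target open interval immediately gives $E_{J_2,T}(z)\subseteq E_{J_1,T}(z)$, hence $N_{J_2,T}(z)\subseteq N_{J_1,T}(z)$, and the union over $\gamma\in D(T)$ yields the claim. The paper does not supply its own proof of this lemma --- it is simply quoted from \cite{branching} --- so there is nothing to compare against; your observation that admissibility plays no role in the inclusion itself is also accurate.
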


We employ these lemmata in the following proposition.

\begin{proposition}
\label{pro:inductionChain}
Let $T$ be a $k$-IET on $I = [\ell, r)$ over the alphabet $\A$.
For every $w \in \LL(T)$, there exists a finite sequence of steps
$\chi \in \{ \rho, \lambda, \sigma\}^*$
such that $\chi(T)$ is the first-return map of $T$ to $I_w$. 
\end{proposition}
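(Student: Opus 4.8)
The plan is to build $\chi$ one step at a time, maintaining a strictly decreasing chain of intervals $I = I_0 \supsetneq I_1 \supsetneq I_2 \supsetneq \cdots$, all containing $I_w$, together with words $w_m$ in the language of $T_m := \chi_m \cdots \chi_1(T)$ such that $I_{w_m} = I_w$ (as a subset of the line, up to the gluing translations attached to any splitting steps) and $T_m$ is the first-return map of $T$ to $I_m$, the domain of $T_m$. The construction halts as soon as $I_m = I_w$; then $\chi := \chi_m \cdots \chi_1$ is the sought chain. If $w = \varepsilon$ then $I_w = I$ and the empty chain works, so from now on assume $I_w \subsetneq I$. Carrying the words $w_m$ along, and not merely the geometric interval, is convenient: it makes $I_w$ a cylinder of the current IET, hence admissible for it by Lemma~\ref{lem:allAdmissible}, and it exhibits at the end how the geometric cut is transported back symbolically by the composition of the step morphisms.

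At a stage with $I_m \supsetneq I_w$ I would argue as follows. If the permutation of $T_m$ is reducible, choose a proper contiguous $T_m$-invariant sub-alphabet $\B$ and apply the splitting step $\sigma_\B$. Since $I_\B$ is $T_m$-invariant, every trajectory of $T_m$ remains inside one of $I_\B$, $I_{\overline{\B}}$; hence the letters of $w_m$ all lie in $\B$ or all in $\overline{\B}$, the word $w_m$ belongs to the language of exactly one of $S_\B$, $S_{\overline{\B}}$, and $I_w$ sits inside that component's domain and is again a cylinder there (after applying $G$ when $I_\B$ is internal). Continuing on that component gives $I_{m+1} \subsetneq I_m$ with $I_w \subseteq I_{m+1}$ and strictly fewer subintervals. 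If instead the permutation of $T_m$ is irreducible and $T_m$ has $k_m \ge 2$ subintervals, then $\pi(a_1) \ne a_1$ and $\pi(a_{k_m}) \ne a_{k_m}$ (otherwise $\{a_1\}$ or $\{a_{k_m}\}$ would witness reducibility), so both Rauzy steps are available; since $I_w \subsetneq I_m$ is admissible for $T_m$, Lemma~\ref{lem:extremeInclusion} places it in $Z(T_m)$ or $Y(T_m)$, and I apply $\rho$ or $\lambda$ accordingly, which is a merging step when the cut is degenerate (Propositions~\ref{pro:well-behaved-merging} and~\ref{pro:well-behaved-merging-left}). In every case $I_{m+1}$ equals $Z(T_m)$ or $Y(T_m)$, a proper subinterval of $I_m$ still containing $I_w$, and the new word $w_{m+1}$ with $I_{w_{m+1}} = I_w$ is obtained by rewriting $w_m$ through the morphism of the step just performed. (When $k_m = 1$, the IET $T_m$ is the identity on $I_m$, whose only cylinder is $I_m$ itself, so $I_w = I_m$ and the construction is already finished.)

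Termination is the crux. The number $k_m$ of subintervals of $T_m$ is non-increasing and strictly decreases at every splitting step and at every degenerate (merging) Rauzy step, so only finitely many of those can occur; the chain therefore decomposes into finitely many maximal runs of non-degenerate $\rho$ and $\lambda$ steps, each carried out over one fixed alphabet. It suffices to bound the length of a single such run. If a run starts at an IET $T'$ on an interval $I'$ with $I_w \subseteq I'$, then every interval $I_m$ appearing later in that run is obtained from $T'$ by non-degenerate Rauzy steps, hence is admissible for $T'$; by Lemma~\ref{lem:divPoints} applied to $I_w \subseteq I_m \subseteq I'$ we obtain $\text{Div}(I_m, T') \subseteq \text{Div}(I_w, T')$, so the endpoints of $I_m$ lie in $\text{Div}(I_w, T')$ together with the two endpoints of $I'$, a fixed finite set. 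Since along the run the left endpoints only increase and the right endpoints only decrease, a strictly decreasing chain of intervals whose endpoints range over this fixed finite set must be finite, and it can only stop at $I_w$, because the inclusion $I_w \subseteq I_{m+1}$ at every step rules out overshooting. Concatenating the finitely many runs produces a finite $\chi \in \{\rho, \lambda, \sigma\}^*$ with $\chi(T)$ the first-return map of $T$ to $I_w$.

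The step I expect to be hardest is exactly this interaction between the new reductions and the termination machinery: one must verify that admissibility of $I_w$ and of the auxiliary intervals $Z(\cdot)$, $Y(\cdot)$ genuinely survives each merging and splitting step and each gluing translation, which in the word-tracking formulation amounts to checking that every step sends the cylinder $I_{w_m}$ of $T_m$ to a cylinder of $T_{m+1}$ representing the same interval. The only case that is not immediate is a right merging step for which $w_m$ would begin with the deleted letter $a_{k_m}$; but such a step is applied only when $I_w \subseteq Z(T_m)$, and in the degenerate case $Z(T_m)$ is disjoint from $I_{a_{k_m}}$, so this cannot occur (and symmetrically for left merging). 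This is precisely the bookkeeping the regular case treated in~\cite{branching} never has to face, since no degenerate cut and no invariant sub-alphabet ever appears there.
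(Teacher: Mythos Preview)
Your argument is correct and rests on the same three lemmata the paper uses (Lemmas~\ref{lem:allAdmissible}, \ref{lem:extremeInclusion}, \ref{lem:divPoints}) together with Propositions~\ref{pro:well-behaved-merging} and~\ref{pro:well-behaved-merging-left}, so the essential content coincides with the paper's proof. The organization, however, differs in a useful way. The paper proceeds by a four-way case split on the global dynamical type of $T$ (regular, minimal irregular, purely periodic, mixed), first stripping off all unwanted components via $\sigma$ and then running the $\rho/\lambda$ chain on the remaining piece. You instead give a single uniform algorithm: at each stage test reducibility of the current permutation, split if reducible and perform a (possibly degenerate) Rauzy step otherwise. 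Your termination argument is correspondingly sharper: the non-increasing alphabet size bounds the number of splitting and merging steps, and each maximal run of non-degenerate steps is bounded via the $\text{Div}(I_w,\cdot)$ finiteness exactly as in the paper's minimal case. Carrying the words $w_m$ through the chain is an addition the paper does not make explicit here; it cleanly guarantees that $I_w$ remains a cylinder, and hence admissible, for every intermediate $T_m$, which is precisely what is needed to invoke Lemma~\ref{lem:extremeInclusion} at each step. Your closing remark correctly identifies the one delicate bookkeeping point (the first letter of $w_m$ cannot be the letter deleted by a merging step on the chosen side), and your justification via $I_w \subseteq Z(T_m)$ being disjoint from $I_{a_{k_m}}$ in the degenerate case is sound.
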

\begin{proof}
We split the argument into four cases.
\begin{enumerate}
\item
Suppose $T$ is regular.
This case is given by \cite[Theorem 4.3]{branching}.

\item
Suppose $T$ is minimal, but not necessarily regular.
Lemma~\ref{lem:allAdmissible} shows that $I_w \subset [\ell, r)$ is admissible.
Set $S_0 = T$ and $J_0 = [\ell, r)$.
By Lemma~\ref{lem:extremeInclusion}, $I_w$ is either contained in $Z(T)$ or $Y(T)$.
The map $S_1$ obtained from $S_0$ after applying $\rho$ in the former case and $\lambda$ in the latter case is an IET both when the step is non-degenerate, as seen above (see also~\cite{branching}),
and when the step is degenerate, by Propositions~\ref{pro:well-behaved-merging} and~\ref{pro:well-behaved-merging-left}.
Furthermore, it is the first-return map to the domain of $S_1$, denoted by $J_1$.

We proceed inductively for $m > 0$ by defining $S_{m+1}$ and $J_{m+1}$ from $S_m$ on $J_m$ analogously to the base case.
By Lemma~\ref{lem:divPoints}, every $J_m$ containing $I_w$ has endpoints in the finite set $\text{Div}(I_w, T) \cup \{r\}$.
The inclusion chain of the $J_m$ is strictly decreasing among these finitely many super-intervals, and thus terminates at $J_{m'} = I_w$ after finitely many steps.
By construction, each composite step yields a first return on its given sub-interval, and thus the composite map is the first-return map on $I_w$.

\item
Suppose $T$ is a disjoint union of periodic components.
A finite chain of $\sigma$ isolates all components which do not contain $I_w$, since the number of such components is bounded by $k$ for a $k$-IET.
A standard induction covered in the first and second cases follows on this component.

\item
Finally, suppose $T$ is a mix of the above, i.e., $T$ can be decomposed into a non-zero number of periodic components and a non-zero number of minimal components.
By the third point, a finite application of $\sigma$ removes all periodic components that do not contain $I_w$.
A subsequent finite application of (possibly degenerate) $\rho$ and $\lambda$ then isolates $I_w$ in the domain as a consequence of point $2$.
This chain of morphisms is also finite.
\end{enumerate}
\end{proof}

\begin{example}
\label{ex:induction}
Let $T = T_1$ be the IET defined in Example~\ref{ex:non-minimal-IET}.
The induction on the interval $I_c$ begins with an application of $\rho$ in a degenerate case, followed by two successive applications of splitting, $\sigma_{\{ {\tt a} \}}$ and $\sigma_{\{ {\tt b} \}}$, concluding with a (non-degenerate) right Rauzy step $\rho$ on the minimal IET on the interval $I_c \cup I_d$ (see Figure~\ref{fig:ietBigInduce}, where the induced IET on $I_c$ is denoted by $T_7$).

In a similar way, 
we can obtain the induced IETs on $I_a$ and $I_b$
by a right Rauzy step $\rho$ followed by a splitting step $\sigma_{\{ {\tt a} \}}$,
and by a right Rauzy Step $\rho$ followed by two splitting steps $\sigma_{\{ {\tt a} \}}$ and $\sigma_{\{ {\tt b} \}}$,
respectively 
($T_3$ and $T_5$ in the figure).

\begin{figure}[H]
    \centering
    \begin{tikzpicture}[x=6.5cm, y=1cm]
        \draw[thick,green] (0,0) -- (0.1666667,0) node[midway,above] {\tt a};
        \draw[thick,red] (0.1666667,0) -- (0.3333333,0) node[midway,above] {\tt b};
        \draw[thick,blue] (0.3333333,0) -- (0.6262266,0) node[midway,above] {\tt c};
        \draw[thick,yellow] (0.6262266,0) -- (0.8333333,0) node[midway,above] {\tt d};
        \draw[thick,magenta] (0.8333333,0) -- (1,0) node[midway,above] {\tt e};

        \draw[thick,magenta] (0,-0.7) -- (0.1666667,-0.7) node[midway,above] {\tt e};
        \draw[thick,red] (0.1666667,-0.7) -- (0.3333333,-0.7) node[midway,above] {\tt b};
        \draw[thick,yellow] (0.3333333,-0.7) -- (0.5404401,-0.7) node[midway,above] {\tt d};
        \draw[thick,blue] (0.5404401,-0.7) -- (0.8333333,-0.7) node[midway,above] {\tt c};
        \draw[thick,green] (0.8333333,-0.7) -- (1,-0.7) node[midway,above] {\tt a};

        \draw[dotted] (0.1666667,0) -- (0.1666667,-0.7) {};
        \draw[dotted] (0.3333333,0) -- (0.3333333,-0.7) {};

        \draw[->, thick] (-.05, 0) .. controls (-0.1, 0.1) and (-0.1, -0.8) .. (-.05, -0.7) node[midway, left] {$T_1$};

        \draw[thick,green] (0,-1.6) -- (0.1666667,-1.6) node[midway,above] {\tt a};
        \draw[thick,red] (0.1666667,-1.6) -- (0.3333333,-1.6) node[midway,above] {\tt b};
        \draw[thick,blue] (0.3333333,-1.6) -- (0.6262266,-1.6) node[midway,above] {\tt c};
        \draw[thick,yellow] (0.6262266,-1.6) -- (0.8333333,-1.6) node[midway,above] {\tt d};

        \draw[thick,green] (0,-2.3) -- (0.1666667,-2.3) node[midway,above] {\tt a};
        \draw[thick,red] (0.1666667,-2.3) -- (0.3333333,-2.3) node[midway,above] {\tt b};
        \draw[thick,yellow] (0.3333333,-2.3) -- (0.5404401,-2.3) node[midway,above] {\tt d};
        \draw[thick,blue] (0.5404401,-2.3) -- (0.8333333,-2.3) node[midway,above] {\tt c};

        \draw[dotted] (0.1666667,-1.6) -- (0.1666667,-2.3) {};
        \draw[dotted] (0.3333333,-1.6) -- (0.3333333,-2.3) {};

        \draw[->, thick] (-.05, -1.6) .. controls (-0.1, -1.5) and (-0.1, -2.4) .. (-.05, -2.3) node[midway, left] {$T_2$};

        \draw[thick,green] (0,-3.2) -- (0.1666667,-3.2) node[midway,above] {\tt a};

        \draw[thick,green] (0,-3.9) -- (0.1666667,-3.9) node[midway,above] {\tt a};

        \draw[->, thick] (-.05, -3.2) .. controls (-0.1, -3.1) and (-0.1, -4.0) .. (-.05, -3.9) node[midway, left] {$T_3$};

        \draw[thick,red] (0.3,-3.2) -- (0.4666667,-3.2) node[midway,above] {\tt b};
        \draw[thick,blue] (0.4666667,-3.2) -- (0.7595599,-3.2) node[midway,above] {\tt c};
        \draw[thick,yellow] (0.7595599,-3.2) -- (0.9666667,-3.2) node[midway,above] {\tt d};

        \draw[thick,red] (0.3,-3.9) -- (0.4666667,-3.9) node[midway,above] {\tt b};
        \draw[thick,yellow] (0.4666667,-3.9) -- (0.6737735,-3.9) node[midway,above] {\tt d};
        \draw[thick,blue] (0.6737735,-3.9) -- (0.9666667,-3.9) node[midway,above] {\tt c};

        \draw[dotted] (0.4666667,-3.2) -- (0.4666667,-3.9) {};

        \draw[->, thick] (0.29, -3.2) .. controls (0.24, -3.1) and (0.24, -4.0) .. (0.29, -3.9) node[midway, left] {$T_4$};

        \draw[thick,green] (0,-4.8) -- (0.1666667,-4.8) node[midway,above] {\tt a};
        
        \draw[thick,green] (0,-5.5) -- (0.1666667,-5.5) node[midway,above] {\tt a};

        \draw[->, thick] (-.05, -4.8) .. controls (-0.1, -4.7) and (-0.1, -5.6) .. (-.05, -5.5) node[midway, left] {$T_3$};

        \draw[thick,red] (0.3,-4.8) -- (0.4666667,-4.8) node[midway,above] {\tt b};

        \draw[thick,red] (0.3,-5.5) -- (0.4666667,-5.5) node[midway,above] {\tt b};

        \draw[->, thick] (0.29, -4.8) .. controls (0.24, -4.7) and (0.24, -5.6) .. (0.29, -5.5) node[midway, left] {$T_5$};

        \draw[thick,blue] (0.6,-4.8) -- (0.8928932,-4.8) node[midway,above] {\tt c};
        \draw[thick,yellow] (0.8928932,-4.8) -- (1.1,-4.8) node[midway,above] {\tt d};

        \draw[thick,yellow] (0.6,-5.5) -- (0.8071068,-5.5) node[midway,above] {\tt d};
        \draw[thick,blue] (0.8071068,-5.5) -- (1.1,-5.5) node[midway,above] {\tt c};

        \draw[->, thick] (0.59, -4.8) .. controls (0.54, -4.7) and (0.54, -5.6) .. (0.59, -5.5) node[midway, left] {$T_6$};

        \draw[thick,green] (0,-6.4) -- (0.1666667,-6.4) node[midway,above] {\tt a};

        \draw[thick,green] (0,-7.1) -- (0.1666667,-7.1) node[midway,above] {\tt a};

        \draw[->, thick] (-.05, -6.4) .. controls (-0.1, -6.3) and (-0.1, -7.2) .. (-.05, -7.1) node[midway, left] {$T_3$};

        \draw[thick,red] (0.3,-6.4) -- (0.4666667,-6.4) node[midway,above] {\tt b};

        \draw[thick,red] (0.3,-7.1) -- (0.4666667,-7.1) node[midway,above] {\tt b};

        \draw[->, thick] (0.29, -6.4) .. controls (0.24, -6.3) and (0.24, -7.2) .. (0.29, -7.1) node[midway, left] {$T_5$};

        \draw[thick,blue] (0.6,-6.4) -- (0.6857864,-6.4) node[midway,above] {\tt c};
        \draw[thick,yellow] (0.6857864,-6.4) -- (0.8928932,-6.4) node[midway,above] {\tt d};

        \draw[thick,yellow] (0.6,-7.1) -- (0.8071068,-7.1) node[midway,above] {\tt d};
        \draw[thick,blue] (0.8071068,-7.1) -- (0.8928932,-7.1) node[midway,above] {\tt c};

        \draw[->, thick] (0.59, -6.4) .. controls (0.54, -6.3) and (0.54, -7.2) .. (0.59, -7.1) node[midway, left] {$T_7$};

        \draw[->, thick] (1.05, -0.7) .. controls (1.1, -0.8) and (1.1, -1.5) .. (1.05, -1.6) node[midway, right] {$\rho: T_1 \mapsto T_2$};
        \draw[->, thick] (1.05, -2.3) .. controls (1.1, -2.4) and (1.1, -3.1) .. (1.05, -3.2) node[midway, right] {$\sigma_{\{a\}}: T_2 \mapsto (T_3, T_4 )$};
        \draw[->, thick] (1.15, -3.9) .. controls (1.2, -4.0) and (1.2, -4.7) .. (1.15, -4.8) node[midway, right] {$\sigma_{\{b\}}: T_4 \mapsto (T_5, T_6 )$};
        \draw[->, thick] (1.15, -5.5) .. controls (1.2, -5.6) and (1.2, -6.3) .. (1.15, -6.4) node[midway, right] {$\rho: T_6 \mapsto T_7$};
    \end{tikzpicture}
    \caption{Extended branching Rauzy induction on the interval $I_c$ of a non-minimal $5$-IET.}
    \label{fig:ietBigInduce}
\end{figure}
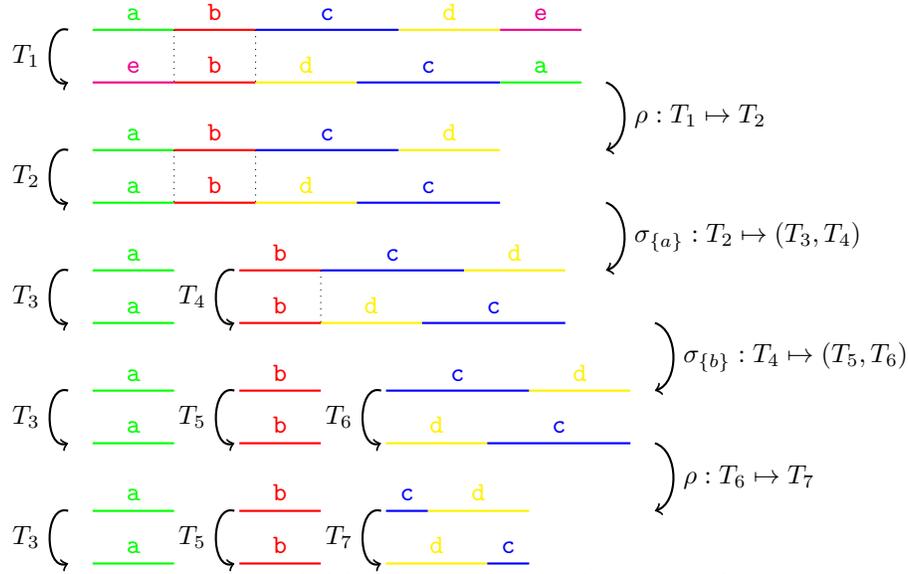
\end{example}

\section{Extension Graphs and Alsinicity}
\label{sec:forest}

In order to generalize the clustering result of ~\cite{DolceHughes2025} to all standard interval exchange transformations, and to specify the specific clustering permutation, we require order conditions and extension graphs.

Let $\LL \subset \A^*$ be a language.
The \emph{extension graph} $\G_\LL(w)$ of a word $w \in \LL$ is the undirected bipartite graph with vertices the disjoint union of
$$
    L_\LL(w) = \{a \in \A \; : \; aw \in \LL \}
    \quad \text{and} \quad
    R_\LL(w) = \{b \in \A \; : \; wb \in \LL \},
$$
and edges
$$
    B_\LL(w) = \{ (a,b) \in \A^2 \; : \; awb \in \LL \}.
$$
A word $w \in L$ such that $| L_\LL(w)| > 1$ (resp., ($|R_\LL(w)| > 1$) is said to be \emph{left-special} (resp., \emph{right-special}).
A word that is both left- and right-special is called \emph{bispecial}.

An extension graph $\G_\LL(w)$ is \emph{compatible} with two orders $<_1$ and $<_2$ on $\A$ if
$$
    \forall
    (a,b), (c,d) \in B_\LL(w),
    \qquad
    a <_1 c \Rightarrow b \le_2 d.
$$
When $\LL$ is clear from the context, we just write $\G(w)$, $L(w)$, $R(w)$ and $B(w)$.

A language $\LL$ is said to be \emph{dendric} if the extension graph of every $w \in \LL$ is acyclic and connected (whence the original name \emph{tree set} in~\cite{acyclicconnectedtree}).
A language \emph{alsinic} if the extension graph of every word in it is a forest (see~\cite{DolceHughes2025}).
Since every forest is a planar graph, for every word $w$ in an alsinic language is it possible to find two order $<_{1,w}$ and $<_{2,w}$ such that $\G(w)$ is compatible with them.

A language $\LL$ is \emph{ordered dendric} (resp., \emph{ordered alsinic}) for two orders $<_1$ and $<_2$ if every $\G(w)$, with $w \in \LL$, is compatible for $<_1$ and $<_2$ (in~\cite{bifixcodesIETs} the term \emph{planar tree} was used since the edges do not cross when drawing the vertices of $L(w)$ on the left ordered by $<_1$ and the vertices of $R(w)$ on the right ordered by $<_2$).

An example of a family of ordered dendric languages is given by Sturmian languages, i.e., coding of an irrational rotation on a circle.
On the other hand, Arnoux-Rauzy words on more than two letters are dendric but not ordered dendric~\cite{bifixcodesIETs}.

Ordered alsinic languages are strictly linked to IETs.

\begin{theorem}[\cite{FerencziZamboni08,FerencziHubertZamboni24}]
Let $\LL \subset \A^*$.
\begin{itemize}
\item $\LL$ is the language of an IET if and only if it is a recurrent alsinic language.

\item $\LL$ is the language of a minimal IET if and only if it is an aperiodic, uniformly recurrent alsinic language.

\item $\LL$ is the language of a regular IET if and only if it is a recurrent ordered dendric set.
\end{itemize}
\end{theorem}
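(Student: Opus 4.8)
\emph{Overall plan.} The statement bundles three biconditionals; I would prove the ``direct'' implications (an IET language is recurrent and has forest, resp.\ tree, extension graphs) and the ``converse'' implications (such a language comes from an IET) separately, and within each handle the general, minimal, and regular cases in parallel.

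\emph{From IETs to languages.} The starting point is that for $w\in\LL(T)$ the cylinder $I_w$ is an interval (it is admissible, Lemma~\ref{lem:allAdmissible}) and that $T^{|w|}$ restricted to $I_w$ is a single translation. Hence the left extensions of $w$ cut $I_w$ into the ordered family of subintervals $\{\,T(I_a)\cap I_w : a\in L(w)\,\}$ and the right extensions cut it into $\{\,I_w\cap T^{-|w|}(I_b) : b\in R(w)\,\}$, and an edge $(a,b)$ of $\G(w)$ is present exactly when the two corresponding subintervals overlap. I would then prove the elementary fact that the overlap graph of two finite ordered interval partitions of a common interval is always a forest --- for instance by contradiction, looking at the leftmost left-piece lying on a hypothetical cycle, whose two cycle-neighbours are disjoint right-pieces, and deriving that one of them cannot reach the next left-piece on the cycle. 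So every $\G(w)$ is a forest and $\LL(T)$ is alsinic. Recurrence follows by decomposing $T$ into its finitely many minimal and periodic components: writing $\LL(T)=\bigcup_i\LL_i$ with $\LL_i$ the language of the $i$-th component, every factor of $\LL(T)$ lies in a single $\LL_i$, and each $\LL_i$ is uniformly recurrent (a minimal component, cited; or a single periodic orbit), so $\LL(T)$ is recurrent.

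\emph{The minimal and regular refinements.} If $T$ is minimal then $\LL(T)$ is uniformly recurrent (cited) and aperiodic: a dense orbit has no eventually periodic trajectory, so $\LL(T)$ is not the language of a periodic word. If $T$ is regular, I would check that inside $I_w$ no boundary of the left partition coincides with a boundary of the right partition --- both kinds of boundary are pulled back along finite orbit segments to formal discontinuities of $T$ or to the endpoints of $I_w$, and the i.d.o.c.\ rules the coincidence out --- so the overlap graph is connected, hence a tree, and $\LL(T)$ is dendric. Taking $<_1$ to be the order in which the image intervals $T(I_a)$ occur and $<_2$ the natural order on $\A$, every $\G(w)$ is compatible with $(<_1,<_2)$ (drawn with left vertices in $<_1$-order and right vertices in $<_2$-order, overlap edges of two interval partitions do not cross), so $\LL(T)$ is recurrent ordered dendric.

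\emph{From languages to IETs --- the hard direction.} For a recurrent \emph{ordered dendric} $\LL$ with orders $<_1,<_2$ I would build a regular IET realising $\LL$ by running Rauzy induction abstractly on the language: ordered dendricity makes every step non-degenerate, producing an infinite sequence of permutations together with a nested sequence of ``abstract cylinders'' indexed by return words; a length vector in the (nonempty) intersection of the associated nested simplices then gives an IET whose natural coding is $\LL$, and the absence of degeneracies makes it regular (equivalently, one uses $<_1$ and $<_2$ to build an adic/Vershik realisation of the associated subshift on $[\ell,r)$ under which the shift is an IET). For a general recurrent \emph{alsinic} $\LL$ the two local orders at a word need not assemble into global orders, and a disconnection of some $\G(w)$ marks a contiguous sub-alphabet that should become $T$-invariant; so I would decompose $\LL$ into finitely many sub-languages carried by (possibly circularly reordered) contiguous sub-alphabets, each being either the language of a periodic word --- realised by a DIET through its composition data --- or a uniformly recurrent aperiodic alsinic language whose ordering can be made global and is thus realised by a minimal IET as above, and then glue the component IETs side by side on one interval, the left-to-right layout and the target intervals being dictated by the forest $\G(\varepsilon)$, finally checking $\LL(T)=\LL$ with no spurious crossover factors. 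The genuinely hard points --- that the abstractly defined IET in the ordered dendric case really codes $\LL$ and nothing more (equivalently, nonemptiness of the relevant length simplex), together with the decomposition-and-gluing in the alsinic case, which is exactly what the splitting and merging steps of this paper are built to regulate --- are where I expect essentially all the difficulty to concentrate.
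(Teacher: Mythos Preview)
The paper does not actually prove this theorem: it is stated with citation to~\cite{FerencziZamboni08,FerencziHubertZamboni24} and no proof is given. The only comment the paper adds is the observation that in the third item one may weaken ``uniformly recurrent'' to ``recurrent'' via the result of~\cite{DolcePerrin17} that recurrent dendric sets are automatically uniformly recurrent. So there is nothing in the paper to compare your proposal against.

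That said, your outline is broadly the right shape and matches the arguments in the cited sources. The forward direction (IET $\Rightarrow$ alsinic/ordered dendric) via the overlap graph of two ordered interval partitions of $I_w$ is exactly the standard argument; your observation that under i.d.o.c.\ no left-partition boundary coincides with a right-partition boundary, forcing connectedness, is the correct mechanism for the regular case. For the converse, the ordered-dendric-to-regular-IET direction is indeed the substantial part, and your sketch (abstract Rauzy induction / adic realisation with a length vector chosen in a nested simplex) is the strategy used in the literature, though carrying it out rigorously --- in particular showing the simplex intersection is nonempty and that the resulting coding is exactly $\LL$ --- is where the real work lies, as you acknowledge. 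Your plan for the general alsinic case (decompose along disconnections of extension graphs, realise pieces, glue) is plausible but the gluing step is delicate: you need to ensure the global $\G(\varepsilon)$ of the assembled IET reproduces that of $\LL$, which constrains not just the left-to-right layout but also the target-interval order, and this is precisely where the cited papers do nontrivial bookkeeping.
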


Note that in~\cite{FerencziZamboni08,FerencziHubertZamboni24} the hypothesis of uniform recurrence was used for the last point.
However, as proved in~\cite{DolcePerrin17}, recurrent dendric sets are also uniformly recurrent. 
As observed earlier, clustering words are associated with DIETs. Recall also that a DIET can be viewed as an IET having sub-intervals of integral (or rational) length.
We make this link explicit via a reformulation of~\cite[Theorem 8]{FerencziHubertZamboni23}.

For two total orders $<_1$ and $<_2$ on alphabet $\A$, the order condition of Ferenczi, Hubert, and Zamboni in~\cite{FerencziHubertZamboni23} states that, for a factor $w$, whenever $awb$ and $a'wb'$ occur with $a <_2 a'$ and $b <_1 b'$, the edges $\{a, b\}$ and $\{a', b'\}$ do not cross.
Hence, a language is ordered alsinic for two orders exactly when it satisfies the order condition for those two orders.
In the same paper, the authors show that a primitive word $w$ is $\pi$-clustering iff every bispecial factor in $\LL (w^{\omega})$ satisfies the order condition with respect to $<_\pi$ and $<_\A$, where
$<_{\A}$ is the order on the alphabet $\A$ and $<_\pi$ the order given by $a <_\pi b$ when $\pi^{-1}(a) <_\A \pi^{-1}(b)$.

If $u \in \LL (w^{\omega})$ is a non-bispecial factor , i.e., if $|L(u)| \le 1$ or $|R(u)| \le 1$,
the extension graph $\G(u)$ is trivially a tree compatible with any pair of orders.

Thus, the only factors in this language that could possibly fail the order condition are the bispecial factors.
Replacing the order condition by alsinicity, we obtain the following reformulation of a result in~\ref{thm:FHZ23}

\begin{theorem}[\cite{FerencziHubertZamboni23}]
\label{thm:FHZ23}
A word $w \in \A^*$ is $\pi$-clustering if and only if $\LL(w^{\omega})$ is ordered alsinic with respect to $<_\pi$ and $<_\A$.
\end{theorem}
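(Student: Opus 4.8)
The statement to prove is Theorem~\ref{thm:FHZ23}, which asserts the equivalence: $w$ is $\pi$-clustering $\iff$ $\LL(w^\omega)$ is ordered alsinic for $<_\pi$ and $<_\A$. The plan is essentially to unpack the two equivalences already laid out in the paragraphs preceding the statement and glue them together. The first ingredient is the cited result of Ferenczi--Hubert--Zamboni (\cite{FerencziHubertZamboni23}, restated in the text): a primitive word $w$ is $\pi$-clustering if and only if every bispecial factor of $\LL(w^\omega)$ satisfies the order condition with respect to $<_\pi$ and $<_\A$. The second ingredient is the remark, also made just above the statement, that the order condition for two orders $<_1,<_2$ is literally the same as compatibility of the extension graph with $<_1,<_2$ (edges do not cross), hence a language is ordered alsinic for $<_1,<_2$ exactly when \emph{every} factor satisfies the order condition for those orders.

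First I would reduce to the case where $w$ is primitive. If $w = u^p$ with $u$ primitive, then $\LL(w^\omega) = \LL(u^\omega)$ by definition of the language of a finite word, and by Proposition~\ref{pro:clusteringprimitive} $w$ is $\pi$-clustering if and only if $u$ is. So both sides of the claimed equivalence are unchanged when passing from $w$ to $u$, and it suffices to treat primitive $w$.

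Next I would handle the non-bispecial factors, so that "order condition at every factor" collapses to "order condition at every bispecial factor." If $u \in \LL(w^\omega)$ has $|L(u)| \le 1$ or $|R(u)| \le 1$, then $\G(u)$ is a forest (indeed, a tree when both sets are nonempty — which is guaranteed since the language is bi-extendable — or a single-point/edgeless graph otherwise), and it is trivially compatible with any pair of orders, since the crossing condition is vacuous when one side has at most one vertex. Hence $\LL(w^\omega)$ is ordered alsinic for $<_\pi,<_\A$ if and only if $\G(u)$ is compatible with $<_\pi,<_\A$ for every \emph{bispecial} $u \in \LL(w^\omega)$, which by the identification of the order condition with graph compatibility is exactly the condition "every bispecial factor satisfies the order condition with respect to $<_\pi$ and $<_\A$." Chaining this with the Ferenczi--Hubert--Zamboni criterion for primitive words gives: $w$ primitive is $\pi$-clustering $\iff$ every bispecial factor of $\LL(w^\omega)$ satisfies the order condition for $<_\pi,<_\A$ $\iff$ $\LL(w^\omega)$ is ordered alsinic for $<_\pi,<_\A$. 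Combined with the first reduction, this proves the theorem.

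The proof is mostly bookkeeping, so there is no deep obstacle; the one point that requires care is making sure the dictionary between the two vocabularies is exact. Specifically, I want to confirm that the "order condition" as stated (pairs $awb$, $a'wb'$ with $a <_2 a'$, $b <_1 b'$ must not cross) matches the definition of "compatible with $<_1,<_2$" (for $(a,b),(c,d) \in B_\LL(w)$, $a <_1 c \Rightarrow b \le_2 d$) under the correct assignment of which order plays which role — note the potential swap of the two orders and the strict-versus-nonstrict inequality. The cleanest way to do this is to state the correspondence once and for all as the displayed equivalence "a language satisfies the order condition for $<_1,<_2$ iff it is ordered alsinic for (the appropriately matched) orders," verify it directly from the two definitions, and then simply invoke it. Once that dictionary is pinned down, the rest follows immediately from the two cited facts and Proposition~\ref{pro:clusteringprimitive}.
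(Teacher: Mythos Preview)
Your proposal is correct and follows exactly the argument the paper sketches in the paragraphs immediately preceding the theorem: identify the order condition with compatibility of extension graphs, note that non-bispecial factors are trivially compatible, and invoke the cited Ferenczi--Hubert--Zamboni criterion for primitive words (together with Proposition~\ref{pro:clusteringprimitive} for the reduction to the primitive case). The paper does not give a separate formal proof; it presents the theorem as a direct reformulation of the cited result, and your write-up simply makes that reformulation explicit.
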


\begin{example}
\label{ex:extensiongraph}
Let $w$ and $T$ be as in Example~\ref{ex:ebwt-DIET}.
Then
$$
    {\tt c} <_\pi {\tt b} <_\pi {\tt a}
    \quad \text{and} \quad
    {\tt a} <_{\A} {\tt b} <_{A} {\tt c}.
$$
The extension graphs of the empty word and the letter ${\tt a}$ are shown in Figure~\ref{fig:extensiongraph} (left and center).
It is straightforward to show that $\G(w)$ contains only one edge for every $w \in \LL \setminus \{ \varepsilon, {\tt a} \}$ (e.g., right of Figure~\ref{fig:extensiongraph}).

\begin{figure}[H]
    \centering
    \tikzset{node/.style={draw, circle, inner sep=0.4mm}}
    \tikzset{title/.style={minimum size=0.5cm,inner sep=1pt}}
    \begin{tikzpicture}
        \node[title] (Ee) {$\G(\varepsilon)$};
        \node[node](ecl) [below left = 0.2cm and 0.3cm of Ee] {\tt c};
        \node[node](ebl) [below = 0.3cm of ecl] {\tt b};
        \node[node](eal) [below = 0.3cm of ebl] {\tt a};
        \node[title](ePi) [left = 0.2cm of ebl] {\eqmathbox{\rotatebox[origin=c]{-90}{$<_\pi$}}};
        \node[node](ear) [below right = 0.2cm and 0.3cm of Ee] {\tt a};
        \node[node](ebr) [below = 0.3cm of ear] {\tt b};
        \node[node](ecr) [below = 0.3cm of ebr] {\tt c};
        \node[title](eA) [right = 0.2cm of ebr] {\eqmathbox{\rotatebox[origin=c]{-90}{$<_\A$}}};
        \path
            (ecl) edge[thick] (ear)
            (ebl) edge[thick] (ear)
            (eal) edge[thick] (ear)
            (eal) edge[thick] (ebr)
            (eal) edge[thick] (ecr)
        ;

        \node[title] (Ea)[right = 3.5cm of Ee] {$\G({\tt a})$};
        \node[node](acl) [below left = 0.2cm and 0.3cm of Ea] {\tt c};
        \node[node](abl) [below = 0.3cm of acl] {\tt b};
        \node[node](aal) [below = 0.3cm of abl] {\tt a};
        \node[title](aPi) [left = 0.2cm of abl] {\eqmathbox{\rotatebox[origin=c]{-90}{$<_\pi$}}};
        \node[node](aar) [below right = 0.2cm and 0.3cm of Ea] {\tt a};
        \node[node](abr) [below = 0.3cm of aar] {\tt b};
        \node[node](acr) [below = 0.3cm of abr] {\tt c};
        \node[title](aA) [right=0.2cm of abr] {\eqmathbox{\rotatebox[origin=c]{-90}{$<_\A$}}};
        \path
            (acl) edge[thick] (aar)
            (abl) edge[thick] (abr)
            (aal) edge[thick] (acr)
        ;

        \node[title] (Eba)[right = 3.5cm of Ea] {$\G({\tt ba})$};
        \node[node](baal) [below left = 0.2cm and 0.3cm of Eba] {\tt a};
        \node[title](baPi) [left = 0.2cm of baal] {\eqmathbox{\rotatebox[origin=c]{-90}{$<_\pi$}}};
        \node[node](babr) [below right = 0.2cm and 0.3cm of Eba] {\tt b};
        \node[title](aA) [right = 0.2cm of babr] {\eqmathbox{\rotatebox[origin=c]{-90}{$<_\A$}}};
        \path
            (baal) edge[thick] (babr)
        ;
    \end{tikzpicture}
    \caption{The extension graphs of $\varepsilon, {\tt a}$ and ${\tt ba}$ in $\LL(T)$.}
    \label{fig:extensiongraph}
\end{figure}
\end{example}

Following the same argument seen in Section~\ref{sec:iets}, Theorem~\ref{thm:FHZ23} can be generalized to multiset of words.

\begin{proposition}[\cite{DolceHughes2025}]
A multiset $W \subset \A^*$ is $\pi$-clustering if and only if every for every $w \in W$, the language $\LL(w^\omega)$
is ordered alsinc with respect to $<_{\pi_w}$ and $<_{\A_w}$,
where $\A_w = \A \cap \LL(w)$ and $\pi_w$ is the restriction of $\pi$ to $\A_w$.
\end{proposition}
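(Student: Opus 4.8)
\emph{Strategy.} The plan is to peel the statement down to the one-word case, which is already available as Theorem~\ref{thm:FHZ23}, and then to transport the clustering property between $W$ and its members through the dictionary between clustering and discrete interval exchanges recalled in Section~\ref{sec:iets}. After replacing every non-primitive $w\in W$ by the corresponding number of copies of its primitive root — which changes neither side, by Propositions~\ref{pro:bwt-primitive} and~\ref{pro:clusteringprimitive} — we may and do assume every $w\in W$ is primitive, so that the $\omega$-order on the conjugates of a single $w$ agrees with the lexicographic order on $\A_w$.

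\emph{Reduction.} First I would note that $<_{\pi_w}$ is exactly the restriction of $<_\pi$ to $\A_w$: since $\A_w$ carries the order induced from $\A$ and $\pi_w=\pi|_{\A_w}$, we have $a<_{\pi_w}b\iff\pi_w^{-1}(a)<_{\A_w}\pi_w^{-1}(b)\iff\pi^{-1}(a)<_\A\pi^{-1}(b)\iff a<_\pi b$. Hence Theorem~\ref{thm:FHZ23}, applied over the alphabet $\A_w$, says that ``$\LL(w^\omega)$ is ordered alsinic with respect to $<_{\pi_w}$ and $<_{\A_w}$'' is equivalent to ``$w$ is $\pi_w$-clustering for $\A_w$''. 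It therefore suffices to prove that $W$ is $\pi$-clustering if and only if every $w\in W$ is $\pi_w$-clustering for $\A_w$.

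\emph{Both implications.} For ``only if'' I would argue directly on $\ebwt{\A}{W}$: listing the disjoint union of all conjugates of the words of $W$ in $\omega$-order, the conjugates of a fixed $w_0\in W$ form a sub-sequence whose entries involve only letters of $\A_{w_0}$ and on which the $\omega$-order is the lexicographic order on $\A_{w_0}$, so reading off just these last letters recovers $\bwt{\A_{w_0}}{w_0}$; a $\pi$-clustered $\ebwt{\A}{W}$ restricts to a $<_{\pi_{w_0}}$-monotone word, and as each occurrence of a letter of $w_0$ is the last letter of exactly one conjugate of $w_0$ the block lengths are forced, so $w_0$ is $\pi_{w_0}$-clustering. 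For ``if'', each $w\in W$ being $\pi_w$-clustering corresponds to a DIET $T_w$ on $|w|$ points with composition $\Psi_{\A_w}(w)$ and permutation $\pi_w$, whose orbit codings are the conjugates of $w$; assembling, for each letter $a$, a block of $\sum_{w}|w|_a$ points and ordering the blocks by $<_\A$ produces precisely the DIET $T_W$ attached to the composition $\Psi_\A(W)$ of $\sum_w|w|$ and to $\pi$. If one can show that $T_W$ is the disjoint union of the $T_w$, then its orbit codings are exactly the conjugates of the words of $W$, and the DIET–to–multiset-of-Lyndon-words correspondence of Section~\ref{sec:iets} forces $\ebwt{\A}{W}$ to be $\pi$-clustered.

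\emph{Main obstacle.} The delicate point is precisely this decomposition of $T_W$. One must check that the DIET recovered from the Parikh data $(\Psi_\A(W),\pi)$ really breaks up along the sub-alphabets $\A_w$ rather than entangling their orbits; since $\pi_w=\pi|_{\A_w}$ each $\A_w$ is $\pi$-invariant, and the tool is the splitting step $\sigma$ of Section~\ref{sec:splitting}: one organizes the letters so that each $\A_w$ (or each maximal $\pi$-invariant sub-alphabet shared by a group of members) is a contiguous $\pi$-invariant block, applies $\sigma$ repeatedly to peel these blocks off — Lemmata~\ref{lem:wellDefinedSplitting}--\ref{lem:permutationPreservingSplitting} guaranteeing that the induced DIET on each block is well defined, uniquely determined and permutation-preserving — and identifies the block attached to $w$ with $T_w$. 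Once this decomposition is in place, both implications reduce to the single observation that $\omega$-sorting all conjugates of $W$ is the merge of the per-block sorted lists, and that this merge is compatible with the clustering order exactly when each block is, which is the content of the right-hand side.
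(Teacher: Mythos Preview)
The paper does not give a proof of this proposition; it is quoted from \cite{DolceHughes2025} with only the one-line remark that it follows ``by the same argument seen in Section~\ref{sec:iets}'' from Theorem~\ref{thm:FHZ23}. So there is nothing in the paper to compare your argument against beyond that pointer, which your reduction via Theorem~\ref{thm:FHZ23} already matches.

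Your ``only if'' sketch is fine. The gap is exactly where you flagged it, and your proposed fix does not work. You write ``since $\pi_w=\pi|_{\A_w}$ each $\A_w$ is $\pi$-invariant'' and then invoke splitting, but $\pi$-invariance of $\A_w$ is neither assumed nor a consequence of anything available: already in Example~\ref{ex:ebwt}, $W=\{\texttt{aac},\texttt{ab},\texttt{ab}\}$ is $\pi$-clustering for the symmetric permutation on $\{\texttt{a},\texttt{b},\texttt{c}\}$, yet $\A_{\texttt{ab}}=\{\texttt{a},\texttt{b}\}$ is not $\pi$-invariant since $\pi(\texttt{a})=\texttt{c}$. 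So $\pi|_{\A_w}$ need not be a permutation of $\A_w$, the DIET $T_W$ does not in general split along the sub-alphabets $\A_w$, and Lemmata~\ref{lem:wellDefinedSplitting}--\ref{lem:permutationPreservingSplitting} are inapplicable. Worse, the reduced ``if'' statement you are aiming for (each $w$ clustering for the restricted orders $\Rightarrow$ $W$ is $\pi$-clustering) appears to fail outright: over $\A=\{\texttt{a}<\texttt{b}<\texttt{c}<\texttt{d}\}$ with $\pi$ symmetric and $W=\{\texttt{ab},\texttt{cd}\}$, each $\LL(w^\omega)$ is ordered alsinic for the restricted orders, but $\ebwt{\A}{W}=\texttt{badc}\neq\texttt{dcba}$. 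Either an extra hypothesis from \cite{DolceHughes2025} is missing in the version stated here, or the intended meaning of ``$\pi_w$ is the restriction of $\pi$'' is not the order-restriction you (reasonably) use; in either case the DIET-splitting route cannot close the argument as written.
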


The next result follows from the proof in~\cite[Lemma 2]{enumerationneutral}.

\begin{lemma}
\label{lem:regionconnected}
Let $T$ be an IET.
Two letters $a,b \in L(\varepsilon)$ are in the same left connected component
(resp., $a,b \in R(\varepsilon)$ are in the same right connected component)
of $\G(\varepsilon)$
if and only if
$T(I_a)$ and $T(I_b)$
(resp., $I_a$ and $I_b$)
are in the same region.
\end{lemma}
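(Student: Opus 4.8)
The plan is to identify the extension graph $\G_\LL(\varepsilon)$, with $\LL = \LL(T)$, as the incidence graph of the two interval partitions of $I = [\ell, r)$ attached to $T$ --- the domain partition $(I_a)_{a\in\A}$ and the image partition $\bigl(T(I_a)\bigr)_{a\in\A}$ --- and then to apply the description of the connected components of such an incidence graph coming from the proof of~\cite[Lemma 2]{enumerationneutral}.

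First I would spell out the identification. Since each $I_a$ has positive length, every letter belongs to $\LL$, so $L(\varepsilon) = R(\varepsilon) = \A$. For $a,b\in\A$, one has $ab\in\LL$ if and only if some point $x\in I_a$ satisfies $T(x)\in I_b$; as $T$ is a translation on $I_a$, this happens exactly when the half-open intervals $T(I_a)$ and $I_b$ meet. Thus $\G(\varepsilon)$ is the bipartite graph whose left vertices are the cells $T(I_a)$, whose right vertices are the cells $I_b$, and in which $T(I_a)$ and $I_b$ are joined precisely when they overlap. (Here it matters that the intervals are half-open, so that merely abutting cells are not joined.)

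Next I would match up the breakpoints. The breakpoints of the domain partition lying in $(\ell,r)$ are precisely the points of $D(T)$, those of the image partition are precisely the points of $D(T^{-1})$, and a point is a $0$-connection if and only if it lies in $D(T)\cap D(T^{-1})$, i.e., is a common breakpoint of the two partitions; in particular each region is a union of whole cells of both partitions. Hence the regions of $T$ --- the maximal subintervals of $I$ with no $0$-connection in their interior --- are exactly the blocks cut out by the common breakpoints. The structural fact provided by the proof of~\cite[Lemma 2]{enumerationneutral} is that, for two partitions of an interval into subintervals, two cells lie in the same connected component of the incidence graph if and only if they lie in the same such block: within a block the two families of cells interleave with no coinciding breakpoint, so they are all chained together by overlaps, while no cell can overlap a cell of the opposite partition lying across a common breakpoint. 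Applying this with the image partition as the left side and the domain partition as the right side gives both halves of the lemma at once: $a$ and $b$ lie in the same left connected component of $\G(\varepsilon)$ iff $T(I_a)$ and $T(I_b)$ lie in the same region, and in the same right connected component iff $I_a$ and $I_b$ do.

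The statement is essentially a dictionary translation, so I do not expect a genuine obstacle; the only care needed is in the bookkeeping --- the half-open convention, the orientation of the bipartite graph (image cells on the left, domain cells on the right, so that the ``left component'' of $\G(\varepsilon)$ tracks the $T(I_a)$), and the treatment of the endpoints $\ell$ and $r$, which bound the outermost regions although they are excluded from $D(T)$ and $D(T^{-1})$.
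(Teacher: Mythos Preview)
Your proposal is correct and matches the paper's approach: the paper does not give its own proof but simply states that the result follows from the proof of \cite[Lemma~2]{enumerationneutral}, which is precisely the source you invoke and whose argument you unpack. Your dictionary (edges of $\G(\varepsilon)$ correspond to overlaps between image cells $T(I_a)$ and domain cells $I_b$; $0$-connections are the common breakpoints $D(T)\cap D(T^{-1})$; regions are the blocks between them) is exactly the translation needed, and the interleaving/overlap chain argument within a block is the content of the cited lemma.
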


\section{Return Words in IET Languages}
\label{sec:piclusteringsection}

As an application of this induction procedure, we give a proof of the following theorem.
The same result was recently showed, independently, in~\cite{ferenczi2025clustering}, using different tools.

\begin{theorem}
\label{thm:picluster}
Let $T$ be an interval exchange transformation over the alphabet $\A$ with a permutation $\pi \in S_\A$.
All return words in the language of $T$ are $\pi$-clustering.
\end{theorem}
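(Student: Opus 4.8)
The plan is to convert the statement into one lemma about how the extended branching Rauzy steps interact with the clustering property, and then to check that lemma by a case analysis on the step type. Fix $w\in\LL(T)$ and, using Proposition~\ref{pro:inductionChain}, a finite chain $\chi\in\{\rho,\lambda,\sigma\}^*$ for which $T':=\chi(T)$ is the first-return map of $T$ to $I_w$; let $\mathcal{C}$ be the alphabet of $T'$ and let $\Phi\colon\mathcal{C}^*\to\A^*$ be the composition of the step morphisms occurring along $\chi$ (for a $\sigma$ step, the inclusion $\iota$ of the component that is kept). Since the gluing maps are order-preserving translations that do not affect codings, Propositions~\ref{pro:well-behaved-merging} and~\ref{pro:well-behaved-merging-left} and the corresponding statement for $\sigma$ give $\Omega_T(x)=\Phi(\Omega_{T'}(x))$ for every $x\in I_w$. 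Evaluating this at a point of an interval $J_c$ of the partition of $T'$ shows that $\Phi(c)$ is exactly the left return word of $w$ carried by $J_c$; conversely every element of $\R(w)$ arises so (if $uw\in\LL(T)$ has only two occurrences of $w$ it occurs in some trajectory, and $u$ is the return word of the interval of $T'$ met at that occurrence). Hence $\R(w)=\{\Phi(c):c\in\mathcal{C}\}$. Each $c\in\mathcal{C}$ is a single letter, hence trivially clustering for the permutation of $T'$, so iterating along $\chi$ reduces the theorem to the following claim: \emph{if $S\to S^\sharp$ is a step $\rho$, $\lambda$ or $\sigma$ with associated morphism $\varphi$, and $v$ is $\pi^\sharp$-clustering for the alphabet of $S^\sharp$, then $\varphi(v)$ is $\pi$-clustering for the alphabet of $S$}, where $\pi$, $\pi^\sharp$ are the permutations of $S$, $S^\sharp$. (A word that is not pangrammatic is read over the sub-alphabet of its letters with the induced clustering permutation — exactly what the empty-run convention in the definition of clustering records — and by Proposition~\ref{pro:clusteringprimitive} one may assume $v$ primitive wherever convenient.)

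I would prove this lemma in three cases. For a splitting step $\varphi=\iota_{\B,\A}$ (or $\iota_{\overline{\B},\A}$) is the inclusion of a contiguous sub-alphabet with $\pi(\B)=\B$ and $\pi^\sharp=\pi\vert_\B$ (Lemma~\ref{lem:permutationPreservingSplitting}); as $v$ uses only letters of $\B$ and the order of $\B$ is inherited, its conjugates sort identically over $\B$ and over $\A$, so $\bwt{\A}{v}=\bwt{\B}{v}$, and since $\B$ is contiguous this already has the $\pi$-clustered form over $\A$ (the $\A\setminus\B$ runs being empty). For a right merging step one has $\A^\sharp=\A\setminus\{a_k\}$, $\varphi$ sending $p:=\pi(a_k)\mapsto p\,a_k$ and fixing every other letter, and $\pi^\sharp=\pi'$ as in~\eqref{eq:pi'deg}; here I compute $\bwt{\A}{\varphi(v)}$ directly. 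Since $a_k$ is the largest letter of $\A$ and each of its occurrences in $\varphi(v)$ directly follows a $p$, the conjugates of $\varphi(v)$ beginning with $a_k$ are the largest ones and each ends in $p$, contributing a terminal block $p^{|v|_p}$; the remaining conjugates are precisely the $\varphi$-images of the conjugates of $v$, and because $\varphi$ preserves the lexicographic order (it only inserts $a_k$ after $p$) they sort in the same order, so passing to last letters turns $\bwt{\A^\sharp}{v}$ into the word obtained from it by replacing every $p$ with $a_k$. Writing $\bwt{\A^\sharp}{v}$ in its $\pi'$-clustered form and inserting the explicit description~\eqref{eq:pi'deg} of $\pi'$ (where $\pi$ sends $\pi^{-1}(a_k)$ to $a_k$ while $\pi'$ sends it to $p$, and $\pi'=\pi$ otherwise) identifies the result with the $\pi$-clustered form of $\varphi(v)$ over $\A$; the left merging case is symmetric, or reduces to this one via Remark~\ref{rmk:moving}. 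Finally, for the non-degenerate $\rho$, $\lambda$ steps $\varphi$ is one of $\alpha_{a,b}$, $\tilde{\alpha}_{a,b}$ (possibly with the relabeling of~\eqref{eq:pipik}/\eqref{eq:pipikL}) and $\pi^\sharp$ is~\eqref{eq:pik}/\eqref{eq:pikL}: this is the situation of classical two-sided Rauzy induction and the core of the argument in~\cite{DolceHughes2025}. I would reprove it through Theorem~\ref{thm:FHZ23}, tracking how $\varphi$ carries the bispecial factors of $\LL(v^\omega)$ and their extension graphs onto those of $\LL(\varphi(v)^\omega)$ — each image bispecial is a boundary-trimmed $\varphi$-image of a bispecial of $\LL(v^\omega)$, together with finitely many new ones sitting at the cut $ab$ — and verifying that being a forest and being compatible with $(<_\pi,<_\A)$ are inherited, once one checks that $(<_\pi,<_\A)$ is the correct image of $(<_{\pi^\sharp},<_{\A^\sharp})$ under the relabeling carried by the step. (Alternatively, for pangrammatic $v$: a primitive $\pi^\sharp$-clustering word is the periodic coding of the discrete interval exchange with length vector $\Psi(v)$ and permutation $\pi^\sharp$, which has circular integer permutation; un-inducing this DIET across the step yields a DIET with permutation $\pi$, still with circular integer permutation, whose periodic coding is a conjugate of $\varphi(v)$.)

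The main obstacle is the non-degenerate case. What is delicate there is not the existence of the step morphisms, which Section~\ref{sec:Rauzy} provides, but the combinatorial bookkeeping: one must show that the finitely many bispecial factors newly created at a cut have forest extension graphs compatible with $(<_\pi,<_\A)$, and that the two orders push forward correctly under the relabeling — equivalently, that un-inducing a Rauzy step preserves the circularity of the integer permutation of a discrete interval exchange. A secondary point to handle with care is that the intermediate words $\varphi_j\circ\cdots\circ\varphi_m(c)$ along $\chi$ are typically not pangrammatic, so one must systematically work with the sub-alphabet of occurring letters and the induced clustering permutation; this is automatic from the definition but needs to be stated explicitly to keep the induction honest.
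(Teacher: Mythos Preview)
Your proposal is correct and follows the paper's strategy: reduce to a finite chain of extended Rauzy steps via Proposition~\ref{pro:inductionChain}, identify the return words to $w$ as the images of the letters of $\chi(T)$ under the composed step morphism, and show that each step morphism carries clustering forward with the permutation of the pre-step IET. Your treatment of splitting and merging matches Lemma~\ref{lem:clustering} cases~(6)--(7), and your direct BWT computation for the merging step is correct and is essentially the paper's argument.

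The one substantive difference is the non-degenerate $\rho$/$\lambda$ case. You propose to go through Theorem~\ref{thm:FHZ23} by tracking how $\varphi$ transports bispecial factors and their extension graphs from $\LL(v^\omega)$ to $\LL(\varphi(v)^\omega)$, or alternatively through circularity of the integer permutation of an un-induced DIET, and you rightly flag the resulting bookkeeping as the main obstacle. The paper avoids this entirely: it handles the non-degenerate steps by the \emph{same} direct BWT computation you carried out for merging (Lemma~\ref{lem:clustering}, cases~(2)--(5)), writing down how the conjugates of $\alpha_{a,b}(v)$ or $\tilde\alpha_{a,b}(v)$ sort and reading off where the new run splices into the clustered form. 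This is shorter and sidesteps the extension-graph analysis altogether. On the other hand, your stepwise lemma is stated more sharply than the paper's --- you assert directly that $\pi^\sharp$-clustering implies $\pi$-clustering for the permutation $\pi$ of the pre-step IET --- whereas the paper first obtains $\pi'$-clustering for some $\pi'$ from Lemma~\ref{lem:clustering} and then invokes Lemma~\ref{lem:clustering1} through Remark~\ref{rem:regToOrdAls} to conclude that $\pi'=\pi$.
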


Before handling the general case, in which we allow our IET to have irregular components, we show the following lemma.

\begin{lemma}
\label{lem:clustering1}
Let $T$ be a regular $n$-IET, $n \ge 1$, with associated permutation $\pi$.
Let $\LL(T)$ be its language.
All return words $w \in \LL(T)$ are $\pi$-clustering.
\end{lemma}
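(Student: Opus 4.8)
The plan is to induct on the length of the extended branching Rauzy chain that produces the first-return map to the relevant cylinder, reducing the statement to a single-step lemma saying that the morphism attached to one non-degenerate Rauzy step turns a clustering word into a clustering word, while transforming the permutation as prescribed by the step.

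Fix $w \in \LL(T)$. Since $T$ is regular, \cite[Theorem~4.3]{branching} (the regular case of Proposition~\ref{pro:inductionChain}) yields a chain $\chi = s_1 \circ \cdots \circ s_m$ with all $s_i \in \{\rho,\lambda\}$ non-degenerate and $\chi(T)$ the first-return map of $T$ to $I_w$; each intermediate $T_i = s_i(T_{i-1})$ (with $T_0 = T$) is again a regular $n$-IET with some permutation $\pi_i$ on a reordered copy of $\A$, and $s_i$ carries a morphism $\varphi_i$ with $\Omega_{T_{i-1}}(x) = \varphi_i\bigl(\Omega_{T_i}(x)\bigr)$ on $\mathrm{dom}(T_i)$. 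Setting $\Phi = \varphi_1 \circ \cdots \circ \varphi_m$, one gets $\Omega_T(x) = \Phi\bigl(\Omega_{\chi(T)}(x)\bigr)$ for $x \in I_w$, and reading off the first letter of $\Omega_{\chi(T)}(x)$ identifies $\R_{\LL(T)}(w)$ with $\{\Phi(a) : a \in \A_m\}$, where $\A_m$ is the alphabet of $\chi(T)$ (and $\R(\varepsilon) = \A \cup \{\varepsilon\}$ when the chain is empty). I would induct on $m$. The case $m = 0$ forces $w = \varepsilon$, and then each return word is $\varepsilon$ or a single letter $a$, which is $\pi$-clustering since $\bwt{\A}{a} = a$ and $|a|_{\pi(a_j)}$ equals $1$ if $\pi(a_j) = a$ and $0$ otherwise. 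For $m \ge 1$, I pass to $T_1 = s_1(T)$, where (by Lemma~\ref{lem:extremeInclusion} the step $s_1$ is chosen so that) $I_w$ is a cylinder $I_{w_1}$ of $T_1$; the tail $s_2 \circ \cdots \circ s_m$ produces the first-return map of $T_1$ to $I_{w_1}$ and has length $m-1$, and since $T_1$ is the first-return map of $T$ to a superinterval of $I_w$ we get $\R_{\LL(T)}(w) = \varphi_1\bigl(\R_{\LL(T_1)}(w_1)\bigr)$. By the induction hypothesis every word of $\R_{\LL(T_1)}(w_1)$ is $\pi_1$-clustering, so it remains to prove: if $u$ is $\pi_1$-clustering for $\A_1$ and $\varphi$ is the morphism attached to a non-degenerate right or left Rauzy step sending an IET with ordered alphabet $\A$ and permutation $\pi$ to one with ordered alphabet $\A_1$ and permutation $\pi_1$, then $\varphi(u)$ is $\pi$-clustering for $\A$.

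For this single-step lemma I would invoke Theorem~\ref{thm:FHZ23}: since $\varphi(u)^\omega = \varphi(u^\omega)$, it is enough to show that $\LL(\varphi(u^\omega))$ is ordered alsinic for $<_\pi$ and $<_\A$, knowing that $\LL(u^\omega)$ is ordered alsinic for $<_{\pi_1}$ and $<_{\A_1}$. As $\varphi$ is an elementary morphism (of the $\alpha_{a,b}$/$\tilde\alpha_{a,b}$ type of Section~\ref{sec:morphisms}, possibly post-composed with a relabeling of the ordered alphabet), the standard desubstitution description of the factors of $\varphi(u^\omega)$ shows that each bispecial factor of $\LL(\varphi(u^\omega))$ arises from a bispecial factor of $\LL(u^\omega)$ by applying $\varphi$ and trimming boundary letters, with the two extension graphs isomorphic as bipartite graphs via the letter relabeling induced by $\varphi$; non-bispecial factors have a tree extension graph and pose no constraint. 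One then checks, case by case on the four types of non-degenerate step (Equations~\eqref{eq:pik},~\eqref{eq:pipik},~\eqref{eq:pikL},~\eqref{eq:pipikL}), that this relabeling carries $(<_{\pi_1}, <_{\A_1})$ to $(<_\pi, <_\A)$, so that compatibility of each $\G_{\LL(u^\omega)}(\tilde v)$ transfers to compatibility of $\G_{\LL(\varphi(u^\omega))}(v)$. Alternatively one can argue geometrically: by Proposition~\ref{pro:clusteringprimitive} reduce to $u$ primitive, realize $u$ as the coding of the periodic orbit of the DIET with composition $\Psi_{\A_1}(u)$ and permutation $\pi_1$ (whose action is then circular), observe that $\varphi$ inverts the chosen Rauzy step at the level of codings, and deduce that $\varphi(u)$ codes the periodic orbit of the DIET with composition $\Psi_\A(\varphi(u))$ and permutation $\pi$, again with circular action, hence is $\pi$-clustering.

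The hard part is exactly this single-step lemma: matching up the bispecial factors of $\LL(\varphi(u^\omega))$ with those of $\LL(u^\omega)$, and verifying that the vertex relabeling induced by each of the four Rauzy-step morphisms is order-compatible in the required sense. The surrounding induction — the chain scaffolding, the identification $\R_{\LL(T)}(w) = \Phi(\A_m)$, and the trivial base case — is routine once the machinery of Section~\ref{sec:Rauzy} is in place.
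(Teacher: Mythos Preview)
Your inductive approach along the Rauzy chain is correct in outline but takes a much heavier route than the paper, which uses no induction at all for this lemma. Since $T$ is regular, $\LL(T)$ is already ordered dendric (hence ordered alsinic) for $<_\pi$ and $<_\A$; the paper simply checks that for a return word $w$ to some $u\in\LL(T)$ the language $\LL(w^\omega)$ inherits this, via a short case split on a factor $v$ of $w^\omega$: if $|v|\ge|w|$ then $v$ has unique left and right extensions in $\LL(w^\omega)$ and its extension graph is a single edge, while if $|v|<|w|$ then $v$ together with its one-letter extensions already lies in $\LL(T)$ (as a factor of $uw$ or of $w^2$), so its extension graph in $\LL(w^\omega)$ is a subgraph of one compatible with $<_\pi,<_\A$; Theorem~\ref{thm:FHZ23} then gives $\pi$-clustering directly. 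Your scheme is essentially what the paper uses later for the general Theorem~\ref{thm:picluster} (your single-step lemma is Lemma~\ref{lem:clustering} restricted to non-degenerate steps), so you are doing more work than needed here. Two caveats if you carry it out: you should justify that $I_w$ is again a cylinder of $T_1$ (or reframe the induction over admissible intervals), and your single-step lemma must identify the output permutation as exactly $\pi$ rather than merely some $\pi'$---in the paper's architecture it is precisely the elementary proof of Lemma~\ref{lem:clustering1} (via Remark~\ref{rem:regToOrdAls}) that closes this $\pi'=\pi$ gap in the non-regular case, a structural role your inductive version would not serve.
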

\begin{proof}
Let $u \in \LL(T)$ and $w \in \R(u) \subset \LL(T)$.
We handle the following cases for $v \in \LL(w^{\omega})$ using Theorem~\ref{thm:FHZ23}.
We denote the length $|w|$ of the word $w$ by $r$.
Note that, since $T$ is regular, $\LL(T)$ is uniformly recurrent ordered alsinic for the orders $<_{\A}$ and $<_{\pi}$.

First, suppose $|u| \geq |w|$.
\begin{enumerate}
\item
Let $|v| \ge |w|$.
Then $v = swp$ for a suffix $s$ of $w$ and a prefix $p$ of $w$.
In the language $\LL(w^{\omega})$, the word $v$ can be extended on the left and on the right by a single letter in a unique way. Consequently, $v$ is neither left-special nor right-special in $\LL(w^{\omega})$.
As a result, the extension graph of $v$ trivially satisfies the ordered alsinic condition, since it only contains one edge between two vertices.

\item
Now suppose that $|v| < |w|$.
Then, either $v$ straddles two occurrences of $w$, or $v$ is a factor of $w$.
The second case is trivial, since $v$ rests in $\LL(T)$ by definition.
In the former case, $v \in \LL(w^2) \setminus \LL(w) \subset \LL(uw) \subset \LL(T)$.
Thus, $v \in \LL(T)$, and the extension graphs of all factors of $\LL(T)$ are compatible w.r.t. the orders $<_A$ and $<_{\pi}$ as a consequence of the regularity of $T$.
Hence, the extension graph of $v$ is ordered alsinic with respect to $<_A$ and $<_{\pi}$.
\end{enumerate}

Now suppose $|u| < |w|$.
\begin{enumerate}
\item
Let $|v| \geq |w|$.
The argument is identical to that in the first case with $|u| \ge |w|$ and $|v| \geq |w|$, and the ordered alsinicity of the extension graph of $v$ with respect to the orders $<_\A$ and $<_{\pi}$ follows.

\item
Finally, suppose $|v| < |w|$.
If $v \in \LL(w) \cup \LL(u) \subset \LL$, then we immediately obtain the compatibility of the extension graph of $v$ from the ordered alsinicity of $\LL$.
Otherwise, if $v \in \LL(ww) \setminus \LL(w)$, then we can write $v = sup$.
In such a case, by a similar argument as above, the word $v$ can be uniquely extended on the left by one letter and uniquely extended on the right by one letter.
Consequently, we obtain the ordered alsinicity of the extension graph of $v$ with respect to the desired orders via a symmetrical argument to that of the first case.
\end{enumerate}

As a consequence of the above cases and Theorem~\ref{thm:FHZ23}, all return words in the language of a regular IET with associated permutation $\pi$ are $\pi$-clustering in the Burrows-Wheeler sense.
\end{proof}

\begin{remark}
\label{rem:regToOrdAls}
Note that in the proof of Lemma~\ref{lem:clustering1}, the only property of $T$ used is that $\LL(T)$ is ordered alsinic with respect to $<_{\A}$ and $<_{\pi}$.
In particular, the statement of Lemma~\ref{lem:clustering1} remains valid if one replaces "regular IET" with "IET whose language is ordered alsinic with respect to $<_{\A}$ and $<_{\pi}$".
\end{remark}

Let us recall the morphisms
$\alpha_{a,b}, \tilde{\alpha}_{a,b} :\A^* \to \A^*$
defined in Section~\ref{sec:morphisms} as
$$
    \alpha_{a,b} =
    \begin{cases}
        a \mapsto a b \\
        c \mapsto c, & c \neq a
    \end{cases}
    \qquad \text{and} \qquad
    \tilde{\alpha}_{a,b} =
    \begin{cases}
        a \mapsto b a \\
        c \mapsto c, & c \neq a
    \end{cases},
$$
with $a,b \in \A$,
as well as the inclusions
$\iota_{\B, \A} : \B^* \rightarrow \A^*$
and
$\iota_{\overline{\B}, \A} :\overline{\B}^* \rightarrow \A^*$
defined in Section~\ref{sec:splitting}, with $\B \subset \A$.

We use these morphisms to step back from $I_w$ to $[\ell, r)$.
Recall that these morphims are $\alpha_{a,b}: a \mapsto ab$ and $\tilde{\alpha}_{a,b}: a \mapsto ba$, which fix all other letters, as well as the inclusion $\iota_{\A, \B}: \A^* \rightarrow (\A \sqcup \B)^*$.
We begin by showing that primitivity is preserved by these morphisms.

\begin{lemma}
\label{lem:primitive}
Let $w$ be a primitive word over an ordered alphabet $\A$.
Let $a \in \A$ and $b \in \A \cup \B$, with $\B$ an alphabet disjoint from $\A$.
For every
$\varphi \in S_\A \cup \{ \alpha_{a, b}, \tilde{\alpha}_{a,b}, \iota_{\A,\B} \}$ the word $\varphi(w)$ is primitive.
\end{lemma}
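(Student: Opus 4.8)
The plan is to treat the four kinds of morphism separately, disposing of the permutations and of the inclusion at once and then concentrating on $\alpha_{a,b}$, from which the case of $\tilde{\alpha}_{a,b}$ will follow by a reversal. If $\varphi\in S_\A$, then the morphism it induces on $\A^*$ is a letter-to-letter bijection whose inverse is the morphism induced by $\varphi^{-1}$; hence $\varphi(w)=u^k$ with $k\ge 2$ would give $w=\varphi^{-1}(u)^k$ with $\varphi^{-1}(u)\neq\varepsilon$, contradicting the primitivity of $w$. If $\varphi=\iota_{\A,\B}$, then $\varphi(w)$ is literally the word $w$ regarded inside a larger free monoid, and being a proper power is an intrinsic property of a word that does not depend on the ambient alphabet; so $\iota_{\A,\B}(w)$ is primitive exactly when $w$ is. It therefore remains to handle the two insertion morphisms, and by symmetry it is enough to treat $\alpha:=\alpha_{a,b}$.

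First I would record two facts about $\alpha$. (i) The morphism $\alpha$ is injective: the images $\alpha(a)=ab$ and $\alpha(c)=c$ (for $c\neq a$) have pairwise distinct first letters, so they form a prefix code and every word in the image of $\alpha$ has a unique preimage. (ii) A word $z$ over $\A\cup\{b\}$ lies in the image of $\alpha$ if and only if every occurrence of $a$ in $z$ is immediately followed by an occurrence of $b$; in that case $z$ does not end with $a$, and the preimage is recovered by the greedy left-to-right parsing that consumes the block $ab$ whenever it meets an $a$ and consumes a single letter otherwise, so the positions of the block boundaries of this parsing are uniquely determined.

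Now suppose, for contradiction, that $w$ is primitive but $\alpha(w)=u^k$ with $u\neq\varepsilon$ and $k\ge 2$. If $w$ contains no $a$, then $\alpha(w)=w$, so $w=u^k$ is not primitive, a contradiction; hence $w$, and therefore $\alpha(w)=u^k$, contains $a$. Since $\alpha(w)$ lies in the image of $\alpha$ it does not end with $a$, so $u$ does not end with $a$ either. I claim that each copy boundary $j|u|$, with $0\le j\le k$, is a block boundary of the parsing of $\alpha(w)$: this is clear for $j=0$ and $j=k$, and if an internal copy boundary $j|u|$ fell strictly inside a parsing block then that block would be an occurrence of $ab$ covering the positions $j|u|-1$ and $j|u|$, forcing the last letter of $u$ (the letter at position $j|u|-1$) to be $a$, a contradiction. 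Thus the block decomposition of $\alpha(w)$ refines its factorization into $k$ copies of $u$, so $u$ itself is a concatenation of parsing blocks, i.e., $u=\alpha(v)$ for a unique nonempty word $v$ over $\A$; then $\alpha(w)=u^k=\alpha(v)^k=\alpha(v^k)$, and the injectivity of $\alpha$ forces $w=v^k$ with $k\ge 2$, contradicting the primitivity of $w$.

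Finally, for $\tilde{\alpha}:=\tilde{\alpha}_{a,b}$ one uses the identity $\tilde{\alpha}(w)=\bigl(\alpha(w^{R})\bigr)^{R}$, where $x^{R}$ denotes the reversal of a word $x$: reversing $w$, inserting a $b$ after each $a$, and reversing again inserts a $b$ before each $a$ while restoring the original order of the letters. Since $x$ is primitive if and only if $x^{R}$ is, the case of $\alpha$ just established shows that $\tilde{\alpha}(w)$ is primitive as well. The heart of the proof is the case of $\alpha_{a,b}$, and in particular the observation that the copy boundaries of $u^k$ must be parsing boundaries; this is what lets us pull the power back through the injective morphism $\alpha$.
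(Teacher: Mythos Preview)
Your proof is correct and follows essentially the same approach as the paper: in both cases the permutation and inclusion are dispatched immediately, and for $\alpha_{a,b}$ the key observation is that the putative period of $\alpha_{a,b}(w)$ cannot end in $a$ and therefore lies in the image of $\alpha_{a,b}$, allowing one to pull the power back to $w$. Your presentation is slightly cleaner---you phrase the argument in terms of a prefix code and parsing boundaries, and you handle $\tilde{\alpha}_{a,b}$ via the reversal identity $\tilde{\alpha}_{a,b}(w)=\bigl(\alpha_{a,b}(w^{R})\bigr)^{R}$ rather than by redoing the argument ``symmetrically''---but the underlying idea is the same.
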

\begin{proof}
Let $u = \varphi(w)$.
\begin{enumerate} 
\item
We first consider the case $\varphi \in S_\A$.
Let $u = v^d$ for some primitive word $v = a_1 \ldots a_m \in \A^*$.
Set $v' = \varphi^{-1}(a_1) \cdots \varphi^{-1}(a_m)$.
After applying the inverse permutation $\varphi^{-1} \in S_\A$, we obtain $w = v'^{d}$.
By the primitivity of $w$, we find that $d=1$.

\item Suppose now that $\varphi = \alpha_{a, b}$.
We first consider the case where $b \in \A$.
Let $u = v^d$ for some primitive word $v \in \A^*$.
We claim that $v$ cannot end with the letter $a$, since otherwise $u$ would also end with $a$, which is impossible, since by construction every $a$ in $u = \alpha_{a,b}(w)$ is followed by the letter $b$.
Therefore, $v = v_1 ab v_2 ab \cdots v_m c$, where $c \in \A \setminus \{ a \}$ and $v_i \in (\A \setminus \{ a \})^{*}$ for every $i$.
Let $v' = v_1 a v_2 \cdots a v_m$.
We have $v = \alpha_{a,b}(v')$.
Consequently, $w = v'^d$, and thus, by primitivity of $w$, $d=1$.

Now suppose that $b \notin \A$.
Let $u = v^d$ for some primitive word $v \in \A^*$.
By the same argument as above, we can write
$v = v_1 ab v_2 ab \cdots v_m c$ with $c \in \A \setminus \{ a \}$ and $v_i \in ((\A \setminus \{a \}) \cup \{ b \})^*$.
Since in $u = \alpha_{a,b}(w)$ every occurrence of $b$ is, by construction, preceded by the letter $a$, we have $v_i \in (\A \setminus \{a \})^*$.
We then conclude as above.

\item
The case $u = \tilde{\alpha}_{a,b}(w)$ is proved in a symmetric way as in point (2).

\item
The case $u = \iota_{\A,\B}$ is immediate.
\end{enumerate}
\end{proof}

We now show that the property of being clustered is preserved by the morphisms defined in Section ~\ref{sec:Rauzy}.
\begin{lemma}
\label{lem:clustering}
Let $w$ be a primitive word over $\A = \{a_1 < a_2 < \ldots < a_k \}$.
Suppose $w$ is $\pi$-clustering on $\A$ for some permutation $\pi \in S_\A$.
\begin{enumerate}
\item If $\mu \in S_\A$, then $\mu(w) \in \A'^*$ is $\pi'$-clustering, with $\pi' \in S_{\A'}$ and $\A' = \{ \mu(a_1) < \ldots < \mu(a_k) \}$.

\item If $b = a_1$, $\pi^{-1}(a) = a_i$, and $\pi^{-1}(b) = a_{i+1}$ for a certain $1 \le i < k$, then $\alpha_{a,b}(w) \in \A^*$ is $\pi'$-clustering, with $\pi' \in S_\A$.

\item If $b = a_k$, $\pi^{-1}(b) = a_i$ and $\pi^{-1}(a) = a_{i+1}$ for a certain $1 \le i < k$, then $\alpha_{a,b}(w) \in \A^*$ is $\pi'$-clustering, with $\pi' \in S_\A$.

\item If $\pi^{-1}(b) = a_1$, $a=a_i$ and $b=a_{i+1}
$ for a certain $1 \le i < k$, then $\tilde{\alpha}_{a,b}(w) \in \A'^*$ is $\pi'$-clustering, with $\A' = \{a_i < a_1 < \ldots < a_{i-1} < a_{i+1} < \ldots < a_k \}$ and $\pi' \in S_{\A'}$.

\item If $\pi^{-1}(b) = a_k$, $b=a_i$ and $a=a_{i+1}$ for a certain $1 \le i < k$, then $\tilde{\alpha}_{a,b}(w) \in \A'^*$ is $\pi'$-clustering, with $\A' = \{a_1 < \ldots < a_i < a_{i+2} < \ldots < a_k < a_{i+1} \}$ and $\pi' \in S_{\A'}$.

\item If $b \notin \A$, then $\alpha_{a,b}(w) \in \A'^*$ is $\pi'$-clustering, with $\A' = \{ a_1 < \ldots < a_k < b \}$ and $\pi' \in S_{\A'}$;
is it also $\pi''$-clustering with $\pi'' \in S_{\A''}$ when considered as word over $\A'' = \{ b < a_1 < \ldots < a_k \}$.

\item If $\B = \{ b_1 < \ldots < b_{h} \}$ is disjoint from $\A$, then $\iota_{\A,\B}(w) \in \A'^*$ is $\pi'$-clustering, with $\A' = \{ a_1 < \ldots < a_k < b_1 < \ldots < b_{h} \}$ and $\pi' \in S_{\A'}$.
\end{enumerate}
\end{lemma}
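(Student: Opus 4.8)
The plan is to derive every item from Theorem~\ref{thm:FHZ23}. Since $w$ is $\pi$-clustering, $\LL(w^\omega)$ is ordered alsinic for $<_\pi$ and $<_\A$, and by Lemma~\ref{lem:primitive} the word $\varphi(w)$ is again primitive for each of the morphisms $\varphi$ in the statement; so it suffices to show, case by case, that $\LL(\varphi(w)^\omega)$ is ordered alsinic for the pair of orders $<_{\pi'}$, $<_{\A'}$ named there, and then to read off $\pi'$. I would stress at the outset that the seven items are exactly the seven induction steps of Section~\ref{sec:Rauzy} run in reverse: item~(1) is a (possibly circular) relabeling, items~(2)--(5) are the four shapes of a non-degenerate left/right Rauzy step (cf.\ Equations~\eqref{eq:pik}, \eqref{eq:pipik}, \eqref{eq:pikL}, \eqref{eq:pipikL}), item~(6) is the degenerate merging step (Equation~\eqref{eq:pi'deg}, with its two admissible placements of the freshly created letter), and item~(7) is the splitting morphism $\iota_{\A,\B}$. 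The permutation $\pi'$ produced below will be the permutation that Section~\ref{sec:Rauzy} attaches to the corresponding step, which is what the proof of Theorem~\ref{thm:picluster} will later need.

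Items~(1) and~(7) are formal. In~(1), $\mu$ together with the reordering $\A' = \{\mu(a_1) < \cdots < \mu(a_k)\}$ is an order-isomorphism of words; it carries $\LL(w^\omega)$ to $\LL(\mu(w)^\omega)$ and every extension graph to an isomorphic one, so ordered alsinicity transfers verbatim with $\pi' = \mu\pi\mu^{-1}$ (equivalently, $\bwt{\A'}{\mu(w)}$ is $\bwt{\A}{w}$ with letters relabeled by $\mu$, which does not change the run structure). In~(7), $\iota_{\A,\B}(w) = w$ uses no letter of $\B$, so $\LL(\iota_{\A,\B}(w)^\omega) = \LL(w^\omega)$ as a set of words and no extension graph gains a vertex; ordered alsinicity for $<_\pi,<_\A$ is untouched, and one may take $\pi'$ to restrict to $\pi$ on $\A$ and to be arbitrary on $\B$ (equivalently, $\bwt{\A'}{w}=\bwt{\A}{w}$, the absent letters of $\B$ contributing only empty runs).

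The content is in items~(2)--(6), where $\varphi \in \{\alpha_{a,b},\tilde\alpha_{a,b}\}$ replaces the single letter $a$ by a length-two block ($ab$ for $\alpha_{a,b}$, $ba$ for $\tilde\alpha_{a,b}$) and fixes every other letter. The first step would be to set up the dictionary between the two languages: since each image $\varphi(\text{letter})$ has length $1$ or $2$, every factor of $\varphi(w)^\omega = \varphi(w^\omega)$ is obtained from the $\varphi$-image of a factor of $w^\omega$ by deleting at most one letter from each end, and conversely; moreover $b$ — and, in item~(6), also $a$ — occurs in $\varphi(w^\omega)$ only inside the fixed pattern $ab$ (resp.\ $ba$), so it enters the extension graphs only in the position that pattern forces. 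Using this one expresses $\G_{\LL(\varphi(w)^\omega)}(\varphi(v))$ in terms of $\G_{\LL(w^\omega)}(v)$, and observes that every bispecial factor of $\LL(\varphi(w)^\omega)$ that is not a $\varphi$-image of a bispecial factor of $\LL(w^\omega)$ is short, with a one-edge extension graph; hence the only extension graphs that could violate ordered alsinicity are the $\varphi$-images of the finitely many bispecial factors of $w^\omega$. For those, one must check that the reordering of $\A$ and of the $\pi$-order dictated by the matching Rauzy/merging step carries a planar embedding for $(<_\pi,<_\A)$ to a planar embedding for $(<_{\pi'},<_{\A'})$ — the hypotheses that $b$ (resp.\ $\pi^{-1}(b)$) is extremal in the $\A$-order (resp.\ in the $\pi$-order) and that $a$ and $b$ are consecutive in that order are exactly what let the edge(s) involving $b$ sit at an extremity without crossing an existing edge. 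Equivalently, one verifies that the run letters of $\bwt{\A'}{\varphi(w)}$ are pairwise distinct: for items~(1), (6), (7) this is an immediate computation — in item~(6), for instance, the conjugates of $\alpha_{a,b}(w)$ that begin at the brand-new letter $b$ form a single contiguous block at one end of $\bwt{\A'}{\varphi(w)}$ contributing the run $a^{|w|_a}$, while the remaining entries are those of $\bwt{\A}{w}$ with each $a$ rewritten to $b$ — whereas items~(2)--(5) require the same verification pushed through the interleaving of those new conjugates with the $\varphi$-images that begin with $b$.

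The main obstacle is precisely this verification in items~(2)--(5): after un-inducing one non-degenerate Rauzy step, confirming that every extension graph of $\LL(\varphi(w)^\omega)$ is still planar for the prescribed pair of orders — equivalently, that $\bwt{\A'}{\varphi(w)}$ acquires no run with a repeated letter. This is where the exact hypotheses do their work ("$b=a_1$"/"$b=a_k$" on the $R$-side, "$\pi^{-1}(b)=a_1$"/"$\pi^{-1}(b)=a_k$" on the $L$-side, together with the consecutiveness of $a$ and $b$): they force the single modification caused by $\varphi$ to happen at the boundary of the existing structure rather than in its interior, where it would split a run and create a crossing. Discharging each of the six sub-cases — routine but not automatic, and symmetric between $\alpha_{a,b}$ and $\tilde\alpha_{a,b}$ and between the two ends — is the heart of the proof, after which Theorem~\ref{thm:FHZ23} and Lemma~\ref{lem:primitive} give the conclusion.
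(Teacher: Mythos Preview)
Your route is genuinely different from the paper's. The paper argues each item directly on the Burrows--Wheeler transform: for~(1) and~(7) it observes that relabeling (resp.\ adding absent letters) does not change the run structure of $\bwt{\A}{w}$; for~(2)--(5) it locates the run of $a$'s and the run of $b$'s in $\bwt{\A}{w}$, uses the adjacency hypotheses to see they are consecutive, and argues that the substitution $a\mapsto ab$ (resp.\ $a\mapsto ba$) merely enlarges or shifts one of these runs without breaking any other; for~(6) it appends the fresh letter $b$ at an end of the order and notes the new $b$'s form a single new run adjacent to the run of $a$'s. No extension graphs enter, and Theorem~\ref{thm:FHZ23} is not invoked.

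Your approach via ordered alsinicity and Theorem~\ref{thm:FHZ23} is conceptually attractive because it makes explicit that the seven items are the reverse Rauzy/merging/splitting steps and that $\pi'$ is precisely the step permutation from Section~\ref{sec:Rauzy}; this is exactly the structural fact the proof of Theorem~\ref{thm:picluster} needs. The cost is that the bookkeeping for items~(2)--(5) is heavier than you suggest: your claim that every bispecial factor of $\LL(\varphi(w)^\omega)$ not of the form $\varphi(v)$ has a one-edge extension graph is not literally true (such factors can differ from a $\varphi$-image by a single boundary letter and still be genuinely bispecial), so one must instead track how the left and right extension sets transform under the boundary trimming, and check planarity there as well. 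This is doable, but it is several honest case analyses rather than a single observation. The paper's direct BWT argument sidesteps this by never leaving the level of runs, at the price of being terse about why, say in~(2), inserting a $b$ after each $a$ does not disturb the relative order of the other conjugates --- a point your ``interleaving of those new conjugates'' remark correctly flags as the crux. Either approach works; the paper's is shorter to write, yours explains better why the hypotheses are exactly what they are.
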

\begin{proof}
We proceed by addressing each case separately.
\begin{enumerate}
\item
The permutation $\mu$ sends each letter $a_i$ to $\mu(a_i)$.
Since $w$ is $\pi$-clustering, $\bwt{\A}{w}$ consists of contiguous runs - possibly of length $0$ - of each letter of $\A$.
Since an application of $\mu$ to $w$ simply amounts to letter renaming, we immediately obtain clustering of $\mu(w)$.
Defining $\pi' = \mu \circ \pi \circ \mu^{-1} \in S_{\A'}$, via elementary permutation properties we see that $w$ is $\pi'$-clustering.

\item
Define $\pi' \in S_{\A}$ by modifying $\pi$ such that $\pi^{-1}(a)$ and $\pi^{-1}(b)$ are adjacent in the cycle.
If $\pi^{-1}(a) = a_i$ and $\pi^{-1}(b) = a_{i+1}$, then the runs $\pi(a)$ and $\pi(b)$ appear consecutively in $\bwt{\A}{w}$.
Replacing each $a$ by $ab$ in $w$ connects the runs $\pi(a)$ and $\pi(b)$ into a single-run adjacency in $\bwt{\A}{\alpha_{a,b} (w)}$.
Thus, $\alpha_{a,b}(w)$ is $\pi'$-clustering.

\item
This proof is symmetrical to that of the case (2).

\item
Let $a = a_i$ and $b = a_{i+1}$, $1 \le i \le k$.
The letter $b$ is the first letter of the permutation $\pi$, so $b = a_{\pi(1)}$ is the first letter in $\bwt{\A}{w}$.
The morphism $\tilde{\alpha}_{a,b}$ sends $a$ to $ba$, so each new $b$ is created inside the existing run of $a$'s directly preceding it.
We now reorder the alphabet, obtaining
$\A' = \{a_i < a_1 < \ldots < a_{i-1} < a_{i+1} < \ldots < a_k\}$,
such that $a$ becomes the smallest letter in the ordering of $\A'$.
With this order the runs appear consecutively, and we obtain $\pi' \in S_{\A'}$ by cyclically rotating $\pi$ in the same fashion.
Thus, $\tilde{\alpha}_{a,b}$ is $\pi'$-clustering.

\item
This case is symmetric to that case (4).
In this case, $b$ is the letter appearing in the last run of the $\bwt{\A}{w}$.
A reordered alphabet
$\A' = \{a_1 < \ldots < a_i <a_{i+2} < \ldots < a_k < a_{i+1} \}$ 
moves $a$ to the end, resulting in a permutation $\pi'$ that keeps the two new adjacent runs together.
Clustering follows.

\item
Let $b$ be the largest letter of the new alphabet
$\A' = \A \cup \{ b \}$
with
$\pi' \in S_{\A'}$ defined by inserting $b$ immediately after $a$ in the cyclic order of $\pi$.
After replacing every $a$ with $ab$, all new $b$'s sit inside the run of $a_{\pi(i)} = a$ and nowhere else, so $\bwt{\A}{w}$ consists of contiguous runs once the run of $b$'s is appended after that of $a$.
A similar procedure applies should we choose to append $b$ at the beginning of $\A$ instead of at the end.
Thus, $\alpha_{a,b}$ is $\pi'$-clustering on both extended alphabets.

\item
We note that $\iota_{\A,\B} (w) = w$.
No new letters appear in $w$, so $\bwt{\A \cup \B}{w} = \bwt{\A}{w}$ with zero occurrences of each $b \in \B$.
We define $\pi' \in S_{\A \cup \B}$ by letting $\pi'$ extend $\pi$ and act as the identity on each $b \in \B$.
This keeps the clustering unaltered, hence $w$ is $\pi'$-clustering over $\A \cup \B$.
\end{enumerate}
\end{proof}

We are now able to prove Theorem~\ref{thm:picluster}, providing an application of this machinery.
With an abuse of notation, let us denote permutations of a sub-alphabet of $\A$ as belonging to $S_\A$.

\begin{proof}[of Theorem \ref{thm:picluster}]
Let $T$ be a $k$-IET on an interval $I$ associated with a permutation $\pi \in S_\A$.
Let $w \in \LL(T)$ and $u$ be a return word to $w \in \LL(T)$.
By Proposition~\ref{pro:inductionChain}, there exists $\chi \in \left( S_\A \cup \{ \rho, \lambda, \sigma \} \right)^*$ such that the transformation induced by $T$ to $I_w$ is $\chi(T)$.
Applying Lemma~\ref{lem:clustering}, we obtain that all return words of $\LL (T)$ are $\pi'$-clustering, with $\pi' \in S_\A$. 
By Remark ~\ref{rem:regToOrdAls}, we apply Lemma ~\ref{lem:clustering1} with "regular" replaced by "ordered alsinic" to find that $\pi'$ and $\pi$ coincide.
\end{proof}

\begin{example}
\label{ex:return}
Let $T = T_1$ be the IET defined in Example~\ref{ex:non-minimal-IET}.
As seen in Example~\ref{ex:coding-non-minimal-IET}, its language is
$
    \LL(T) =
    \LL(({\tt ae})^\omega) \cup \LL(({\tt d})^\omega) \cup \LL(\ff),
$
with
$\ff = {\tt bcbbcb}\cdots$
the Fibonacci word.
It is easy to check that
$\R({\tt a}) = \{ {\tt e} \}$
and
$\R({\tt d}) = \{ {\tt d} \}$,
while
$\R({\tt c}) = \{ {\tt c, cd} \}$.

As shown in Example~\ref{ex:induction}, the transformation induced by $T_1$ on $I_{\tt c}$ is obtained via two right Rauzy steps interspersed with two splitting steps.
For the steps
$$
    \rho,
    \quad
    \sigma_{\{\tt a\}},
    \quad
    \sigma_{\{\tt b\}},
    \quad
    \rho
$$
in the induction, we obtain the respective morphisms
$$
    \alpha_{{\tt a}, {\tt e}},
    \qquad
    \iota_{\{ {\tt b,c,d} \}, \{ {\tt a}\}},
    \qquad
    \iota_{\{ {\tt c,d} \}, \{ {\tt b}\}},
    \qquad
    \tilde{\alpha}_{d,c }.
$$
The alphabet of the original IET induced on $I_{\tt c}$ is $\{ {\tt c, d}\}$.
The corresponding return words to $I_{\tt c}$ in the language of the original IET are obtained via the application of the composition $\chi_{\tt c} = \alpha_{{\tt a}, {\tt e}} \circ \iota_{\{ {\tt b,c,d} \}, \{ {\tt a}\}} \circ \iota_{\{ {\tt c,d} \}, \{ {\tt b}\}} \circ \tilde{\alpha}_{d,c}$ to
the subalphabet $\{ {\tt c}, {\tt d} \}$.
We have
$$
    \chi_{\tt c} :
    \begin{cases}
        {\tt d} \mapsto {\tt cd} \\
        {\tt c} \mapsto {\tt c}
    \end{cases}.
$$
Similarly, the transformations induced by $T$ on $I_{\tt a}$ and on $I_{\tt b}$ are
$\sigma_{\{ {\tt a} \}} \circ \rho$
and
$\sigma_{\{ {\tt b} \}} \circ \sigma_{\{ {\tt a} \}} \circ \rho$ respectively.
Applying the associated morphism compositions $\chi_a$ and $\chi_b$ to the subalphabets $\{ {\tt a} \}$ and $\{ {\tt b \}}$ respectively, we obtain
$$
    \chi_{\tt a}:
    {\tt a} \mapsto {\tt ae}
    \quad \text{and} \quad
    \chi_{\tt b}:
    {\tt b} \mapsto {\tt b}.
$$
Note that not all of these steps were necessary to reach induced IETs on the desired intervals.
For example, one could obtain the induced IET on $I_b$ via a single application of $\sigma_{\{ {\tt b} \}}$.
\end{example}

In~\cite{lapointe2021perfectly}, Lapointe asked the following question: "Are return words of a symmetric interval exchange transformation perfectly clustering words?"

Using Theorem~\ref{thm:picluster} we can provide a positive answer.

\begin{corollary}
\label{cor:solutionToLapointe}
Return words in the language of a symmetric IET are perfectly clustering.
\end{corollary}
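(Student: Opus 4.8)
The plan is to obtain the corollary as an immediate specialization of Theorem~\ref{thm:picluster}. By definition a \emph{symmetric} IET is one whose associated permutation $\pi \in S_\A$ is the symmetric permutation of its alphabet, i.e.\ $\pi(a_i) = a_{k+1-i}$ for $k = \card{\A}$; and, again by definition, a word is \emph{perfectly clustering} exactly when it is $\pi$-clustering for the symmetric permutation. Hence, applying Theorem~\ref{thm:picluster} to a symmetric IET $T$ over $\A$ yields directly that every return word in $\LL(T)$ is $\pi$-clustering for $\A$, i.e.\ perfectly clustering.

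The only point that requires comment is the alphabet with respect to which clustering is measured. Theorem~\ref{thm:picluster} gives $\pi$-clustering over the full alphabet $\A$ of $T$, whereas a given return word $u$ uses only the sub-alphabet $\A_u \subseteq \A$ of letters occurring in it, and ``perfectly clustering'' is a priori a statement about $\A_u$ with its own symmetric permutation. First I would observe that $\bwt{\A}{u} = \bwt{\A_u}{u}$: the order on $\A_u$ is the restriction of that on $\A$ and $u$ contains only letters of $\A_u$, so sorting the conjugates of $u$ lexicographically produces the same matrix in either case. Next, writing $\A_u = \{a_{j_1} < \cdots < a_{j_m}\}$, the letter $c = a_{j_t}$ occupies, in the clustered word $\bwt{\A}{u} = a_{\pi(a_1)}^{k_1} \cdots a_{\pi(a_k)}^{k_k}$, the block of index $i$ with $\pi(a_i) = c$, that is $i = k+1-j_t$; since $j_1 < \cdots < j_m$ these indices decrease with $t$, so the blocks of $a_{j_1}, \dots, a_{j_m}$ (the only nonempty ones) appear inside $\bwt{\A_u}{u}$ in the reversed order $a_{j_m}, \dots, a_{j_1}$, each block of $c$ having length $|u|_c$. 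This is precisely the clustering pattern of the symmetric permutation of $\A_u$, so $u$ is perfectly clustering for $\A_u$ as well.

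I do not expect a genuine obstacle: the corollary is a direct consequence of Theorem~\ref{thm:picluster}, and the sub-alphabet bookkeeping above uses nothing beyond the symmetric permutation being an involution. If the statement is read --- as the surrounding discussion of Lapointe's question suggests --- with respect to the fixed alphabet of $T$, then no bookkeeping is needed and the proof is the single line ``apply Theorem~\ref{thm:picluster} with $\pi$ the symmetric permutation''.
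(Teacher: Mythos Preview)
Your proposal is correct and matches the paper's own proof, which is precisely the one-line application of Theorem~\ref{thm:picluster} with $\pi$ the symmetric permutation. The additional sub-alphabet bookkeeping you include is sound but not needed for the paper's reading of ``perfectly clustering'' (which is relative to the fixed alphabet $\A$ of $T$).
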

\begin{proof}
Let $T$ be an IET with associated permutation the symmetric permutation.
Applying Theorem~\ref{thm:picluster} to $\LL(T)$, we obtain that all its return words are clustering with respect to the symmetric permutation, and thus perfectly clustering.
\end{proof}

\section{Future Work}
\label{sec:future}

We briefly indicate two directions suggested by this work.
A first goal concerns local commutation of the morphisms in extended branching Rauzy induction.
Given the multiple choices of morphisms at each induction step, it would be valuable to identify a small set of local commutations (e.g., splitting on a periodic component, then a left Rauzy step on a minimal component is the same as doing so in the opposite order) that aid in understanding when segments of induction paths commute.
A natural problem is to identify other such local relations in the monoid generated by $\{ \lambda, \rho, \sigma \}$.

A second direction is to explicitly extend the construction of Gjerde and Johansen in~\cite{gjerde2002bratteli} beyond the minimal case.
In particular, the authors of this work constructed a partition in towers corresponding to the Cantor version of any minimal standard interval exchange transformation.
The incidence matrices of the corresponding Bratteli diagram were products of morphism matrices corresponding to individual right Rauzy steps (see also~\cite{durand2022dimension} for a review of the results of~\cite{gjerde2002bratteli} from this perspective).
However, traditional right Rauzy induction is not applicable to general standard IETs.
Each merging and splitting step induces a matrix, and thus extended branching Rauzy induction could enable the explicit construction of Bratteli diagrams (and the corresponding Bratteli-Vershik models) from the Cantor system with multiple minimal components arising from an IET.
This would also provide a convenient method of recovering the associated dimension groups.

\section*{Acknowledgments}

This work was partially supported by the Výlet summer program of the Faculty of Information Technology, Czech Technical University in Prague.

\addcontentsline{toc}{section}{References}
\bibliographystyle{plain}
\bibliography{references}

\end{document}